\documentclass[12pt]{article}

\usepackage[colorlinks, linkcolor=blue, citecolor=blue]{hyperref}           
\usepackage{color,mathrsfs}
\usepackage{graphicx,subfigure,amsmath,amssymb,amsfonts,bm,epsfig,epsf,url,dsfont}
\usepackage{amsthm}
\usepackage{tikz}
\usepackage{bbm}      
\usepackage{booktabs}
\usepackage{cases}
\usepackage{fullpage}
\usepackage[small,bf]{caption}
\usepackage[top=1in,bottom=1in,left=1in,right=1in]{geometry}
\usepackage{fancybox}
\usepackage{algorithm}
\usepackage{verbatim}
\usepackage{algorithmic}
\usepackage{cancel}
\usepackage{multirow}

\usepackage{mathtools}

\usepackage{titlesec}
\titlespacing*{\paragraph}
{0pt}      % left margin
{1ex}    % space before
{1em}      % space after (before text)

\usepackage{scalerel,stackengine}
\stackMath
\newcommand\reallywidehat[1]{%
\savestack{\tmpbox}{\stretchto{%
  \scaleto{%
    \scalerel*[\widthof{\ensuremath{#1}}]{\kern-.6pt\bigwedge\kern-.6pt}%
    {\rule[-\textheight/2]{1ex}{\textheight}}%WIDTH-LIMITED BIG WEDGE
  }{\textheight}% 
}{0.5ex}}%
\stackon[1pt]{#1}{\tmpbox}%
}

\newcommand{\bE}{\mathbb{E}}

\newcommand{\G}{\mathbb{G}}

\newcommand{\TV}{\mathsf{TV}}

\newtheorem{definition}{Definition}
\newtheorem{observation}{Observation}

\newtheorem{remark}{remark}
\newtheorem{theorem}{Theorem}

\newtheorem{conjecture}{Conjecture}
\newtheorem{lemma}{Lemma}
\newtheorem{prop}{Proposition}

\newtheorem{rmk}{Remark}

\newenvironment{fminipage}%
  {\begin{Sbox}\begin{minipage}}%
  {\end{minipage}\end{Sbox}\fbox{\TheSbox}}

\makeatletter
\newcommand*{\rom}[1]{\expandafter\@slowromancap\romannumeral #1@}
\makeatother

\newcommand{\Ind}{\mathds{1}}

\newcommand{\abs}[1]{\left|#1\right|}
\newcommand{\R}{\mathbb{R}}

\newcommand{\Q}{\mathbb{Q}}

\newcommand{\E}{\mathbb{E}}

\def\P{{\mathbb P}}
\def\S{{\mathbb S}}

\newcommand {\bs} {\mbox{\boldmath $s$}}

\newcommand {\pr} {\mathbb{P}}

\newcommand{\calA}{{\cal A}}

\newcommand{\calU}{{\cal U}}
\newcommand{\calD}{{\cal D}}

\newcommand{\calE}{{\cal E}}
\newcommand{\calF}{{\cal F}}
\newcommand{\calG}{{\cal G}}
\newcommand{\calH}{{\cal H}}

\newcommand{\calK}{{\cal K}}
\newcommand{\calL}{{\cal L}}

\newcommand{\calN}{{\cal N}}

\newcommand{\calP}{{\cal P}}
\newcommand{\calQ}{{\cal Q}}

\newcommand{\calS}{{\cal S}}
\newcommand{\calT}{{\cal T}}
\newcommand{\calV}{{\cal V}}

\newcommand{\calX}{{\cal X}}

\newcommand{\Tr}{{\mathsf{Tr}}}

\newcommand{\be}{\begin{equation}}
\newcommand{\ee}{\end{equation}}
\newcommand{\beqna}{\begin{eqnarray}}
\newcommand{\eeqna}{\end{eqnarray}}

\newcommand\independent{\protect\mathpalette{\protect\independenT}{\perp}}
\def\independenT#1#2{\mathrel{\rlap{$#1#2$}\mkern2mu{#1#2}}}
\renewcommand{\bs}[1]{\mathbf{#1}}

\makeatletter
\newcommand{\vast}{\bBigg@{4}}
\newcommand{\Vast}{\bBigg@{5}}
\makeatother

\newcommand{\p}[1]{\left(#1\right)}
\newcommand{\pp}[1]{\left[#1\right]}
\newcommand{\ppp}[1]{\left\{#1\right\}}
\newcommand{\norm}[1]{\left\|#1\right\|}
\newcommand{\innerP}[1]{\left\langle#1\right\rangle}

\usepackage{xspace}

\newcommand{\s}[1]{\mathsf{#1}}

\renewcommand*{\bs}[1]{\mathbf{#1}}

\usepackage{tikz}
\usetikzlibrary{shapes,arrows}
\usepackage{verbatim}

\tikzstyle{block} = [draw, fill=blue!20, rectangle, 
    minimum height=3em, minimum width=6em]
\tikzstyle{sum} = [draw, fill=blue!20, circle, node distance=1cm]
\tikzstyle{input} = [coordinate]
\tikzstyle{output} = [coordinate]
\tikzstyle{pinstyle} = [pin edge={to-,thin,black}]

\makeatletter
\def\thanks#1{\protected@xdef\@thanks{\@thanks
        \protect\footnotetext{#1}}}
\makeatother

\allowdisplaybreaks
\begin{document}

%\title{Phase Transitions in the Detection of a Hidden Geometry in Random Graphs}
\title{Testing for a Hidden Geometry in Random Graphs}
%\title{Testing for a Hidden High-Dimensional Geometry in Random Graphs}

\author{Amit~Silber~~~~~~~~~Mor~Oren-Loberman~~~~~~~~~Wasim Huleihel\thanks{A. Silber, M. Oren-Loberman , and W. Huleihel are with the School of Electrical \& Computer Engineering at Tel Aviv University, {T}el {A}viv 6997801, Israel (e-mails:  \texttt{amitsilber@mail.tau.ac.il, orenmor@mail.tau.ac.il, wasimh@tauex.tau.ac.il}). This work is supported by the ISRAEL SCIENCE FOUNDATION (grant No. 1734/21).}}

\date{\today}

\maketitle

\begin{abstract}

%In this paper, we study the problem of detecting a small hidden geometric structure in a random graph. This is formulated as a hypothesis testing problem, where under the null hypothesis, the observed graph is a realization of an Erd\H{o}s--R\'{e}nyi random graph $\calG(n,q)$, with edge density $q\in(0,1)$; under the alternative a high-dimensional random geometric graph $\calG(k,p,d)$ on $k\leq n$ vertices is planted in $\calG(n,q)$, where each of the $k$ vertices corresponds to an independent random point distributed uniformly on the sphere $\mathbb{S}^{d-1}$, and two vertices are connected if the corresponding latent vectors are close enough, such that the probability of an edge is $p$. For this problem, we investigate the fundamental limits from both the statistical and computational perspectives. We establish a lower bound for the detection of the latent geometry, expressed in terms of the above parameters. Furthermore, we propose and analyze several algorithms that achieve this lower bound. We also explore the problem of testing in polynomial-time. Similarly to many structured high-dimensional problems, our model exhibits an ``easy-hard-impossible" phase transition, where computational limitations can significantly impact statistical performance. As evidence of this phenomenon, we demonstrate that the class of low-degree polynomial algorithms fail in our conjecturally hard regime.

In this work, we investigate the fundamental problem of detecting a faint geometric signal hidden within an otherwise random graph. We formulate this task as a hypothesis testing problem: under the null hypothesis, the observed graph is an Erd\H{o}s--R\'enyi random graph $\calG(n,q)$ with edge density $q\in(0,1)$; under the alternative, a high-dimensional geometric structure is clandestinely embedded. Specifically, a random geometric graph $\calG(k,q,d)$ on $k\le n$ vertices is planted inside $\calG(n,q)$, where each of the $k$ vertices corresponds to an independent random point drawn uniformly from the unit sphere $\mathbb{S}^{d-1}$, and edges are formed according to latent proximity, resulting in the same edge probability $q$.

Our objective is to characterize the limits of detectability of this hidden geometry, from both statistical and computational perspectives. We derive sharp information-theoretic lower bounds that characterize the regimes in which detection is fundamentally impossible, expressed explicitly in terms of the problem parameters. Complementing these impossibility results, we propose and analyze several algorithms that provably attain these limits whenever detection is feasible.

We also explore the algorithmic landscape of the problem and investigate which regimes admit efficient, polynomial-time testing procedures. As in many other structured high-dimensional inference problems, our model exhibits a pronounced \emph{easy--hard--impossible} phase transition: there exist regimes in which detection is statistically possible yet computationally prohibitive, as well as regimes in which detection is impossible even with unbounded computational resources. As concrete evidence of this computational barrier, we show that the entire class of low-degree polynomial algorithms fails in the conjecturally hard regime, highlighting a sharp separation between statistical possibility and algorithmic feasibility.

\end{abstract}

\newpage

\tableofcontents

\section{Introduction}\label{sec:intro}

Networks with latent structure arise throughout modern data science, from social and biological systems to communication and information networks. A large body of work models such data using random graphs endowed with an underlying geometric or feature-based structure, where vertices correspond to latent points in a metric space and edges form preferentially between nearby points. Canonical examples include random geometric graphs, random dot product graphs, and more general latent space models; see, for example, the monograph \cite{Mathew07} and the references therein. These models provide a principled way to capture dependencies and correlations that are absent from edge-independent baselines such as the Erd\H{o}s--R\'enyi random graph.

In many applications, however, the latent variables themselves are unobserved, and only the graph structure is available. This leads to a fundamental statistical question: given a single observed graph, can one determine whether its edges encode an underlying geometric structure, or whether they are effectively indistinguishable from those of a purely random graph? From a probabilistic perspective, this question has been studied in the high-dimensional setting, where geometry may be lost as the ambient dimension grows. Early work showed that classical random geometric graphs become asymptotically indistinguishable from Erd\H{o}s--R\'enyi graphs as the dimension tends to infinity~\cite{devroye2011high}. Subsequent results identified a sharp phase transition in the dense regime, revealing that high-dimensional geometric structure disappears precisely when the dimension exceeds a cubic threshold in the number of vertices~\cite{bubeck2016testing}. More recent extensions have explored how noise and softened geometric dependence affect this transition; see, for instance,~\cite{liu2023phase,liu2023probabilistic}.

While these results address the detectability of \emph{global} geometric structure, many problems of interest involve \emph{localized} signals that are present only on a small subset of vertices. Detecting such faint, structured signals hidden inside high-dimensional noise is a recurring theme across modern statistics, probability, and theoretical computer science. In network data, this theme manifests through \emph{planted subgraph} models, where one observes a random graph drawn under one of two hypotheses: either a featureless null model, or an alternative in which a small subset of vertices carries additional structure. These models have played a central role in clarifying the statistical limits of detection and in revealing striking computational--statistical gaps, where information-theoretic possibility does not coincide with what is achievable by efficient algorithms; see, for example, the dense-subgraph and community-detection literature~\cite{arias2014community,verzelen2015community}, as well as recent complexity frameworks, see, e.g., \cite{brennan18a,Hopkins18,brennan20a}.

\subsection{A planted geometric structure} 
In this paper, we study the problem of detecting a small hidden \emph{geometric} structure embedded in a random graph. We formulate this task as a hypothesis testing problem. Under the null hypothesis, the observed graph is an Erd\H{o}s--R\'enyi random graph $\calG(n,q)$ with edge density $q\in(0,1)$. Under the alternative, a high-dimensional random geometric graph $\calG(k,q,d)$ on $k\le n$ vertices is planted inside $\calG(n,q)$. Each of the $k$ planted vertices corresponds to an independent random point drawn uniformly from the unit sphere $\mathbb{S}^{d-1}$, and edges within the planted subgraph arise from latent proximity, resulting in \emph{the same}\footnote{Allowing a distinct edge probability inside the planted set introduces an explicit density contrast, in which case detection is governed primarily by classical planted dense subgraph considerations and the ambient dimension $d$ plays no essential role in the dense setting. We briefly discuss this regime later for comparison.} marginal edge probability $q$. All remaining edges behave as in the null model.

This formulation captures a setting in which geometry is both \emph{localized}—only a small subset of vertices participates in the geometric structure—and \emph{high-dimensional}. The latter aspect is particularly important: as the ambient dimension grows, geometric information becomes increasingly diffuse, and it is known that high-dimensional geometric graphs may become indistinguishable from Erd\H{o}s--R\'enyi graphs. Understanding where this loss of geometry occurs, and how it interacts with the size of the planted
subgraph, is central to our investigation.

From a modeling perspective, the proposed alternative can be viewed as a geometry-enriched analogue of classical planted dense-subgraph or community models. Rather than postulating an \emph{ad hoc} increase in edge probability on a hidden vertex set, the planted structure here is generated by a latent geometric mechanism that induces correlations among edges. In particular, the planted subgraph is not merely ``denser'' than the background—it carries a coherent geometric footprint that couples many local statistics. Specifically, real-world networks often exhibit hidden organization, such as communities or coordinated groups, that must be recovered from noisy data. Classical models typically assume that such structure is revealed through simple signals like increased edge density or rigid patterns. While mathematically convenient, these assumptions often do not reflect how interactions arise in modern settings.

In many applications, connectivity is driven instead by latent features: each node is associated with a high-dimensional representation, and edges form according to relationships in this hidden space. As a result, the defining property of a subset is not higher connectivity, but consistent interaction patterns induced by an underlying geometry. This perspective is especially relevant when nodes are indistinguishable at the level of basic statistics, such as coordinated accounts blending in with ordinary users or firms engaged in collusion. In these cases, the subgraph of interest matches the background in marginal edge probabilities, and can only be identified through higher-order dependencies shaped by the latent structure. We focus on this regime by removing any density-based signal, so that detection must rely entirely on geometric correlations. \emph{To our knowledge, our work is the first to provide a sharp detection theory for a planted structure whose defining feature is latent geometry.}

Beyond statistical detectability, our study is also motivated by computational considerations. Many planted-structure problems in random graphs exhibit pronounced gaps between what is information-theoretically possible and what can be achieved by efficient algorithms. A key objective of this work is to determine whether such computational--statistical gaps arise in the presence of latent geometry, and to characterize the resulting phase transitions between easy, hard, and impossible regimes.

\subsection{Main contributions} We investigate the fundamental limits of detecting latent geometric structure from both statistical and computational perspectives. Our first contribution is an information-theoretic characterization of detectability in terms of the parameters $(n,k,d)$. A central technical challenge in this setting is that naive second-moment methods for the likelihood ratio can be dominated by rare tail events and therefore fail to capture the true statistical threshold. To address this issue, we develop a truncated second-moment analysis, inspired by classical ideas in minimax hypothesis testing and truncation techniques in high-dimensional statistics~\cite{Ingster1997HypothesisTesting,verzelen2015community}. This approach yields a finite second-moment regime and allows us to identify sharp conditions under which detection is information-theoretically impossible.

We complement these lower bounds by proposing and analyzing three testing procedures that achieve the statistical threshold (up to constants and logarithmic factors). The first is a \emph{vanilla signed-triangle} test, based on counting signed triangles in the observed graph, extending ideas originally developed for detecting global high-dimensional geometry~\cite{bubeck2016testing}. The second is a \emph{scan signed-triangle} test, which counts signed triangles over every subset of $\binom{[n]}{k}$ vertices, corresponding to all candidate planted sets. The third is a \emph{geometry-agnostic scan test} that ignores the latent geometric structure and instead attempts to localize the planted subset purely through density-type evidence, connecting our problem to classical planted dense-subgraph and community-detection models~\cite{arias2014community}. Together, these tests demonstrate that latent geometry can be detected optimally using both geometry-aware and geometry-agnostic procedures, depending on the regime.

Our main results are summarized in the phase diagram shown in Figure~\ref{fig:spcaphasediagram2}. Specifically, we consider an asymptotic regime in which both $k$ and $d$ scale polynomially with $n$, namely
$d=\Theta(n^{\alpha})$ and $k=\Theta(n^{\beta})$ for some $\alpha>0$ and $\beta\in(0,1)$. Here, the exponent $\alpha$ captures the growth of the ambient dimension, while $\beta$ governs the size of the planted geometric structure. It is evident that the detection problem becomes statistically more challenging as $\alpha$ increases or $\beta$ decreases. In this regime, the $(\alpha,\beta)$ parameter space is partitioned into three distinct regions:
\begin{enumerate}
    \item \emph{Statistically impossible regime (gray):} detection is information-theoretically impossible when $\alpha > 2\beta \wedge 3$.    
    \item \emph{Computationally easy regime (blue):} there exists a polynomial-time algorithm for detection when $\alpha < 6\beta - 3$.    
    \item \emph{Computationally hard regime (red):} detection is information-theoretically possible when $\alpha < 2\beta$ and $\alpha > 0 \vee (6\beta-3)$, but computationally intractable in the sense
    that no polynomial-time algorithm is known; moreover, the class of low-degree polynomial tests fails throughout this region.
\end{enumerate}

\begin{figure}[t!]
\centering
%\resizebox{6cm}{7cm}{
\begin{tikzpicture}[scale=1]
\tikzstyle{every node}=[font=\footnotesize]
\def\xmin{0}
\def\xmax{4.5}
\def\ymin{0}
\def\ymax{7.5}

\draw[->] (\xmin,\ymin) -- (\xmax,\ymin) node[right] {$\beta$};
\draw[->] (\xmin,\ymin) -- (\xmin,\ymax) node[above] {$\alpha$};

\node at (4, 0) [below] {$1$};
\node at (2, 0) [below] {$\frac{1}{2}$};
\node at (3, 0) [below] {$\frac{3}{4}$};
\node at (0, 2) [left] {$1$};
\node at (0, 3) [left] {$\frac{3}{2}$};
\node at (0, 4) [left] {$2$};
\node at (0, 6) [left] {$3$};
\node at (0, 0) [left] {$0$};

\filldraw[fill=gray!25, draw=black] (0, 0) -- (3, 3) -- (4,6) -- (4, 7) -- (0,7);
\filldraw[fill=blue!25, draw=black] (2, 0) -- (4, 0) -- (4, 6) -- (2, 0);
\filldraw[fill=red!25, draw=black] (0, 0) -- (2, 0) -- (3, 3) -- (0, 0);
%\filldraw[fill=green!25, draw=black] (3, 0) -- (0, 4) -- (3, 4) -- (3, 0);
%\filldraw[fill=blue!25, draw=black] (3, 2) -- (3, 4) -- (0, 4) -- (1.5, 2) -- (3, 2);
\node[rotate=55] at (1.5, 3.8) {``$\s{\bs{Statistically}}\;\s{\bs{Impossible}}$"};
\node[rotate=80] at (3.4, 2.2) {``$\bs{Computationally}\;\s{\bs{Easy}}$"};
%\node[rotate=-60] at (1.57, 2.5) {``$\s{\bs{Computationally}}\;\s{\bs{Hard}}$"};
\node[rotate=40] at (1.4, 0.8) {``$\s{\bs{Hard}}$"};
\draw [dashed] (0,6) -- (4,6);
\draw [dashed] (0,3) -- (3,3);
\draw [dashed] (3,0) -- (3,3);
%\draw [dashed] (0,2) -- (1.5, 2);
\end{tikzpicture}%}
\caption{Phase diagram for detecting the presence of a planted random geometry subgraph, as a function of the geometry size $k = \Theta(n^{\beta})$ and the geometry dimension $d=\Theta(n^{\alpha})$.}
\label{fig:spcaphasediagram2}
\end{figure}

A key conceptual and technical challenge arises in the analysis of the scan signed-triangle test. Since this procedure ranges over an exponential family of candidate vertex sets, its analysis requires exponential tail bounds for the signed-triangle statistic, rather than the first- or second-moment estimates
that suffice for fixed tests. Indeed, moment-based methods constitute the main analytical tool in essentially all prior works on testing high-dimensional geometry in random graphs, including the signed-triangle analysis of Bubeck, Ding, Eldan, and R\'acz~\cite{bubeck2016testing}, subsequent extensions to noisy and softened geometric models~\cite{liu2023phase,liu2023probabilistic}, as well as related random-matrix approaches to geometric phase transitions such as anisotropic random geometric graphs and Wishart-type ensembles \cite{eldan_mikulincer_2020,brennan2021de}.  

While sharp upper-tail results for \emph{unsigned} triangle counts in Erd\H{o}s--R\'enyi graphs are by now classical—originating with the martingale-based approach of Kim and Vu~\cite{kim2004divide} and refined through subsequent combinatorial and variational methods \cite{janson_oleszkiewicz_rucinski_2004,BoucheronLugosiMassart2013,ganguly_hiesmayr_nam_2024}—these techniques do not extend to the signed setting, where substantial cancellations fundamentally alter the tail behavior. To overcome this obstacle, we develop new tools to characterize the relevant exponential rate, relying on strong decoupling inequalities for U-statistics \cite{de_la_pena_montgomery_smith_1995,delapena1999decoupling}.

Finally, we examine the computational landscape of the problem and provide evidence for an \emph{easy--hard--impossible} phase transition. In particular, we show that low-degree polynomial algorithms fail in a parameter regime that is conjecturally hard, leveraging the low-degree framework for average-case hardness in high-dimensional inference~\cite{Hopkins18,Dmitriy19}. This places the detection of planted geometric structure squarely within a broader class of inference problems exhibiting sharp separations between statistical possibility and computational feasibility.

\subsection{Related work} The study of random geometric graphs in high dimensions traces back to early work showing, via a multivariate central limit theorem, that geometric graphs become asymptotically indistinguishable from Erd\H{o}s--R\'enyi graphs as the dimension grows~\cite{devroye2011high}. This phenomenon was later sharpened by work that identified a precise phase transition in the dense regime, demonstrating that geometry is lost when the dimension exceeds a cubic threshold in the number of vertices~\cite{bubeck2016testing}. These results also uncovered deep connections between geometric graph models and classical random matrix ensembles, showing that a matching Wishart-to-GOE transition occurs at the same scale (see also~\cite{eldan_mikulincer_2020,brennan2024threshold}).

Extensions of these results have explored broader distributional assumptions and anisotropic settings, including information-theoretic limits for detecting geometry when the latent distribution is anisotropic~\cite{eldan_mikulincer_2020}. In the sparse regime, where the edge probability vanishes with the graph size, it was conjectured that geometry should be lost at significantly lower dimensions~\cite{bubeck2016testing}. Progress toward this conjecture, breaking the cubic barrier, was subsequently obtained~\cite{brennan2020phase}, with further refinements appearing in recent work on partial and masked Wishart ensembles~\cite{brennan2021de}. This remains an active and rapidly developing area. More recently, a unified framework for noisy high-dimensional geometric graphs that interpolate between purely geometric and Erd\H{o}s--R\'enyi models was introduced in~\cite{liu2023phase,liu2023probabilistic}. These results quantify how the strength of geometric dependence interacts with dimensionality to determine detectability. Our work complements this line by focusing on a \emph{localized} setting, where geometric structure is present only on a planted subset of vertices.

Beyond hard-threshold geometric graphs, a substantial literature studies \emph{soft} random geometric graphs, in which the probability of an edge depends smoothly on latent distances or inner products. Such models arise naturally in wireless communication, social networks, and biological systems. A foundational probabilistic treatment of geometric and soft geometric graphs, including connectivity properties in fixed dimensions, was developed in~\cite{Mathew07}. Subsequent works have analyzed connectivity and phase transitions from both probabilistic and statistical physics perspectives; see, for example, \cite{dettmann2016random} and related studies of one-dimensional and perturbed geometric networks~\cite{parthasarathy2017quest,wilsher2020connectivity}.

Detecting hidden structure in random graphs has also been extensively studied through planted subgraph models, including planted dense subgraphs and community detection. Sharp detection thresholds and computational barriers in dense regimes were characterized in~\cite{arias2014community,verzelen2015community}. More general formulations of planted subgraph detection, including arbitrary planted patterns, were studied in~\cite{huleihel2021inferring,elimelech2025detecting}, among others. These works typically posit an explicit increase in edge probability on a hidden vertex set, with edges remaining conditionally independent, a modeling choice that stands in contrast to the geometric mechanism considered here. In contrast, the alternative considered in the present paper is generated by a latent geometric mechanism, which induces structured dependencies among edges. This places our work at the intersection of planted subgraph detection and geometric random graph theory and, to our knowledge, provides the first sharp detection theory for a planted structure whose defining feature is high-dimensional geometry.

Concurrent and independent work by \cite{BokLiYu2026} studies the same geometric detection problem as in our paper and derives similar information-theoretic and computational limits. They also analyze the sparse regime where the edge density vanishes polynomially with the size of the graph, establishing corresponding statistical lower and upper bounds.

Finally, we mention \cite{Bet2020Detecting}. This paper studies the problem of detecting a botnet hidden within a larger network. The authors model the network as a graph and assume that a botnet appears as a group of nodes with slightly different connectivity patterns compared to normal users. More specifically, the hypothesis testing problem considered in their paper, is formulated with a geometric random graph as the null model, while under the alternative a ``small" Erdős–Rényi subgraph is planted within the geometric random graph. In contrast, our setting reverses this perspective: the null hypothesis is an Erdős–Rényi graph, and under the alternative we plant a geometric random graph. As a result, the underlying models, as well as the corresponding results and techniques, differ substantially.

\subsection{Notation} Given a probability distribution $\mathbb{P}$, we write $\mathbb{P}^{\otimes n}$ for the law of the $n$-dimensional random vector $(X_1,X_2,\dots,X_n)$, where the coordinates are independent and identically distributed according to $\mathbb{P}$. Likewise, $\mathbb{P}^{\otimes m \times n}$ denotes the distribution on $\mathbb{R}^{m\times n}$ whose entries are i.i.d. with common distribution $\mathbb{P}$. For a finite or measurable set $\mathcal{X}$, $\s{Unif}[\mathcal{X}]$ denotes the uniform distribution on $\mathcal{X}$. The notation $X\independent Y$ indicates that the random variables $X$ and $Y$ are independent.

For two $n\times n$ matrices $\bs{A}$ and $\s{B}$, we denote their Hadamard product by $\bs{A}\odot\s{B}$, and their matrix inner product by $\innerP{\bs{A},\s{B}}=\s{trace}(\bs{A}^T\s{B})$. For any real number $x\in\mathbb{R}$, we define $[x]_+=\max(x,0)$. The spectral (operator) norm of a symmetric matrix $\bs{A}$ is written as $\norm{\bs{A}}_{\s{op}}$, and $\bs{I}_n$ denotes the $n\times n$ identity matrix. For an $n\times n$ matrix $\bs{A}$ and a set $S\subseteq\{1,2,\ldots,n\}$, we use $\bs{A}|_S$ and $\bs{A}[S]$ to denote the submatrix obtained by restricting $\bs{A}$ to the rows and columns indexed by $S$.

We write $\calN(\mu,\sigma^2)$ for a univariate normal random variable with mean $\mu\in\mathbb{R}$ and variance $\sigma^2\in\mathbb{R}_{\ge0}$, and $\calN(\mu,\Sigma)$ for a multivariate normal random vector with mean $\mu\in\mathbb{R}^d$ and covariance matrix $\Sigma$, where $\Sigma$ is a $d\times d$ positive semidefinite matrix. Let $\Phi$ denote the cumulative distribution function of a standard normal random variable, given by $\Phi(x)=\int_{-\infty}^x e^{-t^2/2}\mathrm{d}t$. For probability measures $\mathbb{P}$ and $\mathbb{Q}$, we define the total variation distance, $\chi^2$-divergence, and Kullback--Leibler (KL) divergence, respectively, by $d_{\s{TV}}(\mathbb{P},\mathbb{Q})=\frac{1}{2}\int |\mathrm{d}\mathbb{P}-\mathrm{d}\mathbb{Q}|$, $\chi^2(\mathbb{P}||\mathbb{Q})=\int\frac{(\mathrm{d}\mathbb{P}-\mathrm{d}\mathbb{Q})^2}{\mathrm{d}\mathbb{Q}}$, $d_{\s{KL}}(\mathbb{P}||\mathbb{Q})=\bE_{\mathbb{P}}\pp{\log\frac{\mathrm{d}\mathbb{P}}{\mathrm{d}\mathbb{Q}}}$. We denote by $\s{Bern}(p)$ and $\s{Binomial}(n,p)$ the Bernoulli and Binomial distributions with parameters $p$ and $(n,p)$, respectively, and by $\s{Hypergeometric}(n,k,m)$ the Hypergeometric distribution with parameters $(n,k,m)$.

Throughout, we use standard asymptotic notation. For two positive sequences $\{a_n\}$ and $\{b_n\}$, we write $a_n=O(b_n)$ if there exists a constant $C$ such that $a_n\le C b_n$ for all $n$, and $a_n=\Omega(b_n)$ if $b_n=O(a_n)$. We write $a_n=\Theta(b_n)$ if both $a_n=O(b_n)$ and $a_n=\Omega(b_n)$ hold. Moreover, $a_n=o(b_n)$ (equivalently, $b_n=\omega(a_n)$) if $a_n/b_n\to0$ as $n\to\infty$. We write $a_n\ll b_n$ to indicate that $a_n$ is polynomially smaller than $b_n$ in $n$, that is, $\liminf_{n\to\infty}\log_n a_n<\liminf_{n\to\infty}\log_n b_n$. For $n\in\mathbb{N}$, we let $[n]=\{1,2,\dots,n\}$. For real numbers $a$ and $b$, we define $a\vee b=\max\{a,b\}$ and $a\wedge b=\min\{a,b\}$. Unless otherwise specified, $C$ denotes a generic constant independent of the problem parameters and may vary from line to line. For integers $n$ and $m$ we denote by $(n)_m$ the falling factorial, i.e., $(n)_m=n(n-1)(n-2)\cdots(n-m+1)$. Finally, for a set $S\subseteq\mathbb{R}$, $\mathds{1}\{S\}$ denotes its indicator function; we also write $\mathds{1}_S(x)$ with the same meaning, namely, $\mathds{1}_S(x)=1$ if $x\in S$ and $0$ otherwise.

\section{Setup and Problem Statement}\label{sec:setup}

We study the problem of detecting the presence of a small latent high-dimensional geometric subgraph in a random graph. Let $\calG(n,q)$ denote the Erd\H{o}s--R\'enyi random graph on $n$ vertices, in which each pair of vertices is connected independently with probability $q$. Under the null hypothesis $\calH_0$, the observed graph $\s{G}_n$ is drawn from $\calG(n,q)$.

To define the alternative hypothesis, we first recall the classical model of a random geometric graph. In this model, each vertex is associated with a point in a metric space, and an edge is present between two vertices if the distance between their corresponding labels is below a prescribed threshold. We focus on 
the case where the underlying metric space is the Euclidean sphere $\S^{d-1}$, and the latent labels are i.i.d. random vectors drawn uniformly from $\S^{d-1}$. We denote by $\calG_d(n,q)$ the ensemble of such random graphs, where $q$ is the marginal probability of an edge between any pair of vertices, and thus determines the threshold distance for connection.

Formally, $\calG_d(n,q)$ is defined as follows. Let $\bs{x}_1,\ldots,\bs{x}_n$ be independent random vectors, each uniformly distributed on $\S^{d-1}$. In $\calG_d(n,q)$, distinct vertices $i$ and $j$ are connected by an edge if and only if $\left\langle \bs{x}_i,\bs{x}_j \right\rangle\geq t_{q,d}$, where $|t_{q,d}|\leq1$ is chosen so that $\P\p{\left\langle \bs{x}_i,\bs{x}_j \right\rangle\geq t_{q,d}}=q$. Equivalently, this condition can be written as $\norm{\bs{x}_i-\bs{x}_j}_2^2\leq \tau_{q,d}$, where $\tau_{q,d}\triangleq2-2t_{q,d}\in[0,2]$. For notational convenience, we define $\sigma_{ij}\triangleq\mathds{1}\{\left\langle \bs{x}_i,\bs{x}_j \right\rangle\geq t_{q,d}\}$, for all $i,j\in[n]$. Under the alternative hypothesis $\calH_1$, the observed graph $\s{G}_n$ on $n$
vertices is generated as follows:
\begin{enumerate}
    \item A subset $\calK$ of $k$ vertices is selected uniformly at random from     the $n$ vertices. Each vertex $i\in\calK$ is associated with a latent label $\bs{x}_i\sim\s{Unif}(\mathbb{S}^{d-1})$.
    \item For any $i,j\in\calK$, the vertices are connected by an edge if and only if $\langle \bs{x}_i,\bs{x}_j \rangle \ge t_{q,d}$, where $t_{q,d}\in[-1,1]$ is chosen so that $\P\left(\langle \bs{x}_i,\bs{x}_j \rangle \ge t_{q,d}\right)=q$.
    \item All remaining edges, namely those with at least one endpoint outside $\calK$, are added independently with probability $q$.
\end{enumerate}
We denote by $\calG_d(n,k,q)$ the ensemble of random graphs generated by the above procedure. That is, $\calG_d(n,k,q)$ consists of graphs on $n$ vertices in which a geometric random subgraph $\calG_d(k,p)$ is \emph{planted} inside an Erd\H{o}s--R\'enyi random graph $\calG(n,q)$. The vertices in the set $\calK$ thus form a \emph{geometric community} whose internal connectivity is higher than that of the background graph. In this paper, we focus on the regime in which both edge probability $q$ is a fixed constant independent of $n$. Our goal is to address the following \emph{planted random geometry} ($\s{PRG}$) \emph{detection problem}.
\begin{definition}[$\s{PRG}$ detection problem]\label{def:PRG}
The $\s{PRG}$ detection problem with parameters $(n,k,d,q)$, hereafter denoted by $\s{PRG}(n,k,d,q)$, refers to the problem of distinguishing between the following two hypotheses:
\begin{align}
    \calH_0: \s{G}_n\sim\calG(n,q)\quad\quad\s{vs.}\quad\quad
    \calH_1: \s{G}_n\sim\calG_d(n,k,q).\label{eqn:model}
\end{align}
\end{definition}
\begin{rmk} We restrict attention to the setting in which the null and alternative hypotheses have identical edge marginals, so that the distinction between the two models is carried entirely by latent geometric dependencies. This restriction is deliberate. In the dense regime where edge probabilities
are fixed constants, introducing a different edge probability on the planted vertex set creates an explicit density contrast. In that case, detection reduces to classical planted dense subgraph testing, and the ambient dimension does not govern the detection threshold. By contrast, when the edge marginals match, geometry becomes the sole source of signal, and the interaction between the subgraph size and the ambient dimension is fundamental. We note that when edge probabilities are allowed to vary with $n$, density contrast and geometry can interact nontrivially, even if they differ, but analyzing such sparse regimes is beyond the scope of this work.
\end{rmk}
Upon observing $\s{G}_n$, a detection algorithm $\calA_n(\s{G}_n)\in\{0,1\}$ for the above problem outputs a decision in $\{0,1\}$. We define the \emph{risk} of a detection algorithm $\calA_n$ as the sum of its $\s{Type}$-$\s{I}$ and $\s{Type}$-$\s{II}$ error probabilities, namely,
\begin{align}
\s{R}_n(\calA_n) = \pr_{\calH_0}\bigl(\calA_n(\s{G}_n)=1\bigr) + \pr_{\calH_1}\bigl(\calA_n(\s{G}_n)=0\bigr),
\end{align}
where $\pr_{\calH_0}$ and $\pr_{\calH_1}$ denote the probability distributions under the null and alternative hypotheses, respectively. The optimal risk is defined as $\s{R}_n^\star\triangleq\inf_{\calA_n}\s{R}_n (\calA_n)$, where the infimum is taken over all (possibly randomized) tests $\calA_n\colon \s{G}_n \mapsto \{0,1\}$. A sequence of tests $\calA_n(\s{G}_n)\in\{0,1\}$ is said to achieve \emph{strong detection} if $\limsup_{n\to\infty}\s{R}_n (\calA_n)=0$, and \emph{weak detection} if $\limsup_{n\to\infty}\s{R}_n (\calA_n)<1$. Conversely, we say that \emph{strong detection is impossible} if $\liminf_{n\to\infty}\s{R}_n^\star>0$, and that \emph{weak detection is impossible} if $\lim_{n\to\infty}\s{R}_n^\star=1$.

The algorithms we consider are either unconstrained—allowing for computationally expensive procedures—or restricted to run in polynomial time, corresponding to computationally efficient algorithms. Unconstrained algorithms are typically used to establish information-theoretic limits and to show the tightness of lower bounds. An algorithm is said to be polynomial-time if its running time is bounded by $\s{poly}(n)$, where $n$ denotes the size of the input. As discussed in the introduction, our goal is to characterize necessary and sufficient conditions under which detecting the planted geometric subgraph is possible or impossible, both in the absence of computational constraints and under polynomial-time restrictions.

\section{Main Results}\label{sec:main}

In this section, we present our main results. We begin with information-theoretic lower bounds establishing regimes in which detecting the planted geometric subgraph is impossible, regardless of computational constraints. We then present algorithmic upper bounds demonstrating that these limits are achievable, and conclude with computational lower bounds identifying regimes where detection is statistically possible but computationally hard. 

\paragraph{Statistical lower bounds.} We start with information-theoretic lower bounds for the $\s{PRG}(n,k,d,q)$ detection problem. By the characterization of optimal hypothesis testing in terms of total variation distance, the optimal risk satisfies (see, e.g.,~\cite[Theorem~2.2]{tsybakov2004introduction})
\begin{align}
\s{R}^\ast = 1 - d_{\s{TV}}(\P_{\calH_0},\P_{\calH_1}).
\label{eqn:optimalrisk}
\end{align}
The following theorem establishes that when the ambient dimension grows sufficiently fast relative to the size of the planted geometric subgraph, no test can reliably detect the presence of geometry.
\begin{theorem}[Impossibility of strong detection]\label{thm:ITlower}
Consider the detection problem defined in~\eqref{eqn:model}. If the following conditions hold simultaneously:
%\begin{align}
%\frac{d}{k^2\log^2 k}\to\infty, \quad    k=o(d\log n), \quad \frac{dn^3}{k^6\log^2 k}\to\infty,\quad k\leq C_{p,q}\log n,
%\end{align}
\begin{align}
\frac{d}{k^2\log^2 k}\to\infty\quad\s{and} \quad \frac{dn^3}{k^6\log^2 k}\to\infty,
\end{align}
then $d_{\s{TV}}(\P_{\calH_0},\P_{\calH_1})<1$, i.e., strong detection is impossible.
\end{theorem}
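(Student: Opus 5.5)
We bound the total variation through a truncated second moment of the likelihood ratio. Write $L=\mathrm{d}\P_{\calH_1}/\mathrm{d}\P_{\calH_0}=\bE_{\calK}[L_{\calK}]$, where $L_{\calK}(\s{G}_n)=\P_{\calG_d(k,q)}(\s{G}_n[\calK])/q^{e(\s{G}_n[\calK])}(1-q)^{\binom{k}{2}-e(\s{G}_n[\calK])}$ is the likelihood ratio of the geometric graph on $\calK$ against $\calG(k,q)$.

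\textbf{Step 1: the untruncated second moment.} The second moment is
\begin{align*}
\bE_{\calH_0}[L^2]=\bE_{\calK,\calK'}\,\bE_{\calH_0}[L_{\calK}L_{\calK'}].
\end{align*}
Only the edges inside $\calK\cap\calK'$ carry dependence. Let $m=|\calK\cap\calK'|\sim\s{Hypergeometric}(n,k,k)$. Then
\begin{align*}
\bE_{\calH_0}[L_{\calK}L_{\calK'}]=1+\chi^2\bigl(\calG_d(m,q)\,\|\,\calG(m,q)\bigr).
\end{align*}
This holds because the marginal of an induced subgraph of a geometric graph is again geometric.

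\textbf{Step 2: bounding the $\chi^2$ term.} We need a bound on $\chi^2(\calG_d(m,q)\|\calG(m,q))$ in terms of $m^3/d$ with logarithmic factors. Following Bubeck--Ding--Eldan--R\'acz, we would bound the TV/KL distance by comparing a Wishart matrix with a GOE matrix and use signed-triangle heuristics. The target estimate is $\chi^2\lesssim \exp(C m^3\log^2 m/d)-1$, valid when $m\ll \sqrt{d}/\log m$ or in a comparable range. The condition $d\gg k^2\log^2k$ is exactly what makes this usable for every $m\le k$.

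\textbf{Step 3: handling the tails via truncation.} The naive bound fails because $\chi^2$ blows up for large $m$, or on atypical latent configurations. We therefore condition on a high-probability event $\Gamma$ for the latent vectors, for instance that all pairwise inner products satisfy $|\langle \bs{x}_i,\bs{x}_j\rangle|\le C\sqrt{\log k/d}$, or that the Gram matrix is close to the identity. We then work with the conditioned alternative $\tilde\P_{\calH_1}$. Since $d_{\s{TV}}(\P_{\calH_1},\tilde\P_{\calH_1})=o(1)$, it suffices to show $\bE_{\calH_0}[\tilde L^2]\le C<\infty$. This yields $d_{\s{TV}}(\P_{\calH_0},\tilde\P_{\calH_1})\le\frac12\sqrt{C-1}$, which is bounded away from one.

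\textbf{Step 4: summing over the overlap.} With the truncated quantities,
\begin{align*}
\bE[\tilde L^2]\le\sum_{m}\P(|\calK\cap\calK'|=m)\,\exp\!\bigl(Cm^3\log^2k/d\bigr).
\end{align*}
We use $\P(m)\le (k^2/n)^m/m!$. For small $m$, the term $m^3/d\le m\,k^2\log^2k/d=o(m)$ is swamped. The dangerous regime is $k^2\gtrsim n$, where the overlap concentrates around $k^2/n$. There the exponent $(k^2/n)^3\log^2k/d=k^6\log^2k/(n^3 d)\to0$, which is exactly the second hypothesis of the theorem. Hence the sum is $O(1)$ when the overlap is near its typical value. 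For overlaps $m\gg k^2/n$, the hypergeometric tail $(ek^2/(nm))^m$ beats $\exp(m\cdot m^2\log^2k/d)$ because $m^2\le k^2=o(d/\log^2k)$.

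\textbf{Main obstacle.} The main obstacle is Step 2 combined with the truncation: we need a $\chi^2$ (not just TV) bound for the conditioned geometric graph that is exponential in $m^3/d$ uniformly in $m\le k$. We would first try to obtain it by expanding the likelihood ratio in the Fourier--Walsh basis over edges. Conditioned on $\Gamma$, each subgraph coefficient $\bE[\prod_{e\in H}\tilde\sigma_e]$ decays like $(C\log k/d)^{|E(H)|/2}$ with cycle-type cancellations. Summing over subgraphs $H$ of the overlap graph should then give the claimed bound.
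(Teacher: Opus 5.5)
Your overall architecture matches the paper's. You truncate the alternative on a high-probability latent event, reduce the cross term to $1+\chi^2$ of a truncated geometric graph on the overlap $Z=|\calK\cap\calK'|$, aim for $\chi^2\le\exp(CZ^3\log^2k/d)-1$ uniformly in $Z\le k$, and sum against the hypergeometric law. Your Step~4 is essentially the paper's final computation.

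The gap is that the one hard estimate, Step~2, is left open, and the route you propose for it fails.
\begin{itemize}
\item The coefficient decay $(C\log k/d)^{|E(H)|/2}$ is false already for the triangle, where $\bE[Z_{12}Z_{13}Z_{23}]\ge c_q/\sqrt d$ \cite[Lemma~3]{bubeck2016testing}. Indeed, if it held, Parseval would give $1+\chi^2\le(1+C\log k/d)^{\binom m2}$. Then $\chi^2\to0$ whenever $m^2\log k\ll d$, contradicting the fact that signed triangles distinguish $\calG_d(m,q)$ from $\calG(m,q)$ whenever $d\ll m^3$.
\item The correct coefficient bounds (Lemma~\ref{lem:BB_bound}) decay only in the number of vertices and carry a factor $A^{|E(H)|}$. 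The full $\chi^2$ requires summing $(\bE\chi_H)^2$ over \emph{all} edge sets, and that sum is swamped by the roughly $2^{\binom v2}$ dense edge sets on $v$ vertices once $v\gg\log d$. Such bounds control only low-degree projections, which is exactly how the paper uses them for Theorem~\ref{thm:gap}.
\item Conditioning on $\Gamma$ destroys exact cancellations, such as the vanishing on graphs with a leaf. It also shifts every coefficient by up to order $\P(\Gamma^c)\norm{\chi_H}_\infty$, which is not summable over $H$.
\end{itemize}

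What the paper supplies instead is a matrix-level argument, and the truncation must be designed for it. It truncates the Wishart matrix on \emph{every} subset $T\subseteq\calK$, using row bounds $\sum_{j\in T\setminus\{i\}}\bs{W}_{ij}^2\le(1+C_1)d(|T|-1)\log k$ and operator-norm bounds $\norm{\bs{W}|_T-d\bs{I}_{|T|}}_{\s{op}}\le C_2\bigl(\sqrt{d|T|}+\sqrt{d\log\binom{k}{|T|}}\bigr)$.
\begin{itemize}
\item This family is hereditary: $\Gamma_{\calK}$ and $\Gamma_{\calK'}$ both imply the constraint $\calQ_{\calU}$ on $\calU=\calK\cap\calK'$. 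Together with nonnegativity of the likelihood factors, this bounds the cross term by $1+\chi^2$ of the truncated model on $\calU$. That is what justifies your Step~4 formula, which does not follow from the untruncated identity in Step~1.
\item The subset-dependent operator-norm scale is essential. Your entrywise event $|\langle\bs{x}_i,\bs{x}_j\rangle|\le C\sqrt{\log k/d}$ only yields $\norm{\bs{W}|_T-d\bs{I}}_{\s{op}}\lesssim|T|\sqrt{d\log k}$. In this argument that leads to an exponent of order $z^5\log^2k/d$ rather than $z^3\log^2k/d$.
\item The $\chi^2$ is then bounded by chaining through intermediate laws with $1+D_m(P\|Q)\le\sqrt{1+D_{2m}(P\|R)}\sqrt{1+D_{2m-1}(R\|Q)}$, where $D_m(P\|Q)=\bE_{Q}[(\mathrm{d}P/\mathrm{d}Q)^m]-1$.
\item One intermediate law thresholds a truncated GOE by the same map as the Wishart, so data processing reduces $D_4$ to truncated Wishart versus GOE densities. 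There the expansion $\log(f_{z,d}/g_{z,d})=\sum_i h(\lambda_i)+O(z^3/d)$ applies:
  \begin{itemize}
  \item the truncation controls the quadratic and quartic terms;
  \item the trace term is an explicit Gaussian moment generating function;
  \item $\Tr(\bs{A}^3)/\sqrt d$ is handled by the GOE log-Sobolev inequality after a Lipschitz extension, with constant $O(z^{3/2}\log k)$ on the truncated set.
  \end{itemize}
\item The remaining $D_3$ splits into two pieces. One is the diagonal-normalization effect, handled by the Radon--Nikodym formula of \cite[Lemma~6]{bubeck2016testing} plus a Gaussian integral controlled by the row bounds. The other is the $O(1/d)$ threshold mismatch.
\item Truncation is indispensable here, not merely a guard against rare events. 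Without it $\int f_{z,d}^4g_{z,d}^{-3}=\infty$, since the GOE density has Gaussian tails and the Wishart density only exponential ones.
\end{itemize}

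Finally, $\bE_{\calH_0}[\tilde L^2]\le C$ gives $d_{\s{TV}}\le\frac12\sqrt{C-1}<1$ only when $C<5$. Use $1-d_{\s{TV}}\ge 1/\bigl(2(1+\chi^2)\bigr)$ instead, as the paper does.
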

Moreover, the same proof technique shows that if the condition $\frac{dn^3}{k^6\log^2 k}\to\infty$ is replaced by $\frac{dn^3}{k^6 f(k)}\to\infty$ for some function $f(k)=\omega(\log^2 k)$, then $d_{\s{TV}}(\P_{\calH_0},\P_{\calH_1})\to 0$, implying that weak detection is impossible. 

It is instructive to compare Theorem~\ref{thm:ITlower} with previously known results in special cases. In the vanilla setting where $k=n$, the model reduces to testing for global high-dimensional geometry, and our lower bound essentially recovers the known phase transition at dimension $d\asymp n^{3}$ established in~\cite{bubeck2016testing}, up to a $\log^{2}k$ factor. We note that if the edge probability inside the planted set were taken to be $p\neq q$, then detection would be information-theoretically impossible whenever $k=O(\log n)$, regardless of the value of $d$. This impossibility is not driven by geometry, but instead reflects a fundamental limitation already present in classical planted clique and planted dense subgraph models: when the planted subgraph is too small, even a purely combinatorial increase in edge density cannot be reliably detected. Thus, only when $p=q$, geometry is the sole distinguishing feature, and the interplay between dimension and subgraph size becomes central.

\begin{comment}
    \begin{theorem}[Statistical lower bounds]\label{thm:ITlower}
    Consider the detection problem in \eqref{eqn:model}. 
    \begin{enumerate}
        \item  If $d/(k\log k)^2\to\infty$, $k=o(d\log n)$, $(dn^3)/(k^6\log^2k)\to\infty$, and $\frac{(p-q)^2}{p(1-p)}=o\left(\frac{\log n}{k}\right)$, all hold simultaneously, then $d_{\s{TV}}(\P_{\calH_0},\P_{\calH_1})<1$ implying that strong detection is impossible. \label{thm:strong-impossibility}
    \item If $d/(k^2\log k)\to\infty$, $k=o(d\log n)$, $(dn^3)/(k^6f(k))\to\infty$, where $f(k)=\omega(\log^2k)$ and $\frac{(p-q)^2}{p(1-p)}=o\left(\frac{\log n}{k}\right)$, all hold simultaneously, then $d_{\s{TV}}(\P_{\calH_0},\P_{\calH_1})\to0$ implying that weak detection is impossible. \label{thm:weak-impossibility}
    \end{enumerate}
\end{theorem}
\end{comment}

\paragraph{Upper bounds.} We now turn to algorithmic upper bounds. Specifically, we propose three detection algorithms and analyze their associated risks. To this end, we first introduce some notation. Let $\bs{A}$ denote the adjacency matrix of the observed graph $\s{G}_n$; we suppress the dependence of $\bs{A}$ on $\s{G}_n$, as the underlying graph will always be clear from the context. For distinct vertices $i,j,\ell\in[n]$, define
\begin{align}
\bar{\bs{A}}_{i,j}\triangleq \bs{A}_{i,j}-\E\!\left[\bs{A}_{i,j}\right],
\qquad
\s{T}_{\s{G}_n}(i,j,\ell)
\triangleq
\bar{\bs{A}}_{i,j}\bar{\bs{A}}_{i,\ell}\bar{\bs{A}}_{j,\ell}.
\end{align}
We consider the following statistics:
\begin{align}
    \s{T}_{\s{triangle}}(\s{G}_n)&\triangleq\sum_{\{i,j,\ell\}\subset\binom{[n]}{3}}\s{T}_{\s{G}_n}(i,j,\ell),\label{eqn:numberSigned}\\
    \s{T}_{\s{scan}}(\s{G}_n)&\triangleq\max_{\calS\subset[n]:\abs{\calS}=k}\sum_{\{i,j,\ell\}\subset\binom{\calS}{3}}\s{T}_{\s{G}_n}(i,j,\ell). \label{eqn:scanSigned}
    %\s{T}_{\s{dense}}(\s{G}_n)&\triangleq\max_{\calS\subset[n]:|\calS|=k}\sum_{i,j\in\calS:i<j}\mathbf{A}_{i,j}.\label{eqn:scandense}
\end{align}
The statistic $\s{T}_{\s{triangle}}(\s{G}_n)$ in \eqref{eqn:numberSigned} counts the total number of signed triangles in $\s{G}_n$. The scan statistic $\s{T}_{\s{scan}}(\s{G}_n)$ in \eqref{eqn:scanSigned} enumerates all $k$-vertex induced subgraphs and selects the one with the largest signed-triangle count. %Finally, $\s{T}_{\s{dense}}(\s{G}_n)$ in \eqref{eqn:scandense} scans over all $k$-vertex subgraphs and selects the densest one in terms of edge count. 
Based on these statistics, we define the following detection algorithms:
\begin{align}
    \calA_{\s{triangle}}(\s{G}_n)&\triangleq\mathds{1}\ppp{\s{T}_{\s{triangle}}(\s{G}_n)\geq \tau_{\s{triangle}}},\label{eqn:triangleTest}\\
    \calA_{\s{scan}}(\s{G}_n)&\triangleq\mathds{1}\ppp{\s{T}_{\s{scan}}(\s{G}_n)\geq \tau_{\s{scan}}},\label{eqn:scanTest}
    %\calA_{\s{dense}}(\s{G}_n) &\triangleq\Ind\ppp{\s{T}_{\s{dense}}(\s{G}_n)\geq \tau_{\s{dense}}},\label{eqn:denseTest}
\end{align}
where $\tau_{\s{triangle}},\tau_{\s{scan}},\tau_{\s{dense}}\in\mathbb{R}_+$ are thresholds specified below.

We briefly motivate this choice of statistics. Consider the conditional likelihood ratio $\calL_{n\mid\calK}\triangleq
\pr_{\calH_1\mid\calK}/\pr_{\calH_0}$. Expanding $\calL_{n\mid\calK}$ in the orthonormal basis associated with the entries of $\calG(n,q)$, the expansion up to degree three takes the form
\begin{align}
\calL_{n\mid\calK}
\approx
1+
\frac{1}{\sqrt{d}}
\sum_{\{i,j,\ell\}\subset\binom{\calK}{3}}
\s{T}_{\s{G}_n}(i,j,\ell).
\end{align}
Thus, signed triangles constitute the lowest-degree nontrivial term in the likelihood-ratio expansion, explaining why triangle-based statistics arise naturally. Since the likelihood-ratio test is optimal by the Neyman--Pearson lemma, it is reasonable to expect low-degree proxies of the likelihood ratio to be powerful as well. Our algorithmic guarantees are summarized in the following theorem.
\begin{theorem}[Strong detection upper bounds]\label{thm:upperBound}
    Consider the detection problem in \eqref{eqn:model}, and the signed triangle and triangle scan tests in \eqref{eqn:triangleTest}, and \eqref{eqn:scanTest}, respectively, with $\tau_{\s{scan}}=\tau_{\s{triangle}}\triangleq\binom{k}{3}\frac{\s{C}_q}{\sqrt{d}}$% and $\tau_{\s{dense}} = \binom{k}{2}\frac{p+q}{2}$
    , for a constant $\s{C}_q$ depending only on $q$.
    \begin{enumerate}
        \item If $dn^3/k^6\to0$, then $\s{R}(\calA_{\s{triangle}})\to0$, as $d,k,n\to\infty$. \label{thm:upperbound_triangle}
        \item If $k^2/d>C\log^2 n$, for some sufficiently large constant $C>0$, then $\s{R}(\calA_{\s{scan}})\to0$, as $k,n\to\infty$. \label{thm:upperbound_scan}
        %\item If $\frac{(p-q)^2}{q(1-q)} = \omega\p{\frac{\log n}{k}}$, then $\s{R}(\calA_{\s{dense}})\to0$, as $k,n\to\infty$. \label{thm:upperbound_dense}
    \end{enumerate}
\end{theorem}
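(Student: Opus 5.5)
For the first part, I would use Chebyshev's inequality applied to $\s{T}_{\s{triangle}}$. Under $\calH_0$ the centered edge variables $\bar{\bs{A}}_{ij}$ are independent with mean zero and variance $q(1-q)$. Hence $\E_{\calH_0}\s{T}_{\s{triangle}}=0$. Distinct triangles are orthogonal, so $\s{Var}_{\calH_0}(\s{T}_{\s{triangle}})=\binom{n}{3}q^3(1-q)^3=\Theta(n^3)$.

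Under $\calH_1$, every triangle with at least one vertex outside $\calK$ contains an edge that is independent of everything else, so it still has mean zero. For triangles inside $\calK$ I will invoke the estimate of Bubeck et al.~\cite{bubeck2016testing}: $\E[\bar\sigma_{ij}\bar\sigma_{i\ell}\bar\sigma_{j\ell}]\ge \s{c}_q/\sqrt d$ for a constant $\s{c}_q>0$. This gives $\E_{\calH_1}\s{T}_{\s{triangle}}\ge\binom{k}{3}\s{c}_q/\sqrt d$.

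For the variance under $\calH_1$, I split the sum of covariances by how many vertices two triangles share and how many of those vertices lie in $\calK$.
\begin{itemize}
\item A pair of triangles has nonzero covariance only if every edge not inside $\calK$ appears in both triangles.
\item Pairs supported inside $\calK$ are handled exactly as in the $k$-vertex geometric graph analysis of~\cite{bubeck2016testing}. This contributes $O(k^3+k^4/d+k^6/d^2)$; the $k^6/d^2$ term comes from mildly correlated disjoint triangles, with covariance $O(1/d^2)$.
\item The remaining terms contribute $O(n^3)$.
\end{itemize}
With threshold $\tau=\binom{k}{3}\s{C}_q/\sqrt d$ and $\s{C}_q=\s{c}_q/2$, Chebyshev bounds the two error probabilities by
\[
O\!\left(\frac{n^3 d}{k^6}\right)+O\!\left(\frac{d(n^3+k^4/d+k^6/d^2)}{k^6}\right).
\]
Both terms vanish when $dn^3/k^6\to0$, using $k\le n$ and $d\to\infty$.

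For the scan test, the Type-II error is easy. The statistic is at least the sum over $\calS=\calK$, and that sum is the signed triangle count of a $\calG_d(k,q)$ graph. Its mean is $\binom{k}{3}\s{c}_q/\sqrt d$ and its variance is $O(k^3+k^4/d+k^6/d^2)$. By Chebyshev it exceeds $\tau$ with high probability as soon as $k^6/d\gg k^3$, which is implied by $k^2/d>C\log^2n$.

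The Type-I error requires a union bound over $\binom nk\le e^{k\log n}$ sets. So for a fixed $\calS$ I need
\[
\P_{\calH_0}\Bigl(\textstyle\sum_{\binom{\calS}{3}}\s{T}\ge \s{C}_q k^3/\sqrt d\Bigr)\le e^{-2k\log n}.
\]
The statistic is a degree-3 Rademacher-like chaos in the $\binom k2$ independent bounded variables $\bar{\bs{A}}_{ij}$, with variance $\Theta(k^3)$. Gaussian-chaos-type tails of order $\exp(-c(t/k^{3/2})^{2/3})$ are too weak here. What I would use instead is the decoupled structure, via the decoupling inequalities of de la Pe\~na--Montgomery-Smith. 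This reduces the problem to $\sum_{i,j,\ell}\bar A_{ij}\bar A'_{i\ell}\bar A''_{j\ell}$ with independent copies. Conditioning on two of the copies leaves a linear form $\sum_{ij}\bar A_{ij}M_{ij}$, where $M_{ij}=\sum_\ell \bar A'_{i\ell}\bar A''_{j\ell}$. Hoeffding then gives a tail $\exp(-ct^2/\|M\|_F^2)$. Next, I control $\|M\|_F^2=\|A'A''\|_F^2\le k\|A'\|_{\s{op}}^2\|A''\|_{\s{op}}^2$, and with probability $1-e^{-ck}$ this is $O(k^3)$ by standard operator-norm bounds for centered random matrices. That yields a tail $\exp(-ct^2/k^3)$ up to an $e^{-ck}$ failure event. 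With $t=\s{C}_qk^3/\sqrt d$ this exponent equals $c k^3/d=ck\cdot(k^2/d)>Ck\log^2 n$, which beats $k\log n$.

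The main obstacle is that the $e^{-ck}$ failure event for the operator norms is not small enough to survive the union bound. To fix this, I would instead bound $\|A'\|_{\s{op}}\le C\sqrt{k\log n}$ with probability $1-e^{-2k\log n}$. Concentration of $\|\cdot\|_{\s{op}}$ (Talagrand, with deviation $\sqrt{k\log n}$) gives exactly this. Then $\|M\|_F^2\le Ck^3\log^2n$, and the tail becomes $\exp(-ck^3/(d\log^2n))$. Once $k^2/d>C\log^2 n$, this is $e^{-cCk}$. Since that only just fails to dominate $e^{k\log n}$, I expect to have to check this bookkeeping carefully. One option is to route the argument through a conditional Bernstein inequality instead, using $\max_{ij}|M_{ij}|=O(\sqrt{k\log n})$. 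The resulting exponent of order $k\log^2n\cdot$const is what the stated $\log^2n$ factor is meant to supply.
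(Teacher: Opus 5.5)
Your plan tracks the paper's proof closely. Part~1 and the scan Type-II error use Chebyshev with the Bubeck et al.\ estimates. The scan Type-I error uses decoupling, conditional Hoeffding, and operator-norm control of $\|M\|_F^2$ (the paper's ``energy'' $\sum_e W_e^2$). However, the scan Type-I argument as written does not close, and the issue you flag is not a borderline bookkeeping matter.

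The lossy step is $\|A'A''\|_F^2\le k\|A'\|_{\s{op}}^2\|A''\|_{\s{op}}^2$, combined with the $\sqrt{k\log n}$ bound on \emph{both} operator norms. This costs $\log^2 n$ and leaves an exponent $ck^3/(d\log^2 n)>cCk$. That exponent loses to $\log\binom nk\ge k\log(n/k)$ by a factor of order $\log n$ whenever $k\le n^{1-\epsilon}$. Your suggested remedy does not help, for two reasons:
\begin{enumerate}
\item Bernstein's sub-Gaussian term is the same conditional variance $q(1-q)\sum_{i<j}(M_{ij}+M_{ji})^2$ that Hoeffding uses, so it gains at most a constant.
\item An entrywise bound $\max_{ij}|M_{ij}|=O(\sqrt{k\log n})$ fails with probability at least $e^{-O(k)}$. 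For instance, all $k-2$ summands of one $M_{ij}$ can take their maximal value. Such a failure probability cannot survive a union bound over $e^{k\log n}$ sets.
\end{enumerate}

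The fix is to let only one factor carry the $\log n$. Use $\|A'A''\|_F^2\le\|A'\|_{\s{op}}^2\|A''\|_F^2\le k^2\|A'\|_{\s{op}}^2$, with the deterministic bound $\|A''\|_F^2\le k^2$ (entries lie in $[-1,1]$). Then $\|M\|_F^2\le Ck^3\log n$ off an event of probability $O(e^{-2k\log n})$. The conditional tail becomes $\exp(-ck^3/(d\log n))\le\exp(-cCk\log n)$, which beats $\binom nk\le e^{k\log n}$ once $C$ is large. This is exactly the paper's Lemma~\ref{lem:energy}: $\sum_e W_e^2\le\frac12\|\mathbf B\|_{\s{op}}^2\|\mathbf B\|_F^2$ with $\|\mathbf B\|_F^2=O(k^2)$ and $\|\mathbf B\|_{\s{op}}^2=O(\alpha k\log n)$. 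It also explains the $\log^2 n$ in the hypothesis: one $\log n$ comes from the operator norm and one from the union bound.

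Once repaired, your route is in one respect cleaner than the paper's. You decouple at the level of tails (de la Pe\~na--Montgomery-Smith) and truncate on a good event for the copies $(A',A'')$ you condition on, which is legitimate. Proposition~\ref{prop:tail} instead inserts the indicator of $\{\sum_e W_e^2\le u\}$, an event on the \emph{original} edges, into an mgf bound obtained through decoupling. The mgf inequality alone does not justify that step.

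A minor point: inside $\calK$ the variance is $O(k^3+k^4/d)$, with no $k^6/d^2$ term. Vertex-disjoint triangles are independent, and triangles sharing one vertex are uncorrelated by rotational invariance. The extra term is harmless in part~1. Part~2, however, does not assume $d\to\infty$, and there it would leave an $O(1/d)$ Type-II bound.
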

Theorem~\ref{thm:upperBound} complements the information-theoretic lower bound in Theorem~\ref{thm:ITlower} up to polylogarithmic factors. In particular, Theorem~\ref{thm:ITlower} together with Theorem~\ref{thm:upperBound} identifies the phase transition for optimal detection (up to the stated logarithmic terms). In the case where the edge probability inside the planted set is taken to be $p\neq q$, the optimal test scans over all $k$-vertex subgraphs and selects the one with the largest number of edges. We show in Appendix~\ref{app:apppnotq} that strong detection is achievable whenever $k=\Omega(\log n)$.

We briefly remark on a key technical difficulty underlying Theorem~\ref{thm:upperBound}, which is the analysis of the signed triangle scan test. Unlike the vanilla signed triangle statistic, the scan test ranges over an exponential family of candidate vertex sets, which necessitates control of exponential tail probabilities rather than moment bounds. While first- and second-moment analyses suffice for fixed tests and have been the primary tools in prior work on testing high-dimensional geometry in random graphs, they are insufficient in the scan setting due to the competing combinatorial complexity. Addressing this challenge requires new techniques to characterize the relevant exponential rates for signed triangle counts, which we develop using decoupling methods for U-statistics.

\paragraph{Computational lower bounds.} We begin with a brief overview of the low-degree polynomial method. The central idea of this approach is to use low-degree multivariate polynomials in the entries of the observed data as surrogates for computationally efficient procedures. This heuristic is motivated by the observation that many known polynomial-time algorithms can be approximated, in an appropriate sense, by
polynomials of bounded degree. The foundations of this methodology originate from a sequence of works in the sum-of-squares optimization literature \cite{barak2016nearly,Hopkins18,hopkins2017bayesian,hopkins2017power}.

We follow the notation and formalism introduced in \cite{Hopkins18,Dmitriy19}. Let $\pr_{\calH_0}$ be a probability distribution on $\Omega_n=\{0,1\}^{\binom{n}{2}}$. This distribution induces an inner product on measurable functions $f,g:\Omega_n\to\mathbb{R}$ given by $\langle f,g\rangle_{\calH_0}\triangleq\E_{\calH_0}[f(\s{G})g(\s{G})]$, with associated norm $\|f\|_{\calH_0}=\langle f,f\rangle_{\calH_0}^{1/2}$. We denote by $L^2(\pr_{\calH_0})$ the Hilbert space of functions with finite $\|\cdot\|_{\calH_0}$ norm, equipped with this inner product.

In the absence of computational constraints, the Neyman--Pearson lemma implies that the likelihood ratio test achieves the optimal tradeoff between $\mathsf{Type}$-$\mathsf{I}$ and $\mathsf{Type}$-$\mathsf{II}$ error probabilities. Equivalently, the likelihood ratio is the optimal distinguisher between $\pr_{\calH_0}$ and $\pr_{\calH_1}$ in the $L^2(\pr_{\calH_0})$ sense. Writing $\calL_n\triangleq \pr_{\calH_1}/\pr_{\calH_0}$ for the likelihood ratio, standard second-moment arguments show that if $\|\calL_n\|_{\calH_0}^2$ remains bounded as $n\to\infty$, then $\pr_{\calH_1}$ is contiguous to $\pr_{\calH_0}$. In this case, the two distributions are statistically indistinguishable, in the sense that no test can simultaneously drive both error probabilities to zero.

The low-degree method asks whether a similar conclusion holds when one restricts attention to low-degree polynomials. To formalize this, let $\calV_{n,\le D}\subset L^2(\pr_{\calH_0})$ denote the subspace of polynomial functions $\Omega_n\to\mathbb{R}$ of total degree at most $D$. Let $\calP_{\le D}:L^2(\pr_{\calH_0})\to \calV_{n,\le D}$ be the corresponding orthogonal projection operator. The \emph{$D$-low-degree likelihood ratio} is defined as $\calL_{n,\le D}
\triangleq
\calP_{\le D}\calL_n$,
that is, the orthogonal projection of the likelihood ratio onto the space of degree-$D$ polynomials with respect to the inner product $\langle\cdot,\cdot\rangle_{\calH_0}$. Since the full likelihood ratio is the optimal $L^2$ distinguisher, the projected likelihood ratio plays an analogous optimal role among all degree-$D$ polynomials. The following lemma formalizes this property. 
\begin{lemma}[Optimality of the low-degree likelihood ratio
{\cite{hopkins2017bayesian,hopkins2017power,Dmitriy19}}]\label{lem:Dmitriy}
Consider the optimization problem
\begin{align}
\begin{aligned}
\mathrm{max}
\;\bE_{\calH_1}\pp{f(\s{G})}
\quad\mathrm{s.t.}
\quad\bE_{\calH_0}\pp{f^2(\s{G})} = 1,\; f\in\calV_{n,\leq D},
\end{aligned}\label{eqn:optimizationProblem}
\end{align}
%\begin{align}
%\begin{aligned}
%& \mathrm{maximize}
%& & \bE_{\calH_1}f(\s{G}) \\
%& \mathrm{subject}\;\mathrm{to}
%& & \bE_{\calH_0}f^2(\s{G}) = 1,\; %f\in\calV_{n,\leqD},
%\end{aligned}\label{eqn:optimizationProblem}
%\end{align}
The unique maximizer is $f^\star = \calL_{n,\leq D}/\norm{\calL_{n,\leq D}}_{\calH_0}$, and the optimal value equals $\norm{\calL_{n,\leq D}}_{\calH_0}$. 

\end{lemma}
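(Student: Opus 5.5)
The argument is a Hilbert-space projection together with Cauchy--Schwarz in $L^2(\pr_{\calH_0})$. The first step is to rewrite the objective as an inner product under the null. Since $\calL_n=\pr_{\calH_1}/\pr_{\calH_0}$ is the Radon--Nikodym derivative (well defined because $\pr_{\calH_0}$ gives positive mass to every graph in the finite space $\Omega_n$ whenever $q\in(0,1)$), every $f\in\calV_{n,\le D}$ satisfies
\begin{align}
\bE_{\calH_1}\pp{f(\s{G})}=\bE_{\calH_0}\pp{\calL_n(\s{G})f(\s{G})}=\innerP{\calL_n,f}_{\calH_0}.
\end{align}
Here $\calL_n\in L^2(\pr_{\calH_0})$ trivially, since $\Omega_n$ is finite.

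Next I split the likelihood ratio into its projection and a residual. Let $\calP_{\le D}$ be the orthogonal projection onto the closed (finite-dimensional) subspace $\calV_{n,\le D}$, and write $\calL_n=\calL_{n,\le D}+(\calL_n-\calL_{n,\le D})$. The residual is orthogonal to $\calV_{n,\le D}$, so for every admissible $f$ we have $\innerP{\calL_n,f}_{\calH_0}=\innerP{\calL_{n,\le D},f}_{\calH_0}$. The problem therefore becomes maximizing a linear functional $\innerP{\calL_{n,\le D},\cdot}_{\calH_0}$ over the unit sphere of the Hilbert space $\calV_{n,\le D}$, equipped with the inner product $\innerP{\cdot,\cdot}_{\calH_0}$.

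Cauchy--Schwarz now gives $\innerP{\calL_{n,\le D},f}_{\calH_0}\le\norm{\calL_{n,\le D}}_{\calH_0}\norm{f}_{\calH_0}=\norm{\calL_{n,\le D}}_{\calH_0}$. This bound is attained by $f^\star=\calL_{n,\le D}/\norm{\calL_{n,\le D}}_{\calH_0}$, which lies in $\calV_{n,\le D}$, has unit norm, and gives value $\norm{\calL_{n,\le D}}_{\calH_0}$.

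Uniqueness comes from the equality case of Cauchy--Schwarz. Equality forces $f=c\,\calL_{n,\le D}$ with $c\ge0$, and the norm constraint then fixes $c=1/\norm{\calL_{n,\le D}}_{\calH_0}$. Functions are identified $\pr_{\calH_0}$-a.s., which on $\Omega_n$ with full support means pointwise.

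The only potential obstacle is degeneracy: the normalization requires $\calL_{n,\le D}\neq0$. This always holds because constants lie in $\calV_{n,\le D}$ and $\innerP{\calL_n,1}_{\calH_0}=1$. Hence $\norm{\calL_{n,\le D}}_{\calH_0}\ge1>0$.
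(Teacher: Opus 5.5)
Your proof is correct, and it is the standard argument: the paper does not prove Lemma~\ref{lem:Dmitriy} itself but cites \cite{Dmitriy19}, where the result is obtained in the same way, by writing $\bE_{\calH_1}[f]=\innerP{\calL_n,f}_{\calH_0}=\innerP{\calL_{n,\le D},f}_{\calH_0}$ for $f\in\calV_{n,\le D}$ and applying Cauchy--Schwarz. Your treatment of uniqueness (through the equality case of Cauchy--Schwarz, with functions identified pointwise on $\Omega_n$) and of non-degeneracy (through $\innerP{\calL_{n,\le D},1}_{\calH_0}=1$, so that $\norm{\calL_{n,\le D}}_{\calH_0}\ge 1$) is also correct and completes the argument.
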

In the computationally unbounded setting, boundedness of
$\|\calL_n\|_{\calH_0}$ implies statistical indistinguishability between
$\pr_{\calH_0}$ and $\pr_{\calH_1}$. The low-degree method asserts that an
analogous principle governs computational limits, with $\calL_{n,\le D}$
playing the role of the likelihood ratio when attention is restricted to
efficiently computable tests. This intuition is captured by the following informal version of the
low-degree conjecture (see~\cite{Hopkins18,hopkins2017bayesian,hopkins2017power}
and~\cite[Conj.~1.16]{Dmitriy19}).

\begin{conjecture}[Low-degree conjecture (informal)]\label{conj:1}
If there exist $\epsilon>0$ and
$D=D(n)\ge (\log n)^{1+\epsilon}$ such that
$\|\calL_{n,\le D}\|_{\calH_0}$ remains bounded as $n\to\infty$, then no
polynomial-time algorithm can distinguish $\pr_{\calH_0}$ from
$\pr_{\calH_1}$ (i.e., achieve strong detection).
\end{conjecture}

In what follows, we use Conjecture~\ref{conj:1} to provide evidence for the
statistical--computational gap observed above. We start with the following result.
\begin{theorem}[Statistical-computational gap]\label{thm:gap}
Consider the problem in \eqref{eqn:model}. Suppose there exists $\varepsilon>0$ such that: 1) $k\leq n^{1/2-\varepsilon}$ or 2) $\frac{k^6}{n^3d}\log^3d\leq n^{-\varepsilon}$ and $k=\Omega(\sqrt{n})$, for all large $n$. Then, there exists $C = C(\varepsilon)$ such that if $D\leq C\log n/\log\log n$, one has $\norm{\calL_{n,\leq D}}_{\calH_0}= O(1)$. On the other hand, if $\frac{k^6}{n^3d}=\omega(1)$, then $\norm{\calL_{n,\leq D}}_{\calH_0}= \omega(1)$.
%Then, if $1\vee\frac{k^6}{n^3}\ll d\ll k^2$, then $\norm{\calL_{n}^{\leq D}}_{\calH_0}\leq O(1)$, for any $D = \Omega(\log n)$. On the other hand, if $d\ll\frac{k^6}{n^3}$, then $\norm{\calL_{n}^{\leq D}}_{\calH_0}\geq \omega(1)$.
\end{theorem}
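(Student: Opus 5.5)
We will work with the Fourier--Walsh expansion of the likelihood ratio in the orthonormal basis of $L^2(\pr_{\calH_0})$. For a set of edges $\alpha\subseteq\binom{[n]}{2}$, let
\begin{align*}
\chi_\alpha(\s{G})=\prod_{(i,j)\in\alpha}\frac{\bs{A}_{ij}-q}{\sqrt{q(1-q)}}.
\end{align*}
Then $\norm{\calL_{n,\le D}}_{\calH_0}^2=\sum_{|\alpha|\le D}\p{\bE_{\calH_1}\chi_\alpha}^2$. Conditioning on the planted set $\calK$, only edges inside $\calK$ carry signal, so
\begin{align*}
\bE_{\calH_1}\chi_\alpha=\pr\p{V(\alpha)\subseteq\calK}\,\Phi_d(\alpha),
\end{align*}
where $\Phi_d(\alpha)=\bE\prod_{(i,j)\in\alpha}\bar\sigma_{ij}$ is the normalized Fourier coefficient of the random geometric graph $\calG_d(|V(\alpha)|,q)$. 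Two reductions follow. First, $\Phi_d(\alpha)=0$ whenever $\alpha$ has a vertex of degree one: conditional on the other labels, $\sigma_{ij}$ is marginally $\s{Bern}(q)$ by rotation invariance, so the centered factor averages to zero. The same argument kills any $\alpha$ with a leaf, so only graphs of minimum degree at least two contribute. Second, $\pr(V(\alpha)\subseteq\calK)\approx(k/n)^{v}$ with $v=|V(\alpha)|$. The norm therefore becomes
\begin{align*}
\sum_{H}N_n(H)\,(k/n)^{2v(H)}\,\Phi_d(H)^2,
\end{align*}
where the sum runs over isomorphism classes $H$ with at most $D$ edges and minimum degree at least two, and $N_n(H)\le n^{v(H)}$ counts the labeled copies of $H$.

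The key input is a bound on the Fourier coefficients of high-dimensional geometric graphs. I would establish or borrow (in the spirit of the analyses behind \cite{bubeck2016testing,liu2023phase}) a bound of the form
\begin{align*}
|\Phi_d(H)|\le \p{C\,|E(H)|\,\log d/\sqrt{d}}^{\,c(H)},
\end{align*}
where the exponent $c(H)$ is at least roughly $|E(H)|-v(H)+1$, at least one for a triangle, and $\Phi_d=O(d^{-1/2})$ for triangles. The more useful form is that $\Phi_d(H)^2\lesssim (\s{poly}(D)\log d/d)^{e(H)-v(H)+1}$ whenever $e(H)\ge v(H)$, plus a separate bound for cycles: cycles of length $\ell$ give $\Phi_d\approx d^{-(\ell-2)/2}$. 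I would prove this by writing $\sigma_{ij}$ through Gaussian approximations $\langle \bs{x}_i,\bs{x}_j\rangle\sqrt d\approx$ Gaussian and expanding in Hermite polynomials, then bounding via a spanning-tree/cycle-space argument. This step is the main obstacle. Controlling the coefficients uniformly over all $H$ with up to $D\approx\log n/\log\log n$ edges, including the $\log^3 d$ losses, is where the $D!$-type combinatorial factors force $D=O(\log n/\log\log n)$.

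Given these bounds, each term is at most
\begin{align*}
n^{v}(k/n)^{2v}(\s{poly}(D)/d)^{e-v+1}=(k^2/n)^{v}(\s{poly}(D)/d)^{e-v+1}.
\end{align*}
The two cases then go as follows.
\begin{enumerate}
\item In case 1), $k^2/n\le n^{-2\varepsilon}$, so every term decays polynomially in $v\ge3$. Even with $\Phi_d$ bounded only by $1$, the sum over at most $D^{O(D)}=n^{O(C)}$ shapes is $o(1)$ once $C$ is small.
\item In case 2), $k\ge\sqrt n$. The dangerous graphs are the sparse ones; for cycles the term is $(k^2/n)^{\ell}d^{-(\ell-2)}$. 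The triangle term is $k^6/(n^3d)\le n^{-\varepsilon}$, and longer cycles and denser graphs are dominated by powers of $\max\{k^6/(n^3d),\,k^2/(nd^{1/2})\}$. Using $d\ll k^2$ is not assumed, but $k^6\le n^{3-\varepsilon}d$ together with $k\le n$ gives $k^2/n\le (n^{-\varepsilon}d)^{1/3}$, which suffices to make each extra excess edge cost a factor $n^{-\Omega(\varepsilon)}$. A counting argument by excess $e-v+1$ then sums geometrically.
\end{enumerate}
This gives $\norm{\calL_{n,\le D}}_{\calH_0}=O(1)$, with the constant term $1$ coming from $\alpha=\emptyset$.

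For the converse, I would keep only $\alpha$ ranging over triangles, using $D\ge3$. The lower bound $|\Phi_d(\text{triangle})|\ge c_q/\sqrt d$ is known from \cite{bubeck2016testing}. The norm squared is then at least $\binom n3 (k/n)^6 c_q^2/d\asymp k^6/(n^3d)=\omega(1)$.
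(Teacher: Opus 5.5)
\textbf{Gap in case 2: the proposed Fourier-coefficient bound cannot close the sum.} Your Fourier--Walsh setup, the formula $\bE_{\calH_1}\chi_\alpha=\frac{(k)_v}{(n)_v}\Phi_d(\alpha)$, the leaf-vanishing argument, case 1 (the crude bound $|\Phi_d|\le((1-q)/q)^{|\alpha|/2}$, i.e.\ domination by planted clique) and the triangle lower bound all match the paper. The problem is the input you propose to ``establish or borrow.'' It is unproved: you flag it as the main obstacle, and the paper gets the analogous fact from a recent theorem of Bangachev--Bresler, not a routine Hermite computation. Worse, even if true, a bound whose exponent is the excess $e(H)-v(H)+1$ is too weak.

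In case 2, $k^2/n$ can be as large as $(n^{-\varepsilon}d)^{1/3}$. So each vertex of $H$ costs a factor up to $d^{1/3}$ in $N_n(H)(k/n)^{2v}$, and this must be repaid by a factor $d^{-1/3}$ in $\Phi_d(H)^2$. Subdividing an edge adds a vertex without changing the excess, so your bound gives no repayment outside the cycle case you handle separately. Concretely, take $k\asymp n$ and $d=n^{3+2\varepsilon}$, so that $\beta_n=\frac{k^6}{n^3d}\log^3 d\le n^{-\varepsilon}$. Let $H$ be an $\ell$-cycle with one chord: $v=\ell$, excess $2$, minimum degree $2$. Your bound on the total contribution of its copies is $n^{\ell}\,\tilde O(d^{-2})=\tilde O(n^{\ell-6-4\varepsilon})$, which diverges once $\ell\ge 7$, and such $H$ lie within degree $D\to\infty$. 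Theta graphs, dumbbells and cycles with chords make up most of the minimum-degree-two sum. The remark that ``each extra excess edge costs $n^{-\Omega(\varepsilon)}$'' says nothing about them.

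Separately, the excess exponent looks wrong for dense graphs too. For $K_5$, the term where every edge contributes its linear Gegenbauer component is already of order $d^{-2}$. This is because $\E\prod_{1\le i<j\le 5}\langle y_i,y_j\rangle$, for i.i.d.\ Gaussian $y_i$, is a cubic polynomial in $d$ with positive leading coefficient, coming from decompositions of $K_5$ into two triangles and a $4$-cycle. Excess $6$ would instead predict $\tilde O(d^{-3})$.

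\textbf{What is needed, and how the paper gets it.} You need decay that is geometric in the number of \emph{vertices}, roughly $|\Phi_d(H)|\lesssim d^{-v(H)/6}$ up to $\mathrm{poly}(v,m,\log d)$ factors per vertex. The paper imports this from Bangachev--Bresler: for connected $H$ with $v$ vertices and $m$ edges,
\[
|\Phi_d(H)|\le A^m\bigl(Bvm(\log d)^{3/2}/\sqrt d\bigr)^{\lceil (v-1)/2\rceil}.
\]
It then takes the product over connected components, which have independent latent vectors. Every component of a minimum-degree-two graph has at least three vertices, so $\sum_r\lceil (v_r-1)/2\rceil\ge (v-c)/2\ge v/3$. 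This gives
\[
\Phi_d(H)^2\le A^{2m}\bigl(B^2v^2m^2(\log d)^3/d\bigr)^{v/3}.
\]
Hence each class with $v$ vertices and $m$ edges contributes at most
\[
(en/v)^v(C_0v)^m(k/n)^{2v}\Phi_d^2\le\beta_n^{v/3}(B^2v^2m^2)^{v/3}(C_1v)^m .
\]
Summing over $3\le v\le m\le D$ yields $\calE_D\lesssim(C_3D^{2.5})^D\beta_n=n^{O(c)-\varepsilon}=o(1)$ for $D\le c\log n/\log\log n$. The triangle ($v=3$) is the extremal term and contributes exactly $\beta_n$, which is why $k^6/(n^3d)$ is the threshold.
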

Theorem~\ref{thm:gap} establishes boundedness of the low-degree likelihood ratio up to $D \le \frac{\log n}{\log\log n}$, which is near-logarithmic and thus provides evidence for computational hardness within this framework. This interpretation is consistent with analogous results in related planted problems (e.g., \cite{BangachevBresler2024FourierRGG}, \cite{elimelech2025detecting}, \cite{wein2025computational}). Together with Conjecture~\ref{conj:1}, Theorem~\ref{thm:gap} suggests that, if low-degree polynomials are taken as a proxy for efficient computation, then no polynomial-time algorithm can distinguish the null and alternative hypotheses in the regime $1\vee\frac{k^6}{n^3}\ll d\ll k^2$. Equivalently, low-degree polynomials fail whenever either the problem is planted-clique hard, or the geometric signal is sufficiently weak that even the signed-triangle statistic is ineffective. This conclusion is summarized in the following corollary.

These predictions align precisely with the statistical--computational tradeoffs identified above. A more explicit characterization of the computational barrier, exhibiting its dependence on $D$, can be extracted from the proof of Theorem~\ref{thm:gap}; for clarity of exposition, we have chosen to present the simplified formulation stated above. A key ingredient in the proof of Theorem~\ref{thm:gap} is a recent powerful result \cite[Thm. 1.1]{BangachevBresler2024FourierRGG}, which bounds centered subgraph moments (equivalently, Fourier coefficients) of spherical random geometric graphs by showing they decay polynomially in $d$. This decay, together with the elementary observation that tree-like subgraphs contribute nothing to the low-degree likelihood, ensures that only cyclic subgraphs contribute to the low-degree likelihood; among these their total contribution can be controlled when $\frac{k^6}{n^3}\ll d$.

\section{Proofs of Upper Bounds}

In this section, we prove Theorem~\ref{thm:upperBound}. We begin by analyzing the $\s{Type}$-$\s{I}$ and $\s{II}$ error probabilities of the signed triangle test, and then turn to the triangle scan test.

\subsection{Signed triangle test}

Recall the signed triangle test in \eqref{eqn:triangleTest}. To analyze its performance we use the second moment technique. To that end, we start by finding the first and second moments of $\s{T}_{\s{triangle}}(\s{G}_n)$ in \eqref{eqn:numberSigned}, under $\calH_0$ and $\calH_1$. Specifically, it is easy to show that \cite[Section 3.1]{bubeck2016testing}
    \begin{align}
        \bE_{\calH_0}\pp{\s{T}_{\s{triangle}}(\s{G}_n)} &= 0,\\
        \s{Var}_{\calH_0}\p{\s{T}_{\s{triangle}}(\s{G}_n)}&=
        \binom{n}{3} q^3 (1-q)^3\leq n^3.\label{eqn:H0Variance}
    \end{align}
    On the other hand, define $\calT_\calK \triangleq \left\{\{i,j,\ell\}\subset\binom{[n]}{3}:i,j,\ell\in\calK\right\}$, that is, the collection of all triples whose indices lie entirely within the planted set $\calK$. Then, observe that
    \begin{align}
        \bE_{\calH_1}\pp{\s{T}_{\s{triangle}}(\s{G}_n)}& =  \bE_{\calH_1}\pp{\sum_{\{i,j,\ell\}\subset\binom{[n]}{3}}\s{T}_{\s{G}_n}(i,j,\ell)}\\
        & = \sum_{\{i,j,\ell\}\in\calT_\calK}\bE_{\calH_1}\pp{\s{T}_{\s{G}_n}(i,j,\ell)}\\
        &\geq\binom{k}{3}\frac{\s{C}_q}{\sqrt{d}},
    \end{align}
    where the second equality follows from the fact that the expectation of a signed triangle $\s{T}_{\s{G}_n}(i,j,\ell)$ is zero unless all three indices $(i,j,\ell)$ belong to the planted set $\calK$. The inequality then follows from \cite[Lemma~3]{bubeck2016testing}, where $\s{C}_q$ denotes a constant depending only on $q$. Finally,
    \begin{align}
        \s{Var}_{\calH_1}\p{\s{T}_{\s{triangle}}(\s{G}_n)}& = \s{Var}_{\calH_1}\p{\sum_{\{i,j,\ell\}\subset\binom{[n]}{3}}\s{T}_{\s{G}_n}(i,j,\ell)}\\
        & \leq 2 \cdot\s{Var}_{\calH_1}\p{\sum_{\{i,j,\ell\}\in\calT_\calK}\s{T}_{\s{G}_n}(i,j,\ell)}+2\cdot\s{Var}_{\calH_1}\p{\sum_{\{i,j,\ell\}\in\calT_\calK^c}\s{T}_{\s{G}_n}(i,j,\ell)}\label{eqn:twoVarTerms}\\
        &\leq 2k^3+\frac{6k^4}{d}+2n^3,
    \end{align}
    where first inequality follows from the fact that $\s{Var}(X+Y)\leq 2\cdot\s{Var}(X)+2\cdot\s{Var}(Y)$, for any pair of random variables $(X,Y)$. The second inequality is obtained by bounding the first variance term in \eqref{eqn:twoVarTerms} using~\cite[eq.~(29)]{bubeck2016testing}, and the second
variance term using~\eqref{eqn:H0Variance}. Consequently, by Chebyshev's inequality, we obtain
    \begin{align}
        \P_{\calH_0}\p{\s{T}_{\s{triangle}}(\s{G}_n) \geq \frac{\bE_{\calH_1}\pp{\s{T}(\s{G}_n)}}{2}}\leq  \s{C}_1\cdot\frac{dn^3}{k^6},
    \end{align}
    for some constant $\s{C}_1>0$, and
    \begin{align}
        \P_{\calH_1}\p{\s{T}_{\s{triangle}}(\s{G}_n)<\frac{\bE_{\calH_1}\pp{\s{T}(\s{G}_n)}}{2}}\leq \s{C}_2\frac{dk^3+3k^4+dn^3}{k^6},
    \end{align}
    for some constant $\s{C}_2>0$. Therefore, we conclude that $\s{R}(\calA_{\s{triangle}})\to 0$ as $n\to\infty$ provided that $dn^3/k^6\to 0$ and $d/k^3\to 0$. Since the latter condition is implied by the former, this reduces to the requirement $dn^3/k^6\to 0$, as stated in item~\ref{thm:upperbound_triangle} of Theorem~\ref{thm:upperBound}.
%%%%%%%%%%%%%%%%

\subsection{Triangle scan test}

%\wasim{The analysis below is true only for the special case where $p=q$, right? Am I missing something; when you define the signed triangles, you subtract the mean w.r.t. $\calH_0$, right?}
%\mor{In the proof of Type $\s{II}$ error its just the bound of the variance of the number of signed triangles in $\calH_1$. As for the proof of Type $\s{I}$, I subtract the mean $\E\pp{A_{ij}}=q$ since this is $\calH_1$.}

We analyze the $\s{Type}$-$\s{I}$ and $\s{Type}$-$\s{II}$ error probabilities
associated with the triangle scan test defined in \eqref{eqn:scanSigned} and \eqref{eqn:scanTest}. We define $\tau_{\s{scan}} = \frac{1}{2}\binom{k}{3}\frac{\s{C}_q}{\sqrt{d}}$, and denote the underlying planted set by $\calK$. Note that
\begin{align}
    \zeta\triangleq\bE_{\calH_1}\pp{\sum_{\{i,j,\ell\}\subset\binom{\calK}{3}}\s{T}_{\s{G}_n}(i,j,\ell)}\geq \binom{k}{3}\frac{\s{C}_q}{\sqrt{d}},
\end{align}
where the inequality follows from~\cite[Lemma~3]{bubeck2016testing}, and $\s{C}_q$, as before, denotes a constant depending only on $q$. We now turn to the $\s{Type}$-$\s{II}$ error probability; applying Chebyshev's inequality yields
\begin{align}
    \P_{\calH_1}\p{\s{T}_{\s{scan}}(\s{G}_n)<\tau_{\s{scan}}}& = \P_{\calH_1}\p{\max_{\calS\subset[n]:\abs{\calS}=k}\sum_{\{i,j,\ell\}\subset\binom{\calS}{3}}\s{T}_{\s{G}_n}(i,j,\ell)<\tau_{\s{scan}}}\\
    &\leq \P_{\calH_1}\p{\sum_{\{i,j,\ell\}\subset\binom{\calK}{3}}\s{T}_{\s{G}_n}(i,j,\ell)<\tau_{\s{scan}}}\\
    &\leq \frac{\s{Var}_{\calH_1}\p{\sum_{\{i,j,\ell\}\subset\binom{\calK}{3}}\s{T}_{\s{G}_n}(i,j,\ell)}}{(\zeta-\tau_{\s{scan}})^2}\\
    &\leq \s{C}_3\frac{dk^3+3k^4}{k^6},
\end{align}
for some constant $\s{C}_3>0$, and the last inequality follows from \cite[eq. (29)]{bubeck2016testing}. Thus, we see that the $\s{Type}$-$\s{II}$ converges to zero if $d/k^3\to0$ and $k\to\infty$. 

Next, we analyze the $\s{Type}$-$\s{I}$ error probability. By the union bound, we have
\begin{align}
    \P_{\calH_0}\p{\s{T}_{\s{scan}}(\s{G}_n)\geq\tau_{\s{scan}}}\leq \binom{n}{k}\cdot\P_{\calH_0}\p{\sum_{\{i,j,\ell\}\subset\binom{\calK}{3}}\s{T}_{\s{G}_n}(i,j,\ell)\geq\tau_{\s{scan}}},\label{eqn:probabScan}
\end{align}
and thus, it remains to upper bound the probability term on the right-hand side of \eqref{eqn:probabScan}. The following result provides the required bound.
\begin{lemma}\label{lem:probabScanUpper}
Fix $\calK\subset[n]$ with $|\calK|=k\ge 3$. Let $\tau_{\s{scan}}$ be any threshold of the form $\tau_{\s{scan}}\geq c_0 k^3/\sqrt d$ for a constant $c_0=c_0(q)>0$. Then, for some $C=C(q)>0$ and $C'=C'(q)>0$,
\begin{align}
    \P_{\calH_0}\p{\sum_{\{i,j,\ell\}\subset\binom{\calK}{3}}\s{T}_{\s{G}_n}(i,j,\ell) \geq \tau_{\s{scan}}}\leq \exp\p{-C\frac{k^3}{d\alpha \log n}}+\exp\p{-\alpha C'k\log n},
    \label{eq:SignedBound}
\end{align}
for any $\alpha>0$.
\end{lemma}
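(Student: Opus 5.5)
Under $\calH_0$ the edges inside $\calK$ are i.i.d., so $\bar{\bs{A}}_{ij}$ are independent, centered, bounded by $1$, with variance $q(1-q)$. The sum
\begin{align*}
Z\triangleq\sum_{\{i,j,\ell\}\subset\binom{\calK}{3}}\bar{\bs{A}}_{ij}\bar{\bs{A}}_{i\ell}\bar{\bs{A}}_{j\ell}
\end{align*}
is a degree-three homogeneous chaos in these independent variables. Its variance is $\binom{k}{3}q^3(1-q)^3\asymp k^3$. The target deviation is $\tau\gtrsim k^3/\sqrt d$, i.e.\ $\tau^2/k^3\asymp k^3/d$. A single Gaussian-type tail cannot be expected, because degree-three chaoses have heavier tails. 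The plan is therefore to obtain a tail of the form $\exp(-c\tau^2/(k^3\cdot \text{polylog}))$ on a good event. The cost of the bad event is captured by the second term, with a free parameter $\alpha$ balancing the two. This is exactly the structure of \eqref{eq:SignedBound}.

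\textbf{Step 1: decoupling.} Using the de la Pe\~na--Montgomery-Smith decoupling inequality for U-statistics (cited in the introduction), I compare the tails of $Z$ with those of the decoupled form
\begin{align*}
\tilde Z=\sum_{i,j,\ell}\bar{\bs{A}}^{(1)}_{ij}\bar{\bs{A}}^{(2)}_{i\ell}\bar{\bs{A}}^{(3)}_{j\ell},
\end{align*}
built from three independent copies, up to universal constants in both probability and threshold. The sum is arranged by viewing triangles through an ordered labelling so that each edge variable plays a fixed role.

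\textbf{Step 2: conditioning.} Conditionally on copies $(2)$ and $(3)$, $\tilde Z=\sum_{i<j}\bar{\bs{A}}^{(1)}_{ij}W_{ij}$ is a sum of independent bounded terms, where $W_{ij}=\sum_\ell \bar{\bs{A}}^{(2)}_{i\ell}\bar{\bs{A}}^{(3)}_{j\ell}$. Hoeffding then gives a conditional tail $\exp(-c\tau^2/\sum_{ij}W_{ij}^2)$.

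\textbf{Step 3: the good event.} Each $W_{ij}$ is itself a sum of $k$ independent bounded centered products. By Hoeffding and a union bound over pairs, $\max_{ij}|W_{ij}|^2\le C\alpha k\log n$ fails with probability at most $k^2 e^{-c\alpha\log n}$. I realise that this yields only $n^{-c\alpha}$ rather than $e^{-\alpha k\log n}$. So the better route is a bound on the whole quadratic form rather than on the entries. Note that $\sum_{ij}W_{ij}^2=\|\bar{\bs{A}}^{(2)}\bar{\bs{A}}^{(3)}\|_F^2\le k\,\|\bar{\bs{A}}^{(2)}\|_{\s{op}}^2\|\bar{\bs{A}}^{(3)}\|_{\s{op}}^2$. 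Standard operator-norm concentration for Wigner-type matrices with bounded entries, via Talagrand/convex Lipschitz concentration, gives
\begin{align*}
\P\p{\|\bar{\bs{A}}^{(m)}\|_{\s{op}}\ge C\sqrt{\alpha k\log n}}\le e^{-c\alpha k\log n},
\end{align*}
since the deviation $t$ above $C\sqrt k$ has tail $e^{-ct^2}$. With $\alpha\gtrsim 1/\log n$ this gives $\sum W_{ij}^2\le C k^3\alpha\log n$ off an event of probability $e^{-C'\alpha k\log n}$. Here I may need $\alpha$ bounded below to absorb $\sqrt k$ into $\sqrt{\alpha k\log n}$; for smaller $\alpha$ the claim can be handled by adjusting the constants, or is trivial. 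On the good event, the conditional Hoeffding bound with $\tau\ge c_0k^3/\sqrt d$ yields $\exp(-C\tau^2/(k^3\alpha\log n))\le\exp(-Ck^3/(d\alpha\log n))$, after absorbing one $\alpha$ into constants as appropriate. Combining the two contributions and undoing the decoupling gives \eqref{eq:SignedBound}.

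\textbf{Main obstacle.} The delicate step is Step 3 together with the precise decoupling bookkeeping. Two things must hold: the conditional variance proxy $\sum W_{ij}^2$ must be controlled at scale $k^3\cdot\text{polylog}$ with failure probability exponentially small in $k\log n$, which is what the union bound over $\binom nk$ sets requires; and the triangle sum must be cast correctly as a decoupled multilinear form, with diagonal terms handled. If the operator-norm route gives the wrong power of $\alpha$, I would instead apply Hanson--Wright to $\sum W_{ij}^2$ conditionally on copy $(3)$, since it is a quadratic form in copy $(2)$.
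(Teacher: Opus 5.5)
Your architecture matches the paper's: decouple into three copies, apply Hoeffding to copy $(1)$ given copies $(2),(3)$, and control $\sum W_{ij}^2$ with an operator-norm bound. But Step 3 has an arithmetic error that breaks the stated bound. You combine $\sum_{i,j}W_{ij}^2=\|\bar{\bs{A}}^{(2)}\bar{\bs{A}}^{(3)}\|_F^2\le k\,\|\bar{\bs{A}}^{(2)}\|_{\s{op}}^2\|\bar{\bs{A}}^{(3)}\|_{\s{op}}^2$ with $\|\bar{\bs{A}}^{(m)}\|_{\s{op}}\le C\sqrt{\alpha k\log n}$ for $m=2,3$. That gives
\begin{align*}
\sum_{i,j}W_{ij}^2\le C^4\,k\,(\alpha k\log n)^2=C^4\,\alpha^2k^3\log^2 n,
\end{align*}
not $Ck^3\alpha\log n$. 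Conditional Hoeffding then yields only $\exp(-ck^3/(d\alpha^2\log^2 n))$. Since $\alpha$ is a free parameter and $\log n$ is not a constant, the extra factor $\alpha\log n$ cannot be ``absorbed into constants''. So the lemma as stated does not follow. In the union bound over $\binom{n}{k}$ sets, you would need $k^2/d\gtrsim\log^3 n$ rather than the $\log^2 n$ of Theorem~\ref{thm:upperBound}.

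The fix is to spend only one operator norm. Use $\|\bar{\bs{A}}^{(2)}\bar{\bs{A}}^{(3)}\|_F\le\|\bar{\bs{A}}^{(2)}\|_{\s{op}}\|\bar{\bs{A}}^{(3)}\|_F$ together with the deterministic bound $\|\bar{\bs{A}}^{(3)}\|_F^2\le k^2$, which holds because the entries lie in $[-q,1-q]$. This gives $\sum W_{ij}^2\le C\alpha k^3\log n$ outside a single event of probability at most $e^{-c\alpha k\log n}$. That is exactly the mechanism of the paper's Lemma~\ref{lem:energy}, which uses $\|\bs{B}^2\|_F^2\le\|\bs{B}\|_{\s{op}}^2\|\bs{B}\|_F^2$; its Bernstein step for $\|\bs{B}\|_F^2$ is unnecessary given the deterministic bound. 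Your Hanson--Wright fallback would also work but is not needed.

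With this repair your proof is sound, and in one respect cleaner than the paper's. You decouple at the level of tails and place the good event on copies $(2),(3)$. The paper's Proposition~\ref{prop:tail} instead restricts the moment generating function to an energy event of the original graph, which the MGF comparison of Lemma~\ref{lem:mgf-decoupled} does not by itself justify. Your caveat that $\alpha\log n$ must be bounded below is shared by the paper's operator-norm step. It is harmless, since the application fixes $\alpha$ to be a constant.
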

Assuming the validity of Lemma~\ref{lem:probabScanUpper}, we can bound \eqref{eqn:probabScan} as follows:
\begin{align}
    \P_{\calH_0}\p{\s{T}_{\s{scan}}(\s{G}_n)\geq\tau_{\s{scan}}}&\leq \binom{n}{k} \cdot \pp{\exp\p{-C\frac{k^3}{\alpha d\log n}}+\exp\p{-C'k\log n}}\\
    &\leq \exp\pp{k\cdot\p{\log n-C\frac{k^2}{\alpha d\log n}}}+\exp\p{-(\alpha C'-1)\log n}.\label{eqn:upperTailFinal}
\end{align}
Thus, we see that the first term at the right-hand side of \eqref{eqn:upperTailFinal} converges to zero provided that $k^2 > \frac{\alpha d\log^2 n}{C}$ and $k\to\infty$, while the second term converges to zero by taking any fixed constant $\alpha>1/C'$, and $k\to\infty$. It therefore remains to establish Lemma~\ref{lem:probabScanUpper}. To this end, we prove a sequence of auxiliary results, beginning with the following lemma.
\begin{lemma}\label{lem:mgf-decoupled}
Fix $\calK\subset[n]$ with $|\calK|=k\ge 3$, and let $\s{G}\sim \calG(n,q)$ under $\calH_0$. For an edge $e=(i,j)\subset \binom{\calK}{2}$, set
\begin{align}
\bar{\mathbf{A}}_{e}&\triangleq \mathbf{A}_{e}-q,\\
W_e&\triangleq \sum_{\ell\in \calK\setminus\{i,j\}} \bar{\mathbf{A}}_{i\ell} \bar{\mathbf{A}}_{j\ell},\\
\s{T}_{\calK}(\s{G})&\triangleq \sum_{\{i,j,\ell\}\subset \binom{\calK}{3}}\bar{\mathbf{A}}_{ij} \bar{\mathbf{A}}_{i\ell} \bar{\mathbf{A}}_{j\ell}
=\frac{1}{3}\sum_{e\subset \binom{\calK}{2}}\bar{\mathbf{A}}_e W_e.
\end{align}
There exists an absolute constant $C_{\mathrm{dec}}\in(0,\infty)$ (one may take $C_{\mathrm{dec}}=8$) such that for every $\theta\in\mathbb R$,
\begin{align}\label{eq:mgf-main}
\bE\pp{\exp\p{\theta \s{T}_{\calK}(\s{G})}}\leq\bE\pp{\exp\p{\frac{C_{\mathrm{dec}}^2 \theta^2}{72}\sum_{e\subset \binom{\calK}{2}} W_e^2}}.
\end{align}
\end{lemma}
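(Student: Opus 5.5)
Our approach rests on the decoupling inequality for U-statistics of de la Peña and Montgomery-Smith, followed by a conditional sub-Gaussian bound. Since $\s{T}_\calK(\s{G})$ is a homogeneous degree-three chaos in the independent centered variables $\{\bar{\mathbf{A}}_e\}_{e\in\binom{\calK}{2}}$, we first reduce to a partially decoupled version. Let $\{\bar{\mathbf{A}}'_e\}$ be an independent copy of $\{\bar{\mathbf{A}}_e\}$, and write $\s{T}_\calK=\frac13\sum_e \bar{\mathbf{A}}_e W_e$. Each $W_e$ is a function only of edges other than $e$. We regard $\s{T}_\calK$ as a sum over triangles of products of three distinct edge variables. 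This is a tetrahedral (multilinear) polynomial of degree $3$ with a symmetric kernel indexed by distinct edges.

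\textbf{Step 1 (decoupling).} We invoke the convex-function decoupling inequality \cite{de_la_pena_montgomery_smith_1995,delapena1999decoupling}, applied with $\Phi(x)=e^{\theta x}$. It compares $\bE\Phi(\s{T}_\calK)$ with $\bE\Phi(C\,\tilde{\s{T}}_\calK)$, where $\tilde{\s{T}}_\calK=\frac13\sum_e \bar{\mathbf{A}}'_e W_e$: one edge variable in each product is replaced by its independent copy, and the other two are kept. The decoupling goes only one level, separating the distinguished edge $e$ from the pair in $W_e$. This is legitimate because $\s{T}_\calK=\frac13\sum_e \bar{\mathbf{A}}_e W_e$ can be written as a sum over ordered pairs (edge, wedge) of disjoint edge sets. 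We then apply the decoupling inequality in the version for multilinear forms in independent blocks, where the blocks are $\bar{\mathbf{A}}_e$ versus the wedge product. The constant $C_{\mathrm{dec}}$ enters as $\bE e^{\theta\s{T}_\calK}\le \bE e^{C_{\mathrm{dec}}\theta\tilde{\s{T}}_\calK}$, and one can track it to be $8$.

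\textbf{Step 2 (conditional Hoeffding).} We condition on the original variables $\{\bar{\mathbf{A}}_e\}$, so that every $W_e$ is fixed. Then $\tilde{\s{T}}_\calK=\frac13\sum_e W_e\bar{\mathbf{A}}'_e$ is a weighted sum of independent centered variables. Each $\bar{\mathbf{A}}'_e\in[-q,1-q]$ lies in an interval of length $1$, so Hoeffding's lemma gives $\bE e^{\lambda\bar{\mathbf{A}}'_e}\le e^{\lambda^2/8}$. With $\lambda=C_{\mathrm{dec}}\theta W_e/3$ this yields
\begin{align*}
\bE\pp{e^{C_{\mathrm{dec}}\theta\tilde{\s{T}}_\calK}\,\middle|\,\bar{\mathbf{A}}}\le \exp\p{\frac{C_{\mathrm{dec}}^2\theta^2}{72}\sum_e W_e^2}.
\end{align*}
The constant checks out, since $\frac{1}{9}\cdot\frac18=\frac1{72}$. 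Taking the expectation over $\bar{\mathbf{A}}$ then gives \eqref{eq:mgf-main}.

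\textbf{Main obstacle.} The delicate point is Step 1. The decoupling results in the literature are stated for sums $\sum f_{i_1\dots i_m}(X_{i_1},\dots,X_{i_m})$ over distinct indices with full decoupling into $m$ copies. Here, by contrast, we want the partial split in which only one copy is introduced. Moreover, the wedge $W_e$ must be a function of the original variables only, and it must not involve $e$ itself. We would handle this by viewing $\s{T}_\calK$ as a degree-$2$ U-statistic in the independent variables $(\bar{\mathbf{A}}_e)$ with the kernel $h(e,f)$, or more precisely by applying the full $3$-level decoupling. Concretely, we would first fully decouple into three independent copies, $\sum \bar{\mathbf{A}}^{(1)}_{ij}\bar{\mathbf{A}}^{(2)}_{i\ell}\bar{\mathbf{A}}^{(3)}_{j\ell}$. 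We would then use Jensen's inequality together with the symmetry of the decoupled form to recombine copies $2$ and $3$ into a single copy. This recombination should be an inequality in the right direction for convex $\Phi$, via the converse decoupling bound. If that recombination proves awkward, the fallback is to keep three copies. In that case we condition on copies $2$ and $3$ and obtain the same bound with $W_e$ built from two independent copies; by the converse decoupling inequality applied to $\exp(c\sum_e W_e^2)$, this is comparable to the stated form up to the absolute constant.
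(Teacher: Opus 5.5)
Your argument is correct, and it differs from the paper's at the decoupling step. The paper decouples fully into three independent arrays, conditions on $\bar{\mathbf{A}}^{(2)},\bar{\mathbf{A}}^{(3)}$, and applies Hoeffding to $\bar{\mathbf{A}}^{(1)}$. This gives $\bE\exp\p{\frac{C_{\mathrm{dec}}^2\theta^2}{72}\sum_e S_e^2}$ with $S_e=\sum_{\ell}\bar{\mathbf{A}}^{(2)}_{i\ell}\bar{\mathbf{A}}^{(3)}_{j\ell}$, and the paper then identifies $\sum_e S_e^2$ with $\sum_e W_e^2$ in law.

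That route buys an off-the-shelf citation, since full decoupling is the textbook theorem. The identification, however, is its weak spot: each $S_e$ has the law of $W_e$, but the joint laws differ. Already for $k=3$, $\sum_e W_e^2$ is a function of three independent edge variables, while $\sum_e S_e^2$ depends on at least four for every choice of orientations. So for $q\neq\frac12$, the probability that the energy attains its maximum is $\min(q,1-q)^3$ in one case and at most $\min(q,1-q)^4$ in the other.

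As written, the paper's argument therefore proves the bound with the decoupled energy, not literally \eqref{eq:mgf-main}. This is harmless downstream: $\sum_e S_e^2\le\|\mathbf{B}^{(2)}\|_{\mathrm{op}}^2\|\mathbf{B}^{(3)}\|_F^2$, where $\mathbf{B}^{(r)}$ is the centered adjacency matrix of copy $r$ on $\calK$, and this is controlled as in Lemma~\ref{lem:energy}. Your one-level decoupling keeps the wedge in the original array. Conditioning on $\bar{\mathbf{A}}$ therefore lands exactly on $\sum_e W_e^2$, with $C_{\mathrm{dec}}=8$.

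The one thing to tighten is Step~1. You justify it only by an unspecified ``block'' decoupling and then try to reach it through full decoupling plus recombination. Drop that detour: recombining copies $2$ and $3$ needs a reverse inequality for a symmetrized kernel and loses the constant. Your fallback is also not an instance of converse decoupling, because $\sum_e W_e^2$ contains the non-centered terms $\bar{\mathbf{A}}_{i\ell}^2\bar{\mathbf{A}}_{j\ell}^2$.

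The selector argument gives Step~1 directly. Let $\delta_e$ be i.i.d.\ $\s{Bern}(1/2)$ over $e\in\binom{\calK}{2}$, and let $I=\{e:\delta_e=1\}$. A triangle has exactly one edge in $I$ with probability $3/8$, so $\s{T}_\calK=\frac83\bE_\delta R_\delta$, where $R_\delta\triangleq\sum_{e=\{i,j\}\in I}\bar{\mathbf{A}}_e\sum_{\ell\in\calK\setminus\{i,j\}:\ \{i,\ell\},\{j,\ell\}\notin I}\bar{\mathbf{A}}_{i\ell}\bar{\mathbf{A}}_{j\ell}$. Apply Jensen in $\delta$. For fixed $\delta$, replacing $\bar{\mathbf{A}}_I$ by $\bar{\mathbf{A}}'_I$ does not change the law of $R_\delta$. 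By independence, centering, and multilinearity, $R_\delta(\bar{\mathbf{A}}'_I,\bar{\mathbf{A}}_{I^c})=\bE\bigl[\sum_e\bar{\mathbf{A}}'_eW_e\bigm|\bar{\mathbf{A}}'_I,\bar{\mathbf{A}}_{I^c}\bigr]$. Conditional Jensen then gives $\bE\Phi(\s{T}_\calK)\le\bE\Phi\bigl(\tfrac83\cdot 3\tilde{\s{T}}_\calK\bigr)=\bE\Phi(8\tilde{\s{T}}_\calK)$ for every convex $\Phi$. No monotonicity is used, so $\theta<0$ is covered too. Your Step~2 then finishes the proof exactly as written.
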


\begin{proof}[Proof of Lemma~\ref{lem:mgf-decoupled}]
Introduce three \emph{independent} copies of the edges $\{\mathbf{A}_{ij}\}$ under $\calH_0$, denoted by $\{\mathbf{A}^{(1)}_{ij}\}, \{\mathbf{A}^{(2)}_{ij}\}, \{\mathbf{A}^{(3)}_{ij}\}$, each distributed as $G(n,q)$, and write $\bar{\mathbf{A}}^{(r)}_{ij}\triangleq \mathbf{A}^{(r)}_{ij}-q$. Define the fully decoupled statistic
\begin{align}
\s{T}_{\calK}^{\mathrm{dec}}(\s{G})
 \triangleq  
\sum_{\{i,j,\ell\}\subset \binom{\calK}{3}}\bar{\mathbf{A}}^{(1)}_{ij} \bar{\mathbf{A}}^{(2)}_{i\ell} \bar{\mathbf{A}}^{(3)}_{j\ell}.
\end{align}
We use the following classical decoupling inequality for \emph{canonical} $U$-statistics of order $3$. We start with the following definition.
\begin{definition}
Let $X_1,X_2,\dots$ be i.i.d. random variables with values in a measurable space $(\mathcal X,\mathcal A)$.  For an integer $m\geq 1$, let $h:\mathcal X^m\to\mathbb R$ be a symmetric
measurable function (the kernel). The kernel $h$ is said to be \emph{canonical} if
\begin{align}\label{eq:canonical}
\bE\pp{\left.h(X_1,\dots,X_m)\right|X_r}=0
\qquad \text{almost surely for every } r=1,\dots,m.
\end{align}
A U-statistic of order $m$ built from a canonical kernel is called a \emph{canonical U-statistic}.
\end{definition}
\begin{lemma}[Decoupling for U-statistics {\cite[Thm.~3.1.1]{delapena1999decoupling}}]\label{thm:decoupling}
Let $U$ be a canonical U-statistic of order $m$ built from i.i.d.\ inputs.
Let $U^{\mathrm{dec}}$ denote the corresponding fully decoupled version,
constructed by evaluating the kernel on $m$ independent copies of the input sequence.
Then there exists a constant $C_m$ depending only on the order $m$ such that,
for every convex nondecreasing function $\Phi:\mathbb{R}\to\mathbb{R}$,
\begin{align}
\bE \pp{\Phi(U)} \leq \bE \pp{\Phi\p{C_m U^{\mathrm{dec}}}}.
\end{align}
One may take $C_m=2^{ m-1}$.
\end{lemma}
Applying this lemma with $\Phi(x)=e^{\theta x}$ and $U=\s{T}_{\calK}(\s{G})$ we get
\begin{align}\label{eq:decouple}
\bE\pp{\exp\p{\theta \s{T}_{\calK}(\s{G})}}\leq\bE\pp{\exp\p{{C_{\mathrm{dec}}\theta \s{T}_{\calK}^{\mathrm{dec}}(\s{G})}}},
\end{align}
where $C_{\mathrm{dec}}$ is an absolute constant. Now, for each edge $e=(i,j)$, define the \emph{decoupled coefficient}
\begin{align}
S_e\triangleq\sum_{\ell\in K\setminus\{i,j\}} \bar{\mathbf{A}}^{(2)}_{i\ell} \bar{\mathbf{A}}^{(3)}_{j\ell}.
\end{align}
Then we can rewrite
\begin{align}
\s{T}_{\calK}^{\mathrm{dec}}(\s{G})=\frac{1}{3}\sum_{e\subset \binom{\calK}{2}}\bar{\mathbf{A}}^{(1)}_e S_e.
\end{align}
Crucially, conditionally on the arrays $\{\mathbf{A}^{(2)}\},\{\mathbf{A}^{(3)}\}$, the family $\{\bar{\mathbf{A}}^{(1)}_e:e\subset \binom{\calK}{2}\}$ is independent, mean-zero, and each $\bar{\mathbf{A}}^{(1)}_e\in[-q,1-q]$, while the coefficients $\{S_e\}$ are deterministic constants. Therefore, by Hoeffding's lemma\footnote{For mean-zero $X\in[a,b]$: $\bE e^{\lambda X}\le\exp\{\lambda^2(b-a)^2/8\}$.} applied edgewise and then multiplied,
\begin{align}
\bE\pp{\left.\exp\p{\frac{C_{\mathrm{dec}}\theta}{3}\sum_{e\subset \binom{\calK}{2}} \bar{\mathbf{A}}^{(1)}_e S_e}\right| \mathbf{A}^{(2)},\mathbf{A}^{(3)}}\leq \exp\p{\frac{(C_{\mathrm{dec}}\theta)^2}{72}\sum_{e\subset \binom{\calK}{2}} S_e^2}.
\end{align}
Taking expectation over $(\mathbf{A}^{(2)},\mathbf{A}^{(3)})$ and using \eqref{eq:decouple},
\begin{align}\label{eq:mgf-dec-S}
\bE\pp{\exp\p{\theta \s{T}_{\calK}(\s{G})}}\leq \bE\pp{\exp\p{\frac{C_{\mathrm{dec}}^2 \theta^2}{72}\sum_{e\subset \binom{\calK}{2}} S_e^2}}.
\end{align}
For fixed $e=(i,j)$, the random sum $S_e=\sum_{\ell\ne i,j}\bar{\mathbf{A}}^{(2)}_{i\ell} \bar{\mathbf{A}}^{(3)}_{j\ell}$ is a sum of independent products of independent, centered, bounded Bernoulli variables; hence $S_e$ has the \emph{same distribution} as $W_e=\sum_{\ell\ne i,j}\bar{\mathbf{A}}_{i\ell} \bar{\mathbf{A}}_{j\ell}$ (the only difference is that the two factors in each product come from independent copies, which does not change the one-dimensional law, since $\bar{\mathbf{A}}_{i\ell}$ and $\bar{\mathbf{A}}_{j\ell}$ are independent already). Consequently, the random vectors $(S_e)_{e\subset \binom{\calK}{2}}$ and $(W_e)_{e\subset \binom{\calK}{2}}$ have the same joint law under the product Erd\H{o}s--R\'enyi measure on arrays $\{\mathbf{A}^{(2)}\}\times\{\mathbf{A}^{(3)}\}$ and on $\{\mathbf{A}\}$, respectively, \emph{up to relabeling of underlying independent coordinates}. In particular,
\begin{align}
\sum_{e\subset \binom{\calK}{2}} S_e^2\stackrel{d}{=}\sum_{e\subset \binom{\calK}{2}} W_e^2,
\end{align}
and the right-hand sides of \eqref{eq:mgf-dec-S} and \eqref{eq:mgf-main} coincide in distribution. This proves \eqref{eq:mgf-main}.
\end{proof}
Next, we have the following high probability upper bound on $\sum_{e\subset \binom{\calK}{2}} W_e^2$.
\begin{lemma}\label{lem:energy}
Let $\calK\subset[n]$ with $|\calK|=k\ge 3$, and for each $e=(i,j)\subset \binom{\calK}{2}$ define
\begin{align}
\s{W}_{e}=\sum_{\ell\in \calK\setminus\{i,j\}}({\mathbf{A}}_{i\ell}-q)({\mathbf{A}}_{j\ell}-q).
\end{align}
There exist constants $C=C(q)>0$ $C=C'(q)>0$ such that for sufficiently large $n$ and $k=\Omega(\log n)$
\begin{align}
\P\p{\sum_{e\in\binom{\calK}{2}} W_e^2> \alpha Ck^3\log n}\leq 3\exp\p{-\alpha C'k\log n},\label{eq:energy}
\end{align}
for any $\alpha>0$.
\end{lemma}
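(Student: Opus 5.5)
Write $\bar{\mathbf A}_{i\ell}=\mathbf A_{i\ell}-q$ and let $\mathbf M$ be the $k\times k$ symmetric matrix $\bar{\mathbf A}$ restricted to $\calK$, with zero diagonal. Then for $e=(i,j)$ we have $W_e=(\mathbf M^2)_{ij}$, because the diagonal of $\mathbf M$ vanishes and so the terms $\ell=i,j$ contribute nothing. Hence
\begin{align}
\sum_{e\in\binom{\calK}{2}}W_e^2\le \tfrac12\norm{\mathbf M^2}_F^2\le \tfrac12\, k\,\norm{\mathbf M}_{\s{op}}^4 .
\end{align}
The natural plan is therefore to control $\norm{\mathbf M}_{\s{op}}$. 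The entries of $\mathbf M$ above the diagonal are independent, centered, and bounded in $[-q,1-q]$. A standard $\varepsilon$-net argument on $\S^{k-1}$ with Hoeffding's inequality gives
\begin{align}
\P\p{\norm{\mathbf M}_{\s{op}}>t\sqrt k}\le e^{-ck t^2}
\end{align}
for $t$ larger than a constant $t_0(q)$. Alternatively, one can use the bounded-entries concentration of $\norm{\mathbf M}_{\s{op}}$ (Talagrand, since it is convex and $1$-Lipschitz in Frobenius norm) around its mean $O(\sqrt k)$.

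This crude bound yields $\sum W_e^2\lesssim k^3 t^4$ with failure probability $e^{-ckt^2}$. To reach the target tail $e^{-\alpha C'k\log n}$ we would need $t^2=\alpha\log n$, and that only gives $\sum W_e^2\lesssim \alpha^2k^3\log^2 n$, which is too large by a factor $\alpha\log n$. So the operator-norm route has to be refined, and I would split the quantity into a typical part and a deviation part. Write $\norm{\mathbf M^2}_F^2=\sum_{i,j}\big((\mathbf M^2)_{ij}\big)^2$ and use the eigen-decomposition: $\norm{\mathbf M^2}_F^2=\sum_r\lambda_r^4\le \norm{\mathbf M}_{\s{op}}^2\norm{\mathbf M}_F^2$. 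Since $\norm{\mathbf M}_F^2\le k^2$ deterministically, we get
\begin{align}
\sum W_e^2\le \tfrac12 k^2\norm{\mathbf M}_{\s{op}}^2 .
\end{align}
Now take $t^2=\alpha C''\log n$ in the operator-norm tail. This gives $\norm{\mathbf M}_{\s{op}}^2\le \alpha C'' k\log n$ outside an event of probability $e^{-c\alpha C''k\log n}$, and therefore $\sum W_e^2\le \alpha C k^3\log n$, which is exactly \eqref{eq:energy}. The hypotheses $k=\Omega(\log n)$ and $n$ large are used to ensure $t\ge t_0$ when $\alpha\log n$ is not too small. For small $\alpha$ with $\alpha\log n<t_0^2$ the claimed bound exceeds $1$ after adjusting constants, or is handled by the constant-level tail. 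The factor $3$ absorbs the net cardinality and the cases: the net has size $9^k=e^{k\log 9}$, which is dominated once $\alpha\log n$ exceeds a constant.

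The key steps, in order, are:
\begin{enumerate}
\item Establish the identity $W_e=(\mathbf M^2)_{ij}$ and the deterministic bound $\sum_e W_e^2\le \tfrac12\norm{\mathbf M}_{\s{op}}^2\norm{\mathbf M}_F^2\le \tfrac12k^2\norm{\mathbf M}_{\s{op}}^2$.
\item Prove the subgaussian operator-norm tail for $\mathbf M$ via a $1/4$-net and Hoeffding's inequality applied to $x^T\mathbf Mx=2\sum_{i<j}x_ix_j\bar{\mathbf A}_{ij}$, whose variance proxy is at most $\sum_{i<j}4x_i^2x_j^2/4\le 1/2$.
\item Choose $t^2\asymp \alpha\log n$ and absorb the net entropy $k\log 9$ into the exponent.
\end{enumerate}

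The main obstacle is that the exponential rate must be linear in $k\log n$ while the threshold is only $k^3\log n$. That is why I would avoid the quartic bound $k\norm{\mathbf M}_{\s{op}}^4$ and instead use $\norm{\mathbf M^2}_F^2\le\norm{\mathbf M}_{\s{op}}^2\norm{\mathbf M}_F^2$, which exploits the deterministic Frobenius bound. The remaining care is in the net constant: the tail must beat $9^k$, so $\alpha\log n$ has to exceed a constant depending on $q$. I would handle the complementary range by enlarging $C$ so that the event is trivially empty or the bound is vacuous.
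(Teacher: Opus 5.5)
Your argument is essentially the paper's. Both use the identity $W_{ij}=(\mathbf M^2)_{ij}$, the bound $\sum_e W_e^2\le\frac12\norm{\mathbf M}_{\s{op}}^2\norm{\mathbf M}_F^2$, and an operator-norm tail at level $\sqrt{\alpha k\log n}$; the paper cites that tail from Vershynin, and your $1/4$-net plus Hoeffding argument is the standard proof of it.

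The one real difference is the Frobenius term. You bound $\norm{\mathbf M}_F^2\le k^2$ deterministically. The paper instead uses Bernstein to get $\norm{\mathbf M}_F^2\le 2q(1-q)k^2$ outside an event of probability $e^{-cq(1-q)k^2}$. It then invokes $k=\Omega(\log n)$ to absorb that event into $e^{-C'\alpha k\log n}$, which actually requires $k\gtrsim\alpha\log n$. Your version loses only a $q$-dependent constant and makes the hypothesis $k=\Omega(\log n)$ unnecessary. Note that this hypothesis plays no role in ensuring $t\ge t_0$, contrary to what you say, since $t$ does not depend on $k$.

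One remark in your proposal is wrong. When $\alpha\log n$ is below a constant, the right-hand side $3e^{-\alpha C'k\log n}$ does not exceed $1$: it is $e^{-\Theta(k)}$ unless $\alpha\log n\lesssim 1/k$. No constant-level tail rescues this range either. The sum $\sum_e W_e^2$ has mean $\binom{k}{2}(k-2)q^2(1-q)^2\approx q^2(1-q)^2k^3/2$ and fluctuations of order $k^2$. So once $\alpha\log n$ is a sufficiently small constant, the left-hand side tends to $1$, and the inequality is false there.

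The lemma therefore has to be read with $\alpha\log n\ge c_0(q)$, for instance $\alpha$ fixed and $n\ge n_0(\alpha)$. That is how it is used in the scan-test proof, where $\alpha>1/C'$ is fixed, and under that reading your condition $t\ge t_0$ holds automatically. The paper's proof carries the same implicit restriction, since the cited operator-norm bound is really $\norm{\mathbf M}_{\s{op}}\lesssim\sqrt k+\sqrt{\alpha k\log n}$.
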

\begin{proof}[Proof of Lemma~\ref{lem:energy}]
%Recall that $\bar\mathbf{A}_{ij}=\mathbf{A}_{ij}-q$. 
For simplicity of notations, let $\mathbf{B}\in\mathbb R^{k\times k}$ be the symmetric matrix indexed by $\calK$ with 
\begin{align}
\mathbf{B}_{ij}\triangleq\mathbf{A}_{ij}-q,
\end{align}
for $i\neq j$ and $\mathbf{B}_{ij}=0$, for $i=j$. Now for $i\neq j$,
\begin{align}
[\mathbf{B}^2]_{ij}=\sum_{m\in \calK}\mathbf{B}_{im}\mathbf{B}_{mj}.
\end{align}
Since $\mathbf{B}_{ii}=\mathbf{B}_{jj}=0$ and $\mathbf{B}_{mj}=\mathbf{B}_{jm}$, this becomes
\begin{align}
[\mathbf{B}^2]_{ij}=\sum_{m\in \calK\setminus\{i,j\}}(\mathbf{A}_{im}-q)(\mathbf{A}_{jm}-q)=W_{ij}.
\end{align}
So $W_{ij}=[\mathbf{B}^2]_{ij}$ for all $i\neq j$. Therefore the ``energy"
\begin{align}
\calE \triangleq \sum_{e\subset \binom{\calK}{2}}W_e^2
\end{align}
satisfies
\begin{align}
\calE = \sum_{i<j}([\mathbf{B}^2]_{ij})^2= \frac{1}{2}\sum_{i\neq j}([\mathbf{B}^2]_{ij})^2
\leq \frac12\sum_{i,j}([\mathbf{B}^2]_{ij})^2
=\frac{1}{2}\|\mathbf{B}^2\|_F^2.
\end{align}
But $\|\mathbf{B}^2\|_F^2 = \s{trace}([\mathbf{B}^2]^\top[\mathbf{B}^2])=\s{trace}(\mathbf{B}^4)$ since $\mathbf{B}$ is symmetric. Hence
\begin{align}
\calE \leq \frac{1}{2}\s{trace}(\mathbf{B}^4)=\frac{1}{2}\|\mathbf{B}^2\|_F^2\leq \frac{1}{2}\|\mathbf{B}\|_{\mathrm{op}}^2\,\|\mathbf{B}\|_F^2,\label{eqn:Energgy}
\end{align}
where we use the Frobenius-operator norm inequality.

Next, we derive high probability upper bounds for the above Frobenius and norm operators, starting with the former. Note that for $i\neq j$,
\begin{align}
\mathbf{B}_{ij}\in[-q,1-q]\subset[-1,1],\quad \bE[\mathbf{B}_{ij}]=0,\quad \bE[\mathbf{B}_{ij}^2]=\mu,
\end{align}
where $\mu\triangleq q(1-q)$. We have
\begin{align}
\|\mathbf{B}\|_F^2 = \sum_{i,j}\mathbf{B}_{ij}^2 = 2\sum_{1\leq i<j\leq k}\mathbf{B}_{ij}^2
\triangleq 2\sum_{i<j}Z_{ij},
\end{align}
where $Z_{ij}\triangleq\mathbf{B}_{ij}^2\in[0,1]$, i.i.d. over $i<j$, and $\bE[Z_{ij}]=\mu$. Let $N\triangleq\binom{k}{2}$. Then $\sum_{i<j}Z_{ij}$ is a sum of $N$ i.i.d. $[0,1]$-bounded variables with mean $\mu$. Let us apply Bernstein's inequality to $X_r\triangleq Z_r-\mu$. We note that $|X_r|\leq 1$ and
$\s{Var}(X_r)\leq \bE[Z_r^2]\leq \bE[Z_r]=\mu$ since $0\le Z_r\le 1$. Thus, Bernstein's inequality yields
\begin{align}
\P\p{\sum_{i<j}Z_{ij} \geq 2N\mu}&=\P\p{\sum_{i<j}(Z_{ij}-\mu)\geq N\mu}\\
&\leq \exp\p{-\frac{(N\mu)^2}{2N\mu+\frac23 N\mu}}\\
&=\exp\p{-\frac{3}{8}N\mu}\\
&\leq \exp\p{-C\mu k^2}
\end{align}
for an absolute $C>0$. Therefore
\begin{align}
\P\p{\|\mathbf{B}\|_F^2 > 2\mu k^2}\leq \exp(-C\mu k^2).\label{eqn:ForbTail}
\end{align}
As for $\|\mathbf{B}\|_{\mathrm{op}}$ we can use classical results, e.g., \cite[Thm. 4.4.3]{vershynin2018high}, and get for any $\alpha>0$
\begin{align}
    \pr\p{\|\mathbf{B}\|_{\mathrm{op}}\geq  C\sqrt{\alpha k\log n}}\leq2\exp(-C'\alpha k\log n),\label{eqn:OpTail}
\end{align}
for universal constants $C,C'>0$. Combining \eqref{eqn:Energgy}, \eqref{eqn:ForbTail}, \eqref{eqn:OpTail}, and the union bound, we get
\begin{align}
\P\p{\sum_{e\subset \binom{\calK}{2}} W_e^2
> \alpha C''(q)k^3\log n}\leq e^{-C\mu k^2}  +  2e^{-C'\alpha k\log n}\leq 3e^{-C'\alpha k\log n},
\end{align}
for an constant $C''=C''(q)>0$, and in the last inequality we used that fact that $k=\Omega(\log n)$.
\end{proof}
Using the above lemmata we prove the following general tail bound on $\s{T}_{\calK}(\s{G})$.
\begin{prop}[Upper tail for $\s{T}_{\calK}(\s{G})$]\label{prop:tail}
Let $\s{T}_{\calK}(\s{G})$ be as above. Then for all $t>0$ and $u>0$,
\begin{align}
\pr\p{\s{T}_{\calK}(\s{G})\ge t,\sum_{e\subset K} W_e^2\le u}
\leq\exp\p{-\theta t + \frac{C_{\mathrm{dec}}^2 \theta^2}{72} u}
\qquad\text{for all }\theta>0.
\end{align}
Optimizing in $\theta$ gives
\begin{align}
\pr\p{\s{T}_{\calK}(\s{G})\geq t,\sum_{e\subset K} W_e^2\leq u}\leq\exp\p{- \frac{18 t^2}{C_{\mathrm{dec}}^2 u}}.
\end{align}
Consequently,
\begin{align}
\pr\p{\s{T}_{\calK}(\s{G})\ge t}\leq\exp\p{-\frac{18 t^2}{C_{\mathrm{dec}}^2 u}}+\pr\p{\sum_{e\subset K} W_e^2>u}.
\end{align}
\end{prop}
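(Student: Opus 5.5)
The natural route is a Chernoff argument on the event $\{\s{T}_{\calK}(\s{G})\ge t\}\cap\{\calE\le u\}$, where $\calE\triangleq\sum_{e\subset\binom{\calK}{2}}W_e^2$. For any $\theta>0$, Markov's inequality applied to $e^{\theta \s{T}_{\calK}}$ gives
\begin{align}
\pr\p{\s{T}_{\calK}(\s{G})\ge t,\ \calE\le u}\le e^{-\theta t}\,\bE\pp{e^{\theta \s{T}_{\calK}(\s{G})}\Ind\{\calE\le u\}}.
\end{align}
The difficulty is that Lemma~\ref{lem:mgf-decoupled} controls the unrestricted moment generating function $\bE[e^{\theta\s{T}_{\calK}}]$ by $\bE[\exp(c\theta^2\calE)]$, with $c=C_{\mathrm{dec}}^2/72$. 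It does not directly control the restricted expectation on $\{\calE\le u\}$. I would therefore revisit the proof of Lemma~\ref{lem:mgf-decoupled} rather than use only its conclusion.

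After decoupling, the Hoeffding step is conditional on $(\mathbf{A}^{(2)},\mathbf{A}^{(3)})$ and yields the factor $\exp(c\theta^2\sum_e S_e^2)$. If the truncation is imposed on the decoupled energy $\sum_e S_e^2$ instead of $\calE$, the bound by $\exp(c\theta^2 u)$ is immediate. Concretely, the plan is to define the martingale-type quantity $M_\theta\triangleq \exp(\theta \s{T}^{\mathrm{dec}} - c\theta^2 \sum_e S_e^2)$, note that $\bE[M_\theta]\le 1$ by the conditional Hoeffding bound, and then argue
\begin{align}
\pr\p{\s{T}^{\mathrm{dec}}\ge t,\ \textstyle\sum_e S_e^2\le u}\le \bE\pp{M_\theta}e^{-\theta t+c\theta^2u}\le e^{-\theta t+c\theta^2 u}.
\end{align}
To transfer this back to $\s{T}_{\calK}$ and $\calE$, I would use the decoupling inequality in its tail form, with constants absorbed into $C_{\mathrm{dec}}$, applied jointly to the pair (statistic, energy). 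Equivalently, I would apply Lemma~\ref{thm:decoupling} to the convex nondecreasing function $x\mapsto e^{\theta x}$ of the self-normalized quantity. This is the \emph{main obstacle}: decoupling inequalities are stated for a single convex functional, not for joint events, so the step needs care.

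A cleaner fallback avoids joint decoupling. On $\{\calE\le u\}$ we have $\exp(c\theta^2(u-\calE))\ge 1$, so
\begin{align}
\pr\p{\s{T}_{\calK}\ge t,\ \calE\le u}\le e^{-\theta t+c\theta^2u}\,\bE\pp{e^{\theta\s{T}_{\calK}-c\theta^2\calE}}.
\end{align}
It then suffices to show $\bE[e^{\theta\s{T}_{\calK}-c\theta^2\calE}]\le 1$. This is a self-normalized exponential supermartingale bound for the canonical $U$-statistic. The proof of Lemma~\ref{lem:mgf-decoupled}, read with $\calE$ as the conditional variance proxy, is what justifies it, so I would state it as what that proof delivers.

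With the first display in hand, the remaining steps are routine. Optimizing the exponent $-\theta t+c\theta^2u$ gives $\theta^\star=t/(2cu)=36t/(C_{\mathrm{dec}}^2u)$, and substituting yields the exponent $-t^2/(4cu)=-18t^2/(C_{\mathrm{dec}}^2u)$. Finally, the last claim follows from the decomposition
\begin{align}
\pr\p{\s{T}_{\calK}\ge t}\le \pr\p{\s{T}_{\calK}\ge t,\ \calE\le u}+\pr\p{\calE>u}.
\end{align}
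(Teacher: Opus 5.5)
You are right that Lemma~\ref{lem:mgf-decoupled} bounds only the unrestricted moment generating function. The paper's own proof makes exactly the leap you flag: it passes from $e^{-\theta t}\bE[e^{\theta\s{T}_{\calK}}\Ind\{\calE\le u\}]$ to $e^{-\theta t+c\theta^2u}$ by citing that lemma. But neither of your routes closes the gap. So the first display remains unproved, and your optimization and final split (which are correct and match the paper's) depend on it.

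\textbf{Route 1.} It stops at the obstacle you name. Lemma~\ref{thm:decoupling} compares $\bE\,\Phi(\s{T}_{\calK})$ with $\bE\,\Phi(C\,\s{T}^{\mathrm{dec}}_{\calK})$ for a function of the statistic alone. It says nothing about the joint law of $(\s{T}_{\calK},\calE)$ versus $(\s{T}^{\mathrm{dec}}_{\calK},\sum_e S_e^2)$. Even granting the paper's claim $\sum_e S_e^2\overset{d}{=}\calE$, that is a statement about one marginal. It cannot move a truncation that sits inside an expectation also involving the statistic.

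\textbf{Route 2.} You assert $\bE[e^{\theta\s{T}_{\calK}-c\theta^2\calE}]\le 1$ as ``what the proof of Lemma~\ref{lem:mgf-decoupled} delivers,'' but that proof delivers nothing of the kind. $e^{\theta\s{T}_{\calK}-c\theta^2\calE}$ is not of the form $\Phi(\s{T}_{\calK})$. The quantity $\theta\s{T}_{\calK}-c\theta^2\calE$ is not a canonical $U$-statistic, since $\calE$ contains uncentered terms $\bar{\mathbf A}_{i\ell}^2\bar{\mathbf A}_{j\ell}^2$. And without decoupling, the conditional Hoeffding step is unavailable: each $W_e$ is built from edge variables that themselves multiply other coefficients in $\s{T}_{\calK}=\frac13\sum_e\bar{\mathbf A}_eW_e$.

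A genuine self-normalized bound is available if you change the energy proxy. Label $\calK=\{1,\dots,k\}$ and expose its edges in lexicographic order. The last edge of a triangle $a<b<c$ is $\{b,c\}$, so $\s{T}_{\calK}=\sum_{b<c}\bar{\mathbf A}_{bc}V_{bc}$ with $V_{bc}\triangleq\sum_{a<b}\bar{\mathbf A}_{ab}\bar{\mathbf A}_{ac}$, which is measurable with respect to earlier edges. Applying Hoeffding's lemma conditionally at each step shows that $\exp(\theta\s{T}_{\calK}-\frac{\theta^2}{8}\mathcal R)$, with $\mathcal R\triangleq\sum_{b<c}V_{bc}^2$, has expectation at most one. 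Hence $\pr(\s{T}_{\calK}\ge t,\ \mathcal R\le u)\le e^{-2t^2/u}$. Moreover $V_{bc}=[\mathbf L\mathbf B]_{bc}$, where $\mathbf L$ is the strictly lower-triangular part of $\mathbf B$, so $\mathcal R\le\norm{\mathbf L}_{\s{op}}^2\norm{\mathbf B}_F^2$. Since $\mathbf L$ has independent, bounded, centered entries, the argument of Lemma~\ref{lem:energy} goes through essentially unchanged, and Lemma~\ref{lem:probabScanUpper} follows with the same rates.

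Alternatively, since only the last display is used downstream, you could decouple the \emph{tail} of $\s{T}_{\calK}$ alone via de la Pe\~na--Montgomery-Smith. Then run your $M_\theta$ argument on $\s{T}^{\mathrm{dec}}_{\calK}$ and bound $\sum_e S_e^2$ directly. Either fix proves the proposition with a different energy proxy. With $\calE$ as written, neither your argument nor the paper's establishes the first display.
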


\begin{proof}[Proof of Proposition~\ref{prop:tail}]
By Markov's inequality and Lemma~\ref{lem:mgf-decoupled},
\begin{align}
\pr\p{\s{T}_{\calK}(\s{G})\ge t\sum_{e\subset K} W_e^2\le u}&=\bE\pp{\Ind\ppp{\s{T}_{\calK}(\s{G})\ge t,\sum_{e\subset K} W_e^2\leq u}}\\
&\leq e^{-\theta t} \bE\pp{e^{\theta \s{T}_{\calK}(\s{G})}\Ind\ppp{\sum_{e\subset K} W_e^2\leq u}}\\
&\leq \exp\p{-\theta t + \tfrac{C_{\mathrm{dec}}^2 \theta^2}{72} u}.
\end{align}
The choice $\theta^\star = 36 t/(C_{\mathrm{dec}}^2 u)$ minimizes the exponent and yields the result.
\end{proof}

Finally, we are in a position to prove Lemma~\ref{lem:probabScanUpper}. Specifically, apply Proposition~\ref{prop:tail} with $u= \alpha C k^3\log n$ from Lemma~\ref{lem:energy}. Then
\begin{align}
\pr\p{\s{T}_{\calK}(\s{G})\ge \tau_{\s{scan}}}\leq\exp\p{-\frac{18 \tau_{\s{scan}}^2}{C_{\mathrm{dec}}^2 \alpha C  k^3\log n}}+ 3\exp\p{-\alpha C'k\log n}.
\end{align}
For $\tau_{\s{scan}}\ge c_0 k^3/\sqrt d$ and $k\ge 3$, the first term is $\leq \exp\{-c k^3/(\alpha d\log n)\}$ for some $c=c(q)>0$. This concludes the proof.

%%%%%%%%%%%%%%%%%%%%%%%%%%%%%%%%%%

\section{Proofs of Lower Bounds}

\subsection{Information-Theoretic lower bounds}
In this subsection we prove Theorem~\ref{thm:ITlower}. \cite{bubeck2016testing} can be applied.
Our aim is to derive a lower bound on the optimal risk, thereby precluding the possibility of successful detection. Define the likelihood ratio as
\begin{align}
    \calL(\bs{A})\triangleq\frac{\mathrm{d}\P_{\calH_1}}{\mathrm{d}\P_{\calH_0}}(\bs{A}),
    \label{eqn:LIKlihood}
\end{align}
namely, the Radon--Nikodym derivative of $\P_{\calH_1}$ relative to the measure $\P_{\calH_0}$, and we recall that $\bs{A}$ denotes the graph adjacency matrix. It is classical (see, e.g.,~\cite[Theorem~2.2]{tsybakov2004introduction}) that the test minimizing the risk is the likelihood ratio test
\begin{align}
\calA^\ast(\bs{A}) \triangleq \Ind\{\calL(\bs{A}) \geq 1\},
\end{align}
and that the corresponding optimal risk satisfies
\begin{align}
\s{R}^\ast
= 1-d_{\TV}(\P_{\calH_0}, \P_{\calH_1}).
\end{align}
Recalling that the chi-square divergence admits the representation $\chi^2(\P_{\calH_1}\|\P_{\calH_0})
= \E_{\calH_0}[\calL^2] - 1$, it follows from standard inequalities relating total variation and chi-square divergences (see, e.g.,~\cite[Sec.~2]{tsybakov2004introduction}, \cite[Prop.~3]{sason2014bounds}) that
\begin{align}
\chi^2(\P_{\calH_1}\|\P_{\calH_0})\geq\max\left\{\frac{1}{2(1-d_{\TV}(\P_{\calH_0},\P_{\calH_1}))}-1,\bigl(2d_{\TV}(\P_{\calH_0},\P_{\calH_1})\bigr)^2\right\}.
\end{align}
Consequently, the optimal risk admits the lower bound
\begin{align}
\s{R}^\ast\geq\max\left\{1-\frac{1}{2}\sqrt{\chi^2(\P_{\calH_1}\|\P_{\calH_0})},\frac{1}{2\bigl(1+\chi^2(\P_{\calH_1}\|\P_{\calH_0})\bigr)}\right\}.
\label{eqn:lowerBoundSecond}
\end{align}
In particular, the optimal risk remains bounded away from zero whenever $\E_{\calH_0}[\calL^2]$ is bounded, and converges to one if $\E_{\calH_0}[\calL^2]=1+o(1)$. Therefore, to rule out detection it suffices to control the second moment of the likelihood ratio under $\calH_0$. To conclude
\begin{align}
        &\bE_{\calH_0}[\calL^2]=1+o(1)\implies d_{\s{TV}}(\P_{\calH_0}\|\P_{\calH_1})=o(1),\label{eqn:weakDet}\\
    &\bE_{\calH_0}[\calL^2]=O(1)\implies d_{\s{TV}}(\P_{\calH_0}\|\P_{\calH_1})\leq1-\Omega(1).\label{eqn:strongDet}
\end{align}

Let us introduce some notation. Let $\bs{X}$ denote the $n\times d$ matrix formed by stacking $n$ i.i.d. standard Gaussian vectors, and let $\bs{W}=\bs{X}\bs{X}^T$ be the associated $n\times n$ Wishart matrix. Observe that $\bs{W}_{ii}=\|\bs{x}_i\|^2$, and that
\begin{align}
\big\langle \bs{x}_i/\|\bs{x}_i\|,\ \bs{x}_j/\|\bs{x}_j\| \big\rangle
= \bs{W}_{ij}/\sqrt{\bs{W}_{ii}\bs{W}_{jj}}.
\end{align}
Define
\begin{align}
\sigma_{ij} \triangleq \Ind\{\langle \bs{x}_i,\bs{x}_j\rangle \ge t_{q,d}\},
\end{align}
or equivalently,
\begin{align}
\sigma_{ij}= \Ind\left\{\bs{W}_{ij}/\sqrt{\bs{W}_{ii}\bs{W}_{jj}} \ge t_{q,d}\right\}.
\end{align}
For $\calK\in\binom{[n]}{k}$, we write $E(\calK)=\{\{i,j\}: i,j\in\calK,\ i\neq j\}$ and $E^c(\calK)=E([n])\setminus E(\calK)$. With these notations in place, we may write the likelihood ratio explicitly. Recall the detection problem in \eqref{eqn:model}. Under the null hypothesis
\begin{align}
    \P_{\calH_0}(\bs{A}) = \prod\limits_{i<j} q^{\bs{A}_{i,j}}(1 - q)^{1 - \bs{A}_{i,j}},\label{eqn:H0Measure}
\end{align}
while under the alternative’
\begin{align}
    \P_{\calH_1\vert \calK} \p{ \bs{A}\vert \calK} 
    &= \prod\limits_{\{i,j\} \in E^c\p{\calK}} 
        q^{\bs{A}_{i,j}} (1 - q)^{1 - \bs{A}_{i,j}} 
        \cdot \E_{\bs{W}\vert \calK} 
            \pp{
                \prod\limits_{\{i,j\} \in E\p{\calK}} 
                \sigma_{ij}^{\bs{A}_{i,j}}
                \p{1 - \sigma_{ij}} ^ {1 - \bs{A}_{i,j}}
            }.\label{eqn:H1Measure}
\end{align}
It follows that the likelihood ratio takes the form
\begin{align}
\label{eqn:L-exenpded}
    \calL(\bs{A})=\frac
        {\E_\calK\P_{\calH_1\vert \calK} (\mathbf{A}\vert \calK)}
        {\P_{\calH_0} (\mathbf{A})}
        &= \E_\calK\E_{\bs{W}\vert \calK}
            \pp{
                \prod\limits_{\{i,j\} \in E(\calK)}
                \p{
                    \frac{\sigma_{ij}}{q}
                } ^ {\bs{A}_{i,j}}
                \p{
                    \frac{1 - \sigma_{ij}}{1 - q}
                } ^ {1 - \bs{A}_{i,j}}
            }.
\end{align}
In this expression, the outer expectation over $\calK$ is uniform over $\binom{[n]}{k}$, and the inner expectation is with respect to the Wishart distribution. 

As discussed above, our objective is to upper bound the second moment of the likelihood ratio under $\calH_0$. Nevertheless, in the present setting, certain rare events under $\calH_1$ may cause this second moment to diverge, even though $d_{\s{TV}}(\pr_{\calH_0},\pr_{\calH_1})$ remains bounded away from one. To circumvent the effect of such atypical events, we instead analyze the second moment conditioned on events that are typical under $\calH_1$. This refined approach, originally proposed by \cite{Ingster1997HypothesisTesting,butucea2013detection}, is based on controlling the first and second moments of a truncated likelihood ratio \cite{arias2014community,WuXuYu2023}.

We begin by outlining the truncated second-moment method in general. Let $\calF$ be an event such that $\pr(\calF) = 1-o(1)$. Define the \emph{truncated/conditional planted model} as
\begin{align}
    \tilde{\pr}_{\calH_1,\calK,\bs{W}}(\bs{A},\calK,\bs{W}) &= \frac{\pr_{\calH_1,\calK,\bs{W}}(\bs{A},\calK,\bs{W})\Ind\ppp{(\calK,\bs{W})\in\calF}}{\pr(\calF)}\nonumber\\
    &= (1+o(1))\cdot\pr_{\calH_1,\calK,\bs{W}}(\bs{A},\calK,\bs{W})\Ind\ppp{(\calK,\bs{W})\in\calF},\label{eqn:truncatedModel}
\end{align}
and note that this is a legitimate probability measure. Then, define the \emph{truncated likelihood ratio} as 
\begin{align}
    \tilde{\calL}(\bs{A})\triangleq\frac{\tilde{\pr}_{\calH_1}(\bs{A})}{\pr_{\calH_0}(\bs{A})} &= \frac{1}{\pr(\calF)}\bE_{\calK,\bs{W}}\pp{\frac{\pr_{\calH_1\vert\calK,\bs{W}}(\bs{A}\vert\calK,\bs{W})\Ind\ppp{(\calK,\bs{W})\in\calF}}{{\pr_{\calH_0}(\bs{A})}}}\label{eqn:trunLikeGen}\\
    & = (1+o(1))\cdot\bE_{\calK,\bs{W}}\pp{\frac{\pr_{\calH_1\vert\calK,\bs{W}}(\bs{A}\vert\calK,\bs{W})\Ind\ppp{(\calK,\bs{W})\in\calF}}{\pr_{\calH_0}(\bs{A})}}.
\end{align}
Now, by the data processing inequality for the total variation distance, we know that
\begin{align}
    d_{\s{TV}}(\pr_{\calH_1},\tilde{\pr}_{\calH_1})&\leq d_{\s{TV}}(\pr_{\calH_1,\calK,\bs{W}},\tilde{\pr}_{\calH_1,\calK,\bs{W}})= \pr\pp{\calF^c}=o(1).\label{eqn:highprobF}
\end{align}
Accordingly, combining \eqref{eqn:weakDet}--\eqref{eqn:strongDet} with \eqref{eqn:highprobF}, the triangle inequality implies that
\begin{align}
        &\bE_{\calH_0}[\tilde{\calL}^2]=1+o(1)\implies d_{\s{TV}}(\P_{\calH_0}\|\P_{\calH_1})=o(1),\label{eqn:weakDetTrun}\\
    &\bE_{\calH_0}[\tilde{\calL}^2]=O(1)\implies d_{\s{TV}}(\P_{\calH_0}\|\P_{\calH_1})\leq1-\Omega(1).\label{eqn:strongDetTrun}
\end{align}
Therefore, with a carefully chosen high probable truncation set $\calF$, it suffices to analyze
the second moment of the truncated likelihood ratio. 
\begin{comment}
    Let $\tilde{\calL}_n$ denote a truncated likelihood ratio satisfying $\tilde{\calL}_n \le \calL_n$, to be specified shortly. By the triangle inequality and the Cauchy--Schwarz inequality, we have
\begin{align}
\mathbb{E}_{\calH_0}|\calL-1|
&\le
\mathbb{E}_{\calH_0}|\tilde{\calL}-1|
+
\mathbb{E}_{\calH_0}\bigl(\calL-\tilde{\calL}\bigr) \\
&\le
\sqrt{
\mathbb{E}_{\calH_0}[\tilde{\calL}^2]
- 1
+ 2\bigl(1-\mathbb{E}_{\calH_0}[\tilde{\calL}]\bigr)
}
+
\bigl(1-\mathbb{E}_{\calH_0}[\tilde{\calL}]\bigr).
\end{align}
Consequently, if $\mathbb{E}_{\calH_0}[\tilde{\calL}]\to 1$ and $\mathbb{E}_{\calH_0}[\tilde{\calL}^{2}]\to 1$, then $\s{R}^\ast \to 1$, and weak detection is impossible. On the other hand, if both quantities remain bounded, then $\s{R}^\ast$ is bounded away from zero, and strong detection is impossible. 
\end{comment}

We now proceed to define the truncation set and likelihood ratio in our setting. For any given $\calK\subseteq[n]$, define the truncation set as
\begin{align}\label{eqn:truncationset1}
    \Gamma_\calK\triangleq\bigcap_{z=2}^{|\calK|}\ppp{\Gamma^{\s{fro}}_{z,\calK}\cap\Gamma^{\s{op}}_{z,\calK}},
\end{align}
where for $z\geq 2$
\begin{align}\label{eqn:truncationset2}
    \Gamma^{\s{fro}}_{z,\calK}\triangleq\bigcap_{T\subseteq \calK,\;|T|=z}\bigcap_{i\in\calT}\ppp{\sum_{j\in T\setminus\{i\}}A_{ij}^2\leq L^{\s{fro}}_{k,z,d}},
\end{align}
and
\begin{align}\label{eqn:truncationset3}
    \Gamma^{\s{op}}_{z,\calK}\triangleq\bigcap_{T\subseteq \calK,\;|T|=z}\ppp{\norm{A|_T-d\bs{I}_z}_{\s{op}}\leq L^{\s{op}}_{k,z,d}},
\end{align}
where we recall that $A|_T$ is the restriction of $A$ to the indices in $T$, $L^{\s{fro}}_{k,z,d}\triangleq(1+C_1)d(z-1)\log k$, and $L^{\s{op}}_{k,z,d}\triangleq C_2(\sqrt{dz}+\sqrt{d\log \binom{k}{z}})$, for some $C_1,C_2>0$, specified later on. Then, using \eqref{eqn:H0Measure}--\eqref{eqn:L-exenpded} and \eqref{eqn:truncatedModel}--\eqref{eqn:trunLikeGen}, we see that the truncated planted model on $\Gamma_\calK$ is given by
\begin{comment}
    \footnote{
If $X$ is a random object with law $\P_X$ and $B$ is a measurable event,
we define the truncated law
\begin{align}
\tilde P_X(A) \triangleq P_X(A\cap B).
\end{align}
Thus $\tilde P_X$ is the restriction of $P_X$ to $B$. We continue to use
the terms “law”, “distribution”, and “probability” for such truncated
measures, even though they need not be normalized. If $Y=h(X)$ for a
measurable map $h$, we write $P_Y$ and $\tilde\P_Y$ for the
(pushforward) laws of $Y$ and its truncation. 
}
\end{comment}
\begin{align}
    \tilde{\pr}_{\calH_1\vert \calK} \p{ \bs{A}\vert \calK} 
    &= \frac{1}{\pr[\Gamma_{\calK}]}\prod\limits_{\{i,j\} \in E^c\p{\calK}} 
        q^{\bs{A}_{i,j}} (1 - q)^{1 - \bs{A}_{i,j}} 
        \nonumber\\
        &\qquad\qquad\cdot\E_{\bs{W}\vert \calK} 
            \pp{
                \prod\limits_{\{i,j\} \in E\p{\calK}} 
                \sigma_{ij}^{\bs{A}_{i,j}}
                \p{1 - \sigma_{ij}} ^ {1 - \bs{A}_{i,j}},
            \mathds{1}_{\Gamma_\calK}(\bs{W})},\label{eqn:TruncatedMoment0}
\end{align}
and thus
\begin{align}
\tilde{\calL}(\bs{A})=\frac{1}{\pr[\Gamma_{\calK}]}\E_\calK\E_{\bs{W}\vert \calK}
            \pp{
                \prod\limits_{\{i,j\} \in E(\calK)}
                \p{
                    \frac{\sigma_{ij}}{q}
                } ^ {\bs{A}_{i,j}}
                \p{
                    \frac{1 - \sigma_{ij}}{1 - q}
                } ^ {1 - \bs{A}_{i,j}}\mathds{1}_{\Gamma_\calK}(\bs{W})
            }.\label{eqn:TruncatedMoment}
\end{align}
We also denote the conditional likelihood
\begin{align}
    \tilde{\calL}_{\calK}(\bs{A}\vert\calK)&\triangleq\frac{\tilde{\pr}_{\calH_1\vert \calK} \p{ \bs{A}\vert \calK}}{\pr_{\calH_0}(\bs{A})}\\
    &= \frac{\tilde{\pr}_{\calH_1\vert \calK} \p{ \bs{A}_{\calK}}}{\pr_{\calH_0}(\bs{A}_{\calK})}\\
    &=\frac{1}{\pr[\Gamma_{\calK}]}\E_{\bs{W}\vert \calK}
            \pp{
                \prod\limits_{\{i,j\} \in E(\calK)}
                \p{
                    \frac{\sigma_{ij}}{q}
                } ^ {\bs{A}_{i,j}}
                \p{
                    \frac{1 - \sigma_{ij}}{1 - q}
                } ^ {1 - \bs{A}_{i,j}}\mathds{1}_{\Gamma_\calK}(\bs{W})
            },\label{eqn:TruncConcdLike}
\end{align}
and we note that $\tilde{\calL}(\bs{A}) = \bE_{\calK}\pp{\tilde{\calL}_{\calK}(\bs{A}\vert\calK)}$. 
Next, we show that the truncation set defined above has high probability, and then derive conditions under which $\bE_{\calH_0}[\tilde{\calL}^2]$ is bounded or converges to unity.

\subsubsection{High probability truncation set}
\begin{comment}
    We begin by proving that $\mathbb{E}_{\calH_0}[\tilde{\calL}]\to1$. Using Fubini's theorem, we have
\begin{align}
    \E_{\calH_0}[\tilde{\calL}]
    &= 
    \E_{\calK}
    \E_{\bs{W}\mid\calK}
    \left[
        \prod_{\{i,j\}\in E(\calK)}
        \E_{\calH_0}\left[
            \left(\frac{\sigma_{ij}}{q}\right)^{\bs{A}_{ij}}
            \left(\frac{1-\sigma_{ij}}{1-q}\right)^{1-\bs{A}_{ij}}
        \right]
        \mathds{1}_{\Gamma_{\calK}}(\bs{W})
    \right] \\
    &=
    \E_{\calK}
    \E_{\bs{W}\mid\calK}
    \left[
        \mathds{1}_{\Gamma_{\calK}}(\bs{W})
    \right].
\end{align}
\end{comment}
We show that $\pr[\Gamma_{\calK}]=1-o(1)$. We have
\begin{align}
    \pr[\Gamma_{\calK}] = \bE_{\calK}\pp{\P_{\bs{W}\vert\calK}[\Gamma_{\calK}]} = \P_{\bs{W}}[\Gamma_{\calK_0}]
\end{align}
where $\calK_0$ is any fixed subset of size $k$ in $[n]$, the last equality is because $\P_{\bs{W}\vert\calK}[\Gamma_{\calK}]$ does not depend on the particular choice of $\calK$ by symmetry, and $\P_{\bs{W}}$ denotes the Wishart distribution. By the union bound we have
\begin{align}
   \P_{\bs{W}}(\Gamma_{\calK_0}^c)\leq \P_{\bs{W}}\pp{\p{\Gamma_{\calK_0}^{\s{fro}}}^c}+\P_{\bs{W}}\pp{\p{\Gamma_{\calK_0}^{\s{op}}}^c},\label{eqn:twoProbToUpperBound}
\end{align}
where $\Gamma_{\calK_0}^{\s{fro}}\triangleq\cap_{z=2}^k\Gamma^{\s{fro}}_{z,\calK}$ and $\Gamma_{\calK_0}^{\s{op}}\triangleq\cap_{z=2}^k\Gamma^{\s{op}}_{z,\calK}$, and we next prove that both terms at the right-hand side of \eqref{eqn:twoProbToUpperBound} converge to zero as $n\to\infty$. 

We start with $\P_{\bs{W}}\pp{\p{\Gamma_{\calK_0}^{\s{fro}}}^c}$. Fix $z\in[k]$ and $T\subseteq\calK_0$ with $|T|=z$. Recall that $\bs{W}_{ij}=\langle \bs{x}_i,\bs{x}_j\rangle$, where $\bs{x}_1,\dots,\bs{x}_n$ are i.i.d. $d$-dimensional standard Gaussian vectors. Conditioning on $\bs{x}_i = x$, we have for any $j\neq i$,
\begin{align}
\langle x,\bs{x}_j\rangle
=
\sum_{\ell=1}^d x_\ell \bs{x}_{j\ell}
\sim \mathcal{N}(0,\|x\|^{2}),
\end{align}
and these random variables are independent across $j$. Consequently, given
$\bs{x}_i = x$,
\begin{align}
\sum_{j\in T\setminus\{i\}} \bs{W}_{ij}^{2}
\overset{d}{=}
\|x\|^{2}\cdot\s{A},
\end{align}
where $\s{A}\sim\chi^{2}_{z-1}$. Since $\|\bs{x}_i\|^{2}\sim\chi^{2}_{d}$ and
$\bs{x}_i$ is independent of $\{\bs{x}_j\}_{j\neq i}$, it follows that, unconditionally,
\begin{align}
\sum_{j\in T\setminus\{i\}} \bs{W}_{ij}^{2}
\overset{d}{=}
\s{A}\cdot\s{B},
\end{align}
where $\s{B}\sim\chi^{2}_{d}$, and $\s{A}\independent\s{B}$. Recall that for a chi-square random variable $Y\sim\chi^2_m$, the following standard Chernoff bound holds (see, e.g.,~\cite[Ch.~2]{BoucheronLugosiMassart2013}):
\begin{align}
\pr\left(Y \ge (1+\varepsilon)m\right)
\leq 
\exp\left(-\frac{m}{2}\bigl(\varepsilon - \log(1+\varepsilon)\bigr)\right),
\label{eqn:standardChrenoffChi0}
\end{align}
for any $\varepsilon>0$. Furthermore, note that $\varepsilon - \log(1+\varepsilon)\geq \frac{\varepsilon^2}{2(1+\varepsilon)}$, and that for $\varepsilon\geq 1$ we have $\frac{\varepsilon^2}{2(1+\varepsilon)}\geq \frac{\varepsilon}{2}$. Thus, in this regime, \eqref{eqn:standardChrenoffChi0} simplifies to
\begin{align}
\pr\left(Y \ge (1+\varepsilon)m\right)
\leq 
\exp\left(-\frac{m\varepsilon}{8}\right).
\label{eqn:standardChrenoffChi}
\end{align}
Therefore, applying the union bound together with \eqref{eqn:standardChrenoffChi}, we obtain
\begin{align}
\P_{\bs{W}}\pp{\p{\Gamma_{\calK_0}^{\s{fro}}}^c}
&\leq \sum_{z=2}^{k}\sum_{\substack{T\subseteq\calK_0\\ |T|=z}}\sum_{i\in T}\mathbb{P}\left(\s{A}\cdot\s{B} \ge L^{\s{fro}}_{k,z,d} \right)\\
&\leq \sum_{z=2}^{k} z\binom{k}{z} \cdot\pp{\mathbb{P}\left( \s{A} \ge (1+C_1')(z-1)\log k \right)+\mathbb{P}\left( \s{B} \geq (1+C_1'')d \right)} \\
&\leq \sum_{z=2}^{k}\exp\left[z\left(2\log\left(\frac{ek}{z}\right)-C_1'\log k\right)\right] + \sum_{z=2}^{k} \exp\left[
z\left( \log\left(\frac{ek}{z}\right)-C_{1}''\frac{d}{8z}\right)\right]\\
&\leq \sum_{z=2}^{k}\exp\left[z\left(2\log (ek)-C_1'\log k\right)\right] + \sum_{z=2}^{k} \exp\left[
z\left( \log (ek)-C_{1}''\frac{d}{8k}\right)\right].
\label{eq:final_prob_bound_1}
\end{align}
where the second inequality follows from the trivial inclusion $\{\s{A}\cdot\s{B} \ge (1+C_1)d(z-1)\log k\}\subseteq \{\s{A} \ge (1+C_1')(z-1)\log k\}\cup \{\s{B} \ge (1+C_1'')d\}$ where $C_1',C_1''>0$ are such that $(1+C_1')(1+C_1'')\leq(1+C_1)$. Now, define $\zeta\triangleq\max\ppp{2\log (ek)-C_{1}'\log k,\log (ek)-C_1''\frac{d}{8k}}$. Then, for  $C_1'>2$ (achieved by choosing $C_1$ sufficiently large) the first term in the maximum converges to $-\infty$ as $k\to\infty$. Furthermore, in the impossibility regime of Theorem~\ref{thm:ITlower}, the condition $d/(k^{2}\log^2k)\to\infty$ clearly implies that the second term in the maximum also converges to $-\infty$. Thus, $\zeta\to-\infty$, and
\begin{align}
    \mathbb{P}_{\bs{W}}((\Gamma^1_{\calK_0})^{c})\leq 2\sum_{z=2}^\infty e^{z\zeta}\leq 2\frac{e^{\zeta}}{1-e^{\zeta}}\to0,\label{eqn:probOp0}
\end{align}
eventually.

Next, we bound $\P_{\bs{W}}\pp{\p{\Gamma_{\calK_0}^{\s{op}}}^c}$. By the union bound,
\begin{align}
  \P_{\bs{W}}\pp{\p{\Gamma_{\calK_0}^{\s{op}}}^c}&\leq \sum_{z=2}^{k}\sum_{\substack{T\subseteq\calK_0\\ |T|=z}}\pr\pp{\norm{\bs{W}|_T-d\bs{I}_z}_{\s{op}}> L^{\s{op}}_{k,z,d}}\\
  &\leq \sum_{z=2}^{k}\binom{k}{z}\pr\pp{\norm{\bs{W}|_T-d\bs{I}_z}_{\s{op}}> L^{\s{op}}_{k,z,d}}\\
  &\leq \sum_{z=2}^{k}e^{(1-C_2')\log \binom{k}{z}},\label{eqn:probOp}
  \end{align}
where in the last inequality we applied standard concentration inequalities for sample covariance matrices (see, e.g.,~\cite[Remark~4.7.3]{vershynin2018high}. Choosing $C_2'>1$ (which can be ensured by taking $C_2$ sufficiently large) makes the exponent $(1-C_2')\log\binom{k}{z}$ negative for every $z$, and therefore the sum in \eqref{eqn:probOp} converges to zero. Combining \eqref{eqn:twoProbToUpperBound}, \eqref{eqn:probOp0}, and \eqref{eqn:probOp}, we conclude that $\pr[\Gamma_{\calK}]=1-o(1)$, as required.

\begin{comment}
    \begin{lemma}[\cite{vershynin2011} Corollary 3.35]
    \label{lem:gaussian-matrix-concentration}
    Let $\bs{X}$ be a $d\times z$ matrix with iid gaussian entries. Then for any $t\ge 0$, with probability at least $1-2\exp(t^2/2)$ one has for all $i=1,...,\min\{d,z\}$,
\begin{align}
    \sqrt{d}-\sqrt{z}-t\le s_i(\bs{X})\le\sqrt{d}+\sqrt{z}+t,
\end{align}
where $s_i$ is the i'th singular value of $\bs{X}$.
\end{lemma}
From the above lemma we can deduce that with probability $1-2\exp(C_2z\log k/2)$ the eigenvalues of $\bs{W}=\bs{X}\bs{X}^T$ satisfy 
\begin{align}
    d-2\sqrt{C_2dz\log k}+C_2\log k\le \lambda_i(\bs{W})\le d+2\sqrt{C_2dz\log k}+C_2\log k,
\end{align}
and since $d\ge k\ge z$ we can infer that this event is contained in  $\abs{\bs{W}-d}\le(1+C_2)\sqrt{dz\log k}$. Hence,
\begin{align}
\P_{\bs{W}}\pp{\p{\Gamma_{\calK_0}^{\s{op}}}^c}
&\leq \sum_{z=2}^{k}\sum_{\substack{T\subseteq\calK_0\\ |T|=z}}\sum_{i\in T}\mathbb{P}\left(\abs{\lambda_i(\bs{W}|_T)-d}\ge (1+C_2)\sqrt{dz\log k} \right)\\
&\leq 2\sum_{z=2}^{k} z\binom{k}{z}\exp\left(\frac{C_2z\log k}{2}\right) \\
&\leq \sum_{z=2}^{k}\exp\left[z\left(\log\left(\frac{ke}{z}\right)-C_{6}\log k\right)\right]\\
&\leq \sum_{z=2}^{k}\exp\left[z\left(\log(ek)-C_{6}\log k\right)\right].
\label{eq:final_prob_bound_2}
\end{align}
The same considerations as in (\ref{eq:final_prob_bound_1}) show that $\mathbb{P}_{\bs{W}}((\Gamma^2_{\calK_0})^{c})\to0$.
\end{comment}

\subsubsection{Truncated second moment}

We begin by deriving a simplified formula for the truncated second moment $\E_{\calH_0}[\tilde{\calL}^{2}]$, and then derive conditions under which  $\E_{\calH_0}[\tilde{\calL}^{2}] = O(1)$ and $\E_{\calH_0}[\tilde{\calL}^{2}] = 1+o(1)$. Let $\calK$ and $\bar\calK$ be two independent copies drawn uniformly at random over $\binom{[n]}{k}$, i.e., $\calK\independent\bar{\calK}\overset{\s{i.i.d.}}{\sim}\s{Unif}\left(\binom{[n]}{k}\right)$. Similarly let  $\bs{W}\independent\bar{\bs{W}}\overset{\s{iid}}{\sim}\mu_{\bs{W_k}}$ be two independent copies of a Wishart matrix of size $n\times n$.  Recall \eqref{eqn:TruncatedMoment}, and define
\begin{align}
    \theta(\bs{A}_{ij},\bs{W}_{ij})\triangleq\p{
                    \frac{\sigma_{ij}}{q}
                } ^ {\bs{A}_{i,j}}
                \p{
                    \frac{1 - \sigma_{ij}}{1 - q}
                } ^ {1 - \bs{A}_{i,j}},
\end{align}
for $(i,j)\in\binom{[n]}{2}$. Below we denote by $\mathbb{E}_{(\calK,\bar\calK,\bs{W},\bar{\bs{W}})}$ the expectation with respect to the distribution $\mathbb{P}_\calK\times\mathbb{P}_{\bar{\calK}}\times\mathbb{P}_{\bs{W}\independent\bar{\bs{W}}\vert(\calK,\bar\calK)}$, as defined above. Also, let $\bar\sigma_{ij}$ denote the analogue of $\sigma_{ij}$ constructed from the independent copy $\bar{\bs{W}}$, for all $(i,j)\in\binom{[n]}{2}$. Then, by Fubini's theorem, we have
\begin{align}
    \mathbb{E}_{\mathcal{H}_0}\bigl[\tilde{\calL}^{2}\bigr] &= \frac{1}{(\pr[\Gamma_{\calK}])^2}\mathbb{E}_{(\calK,\bar\calK,\bs{W},\bar{\bs{W}})}\bE_{\calH_0}\left[\prod\limits_{\{i,j\} \in E(\calK)}
               \theta(\bs{A}_{ij},\bs{W}_{ij}) \mathds{1}_{\Gamma_{\calK}}(\bs{W})\right.\nonumber\\
               &\hspace{5.5cm}\left.\prod\limits_{\{i,j\} \in E(\bar\calK)}
                \theta(\bs{A}_{ij},\bar{\bs{W}}_{ij})
          \mathds{1}_{\Gamma_{\bar\calK}}(\bar{\bs{W}})\right]\nonumber\\
          & = \frac{1}{(\pr[\Gamma_{\calK}])^2}\mathbb{E}_{(\calK,\bar\calK,\bs{W},\bar{\bs{W}})}\left[\mathds{1}_{\Gamma_{\calK}}(\bs{W})
          \mathds{1}_{\Gamma_{\bar\calK}}(\bar{\bs{W}})\eta(\calK,\bar\calK,\bs{W},\bar{\bs{W}})\right],\label{eqn:eta0}
\end{align}
where
\begin{align}
    \eta(\calK,\bar\calK,\bs{W},\bar{\bs{W}})\triangleq\bE_{\calH_0}\pp{\prod\limits_{\{i,j\} \in E(\calK)}
               \theta(\bs{A}_{ij},\bs{W}_{ij})\prod\limits_{\{i,j\} \in E(\bar\calK)}
                \theta(\bs{A}_{ij},\bar{\bs{W}}_{ij})}.
\end{align}
If $|\calK\cap\bar\calK|\leq1$, namely, at most one vertex shared between $\calK$ and $\bar\calK$, then the set of edges formed among nodes in $\calK$ are disjoint from the set of edges formed among nodes in $\bar\calK$. Using the fact that $\bE_{\calH_0}[\theta(\bs{A}_{ij},\bs{W}_{ij})]=1$,  for all $(i,j)\in\binom{[n]}{2}$, we get that
\begin{align}
    \eta(\calK,\bar\calK,\bs{W},\bar{\bs{W}}) = \Ind\ppp{|\calK\cap\bar\calK|\leq1}.\label{eqn:eta1}
\end{align}
Otherwise, if $|\calK\cap\bar\calK|\geq2$, we have
\begin{align}
    \eta(\calK,\bar\calK,\bs{W},\bar{\bs{W}}) &= \bE_{\calH_0}\left[\Ind\ppp{|\calK\cap\bar\calK|\geq2}\prod\limits_{\{i,j\} \in E(\calK\cap\bar\calK)}
               \theta(\bs{A}_{ij},\bs{W}_{ij})\theta(\bs{A}_{ij},\bar{\bs{W}}_{ij})\right.\nonumber\\
               &\qquad\qquad\quad\left.\prod\limits_{\{i,j\} \in E(\calK)\setminus E(\calK\cap\bar\calK)}
                \theta(\bs{A}_{ij},{\bs{W}}_{ij})\prod\limits_{\{i,j\} \in E(\bar\calK)\setminus E(\calK\cap\bar\calK)}
                \theta(\bs{A}_{ij},\bar{\bs{W}}_{ij})\right]\\
                & = \bE_{\calH_0}\left[\prod\limits_{\{i,j\} \in E(\calK\cap\bar\calK)}
               \theta(\bs{A}_{ij},\bs{W}_{ij})\theta(\bs{A}_{ij},\bar{\bs{W}}_{ij}) \Ind\ppp{|\calK\cap\bar\calK|\geq2}\right],\label{eqn:eta2}
\end{align}
where the the second equality follows from the fact that $\bE_{\calH_0}[\theta(\bs{A}_{ij},\bs{W}_{ij})]=1$ and $\bE_{\calH_0}[\theta(\bs{A}_{ij},\bar{\bs{W}}_{ij})]=1$ for all $(i,j)\in\binom{[n]}{2}$. Combining \eqref{eqn:eta0}, \eqref{eqn:eta1}, and \eqref{eqn:eta2}, we get
\begin{align}
    \mathbb{E}_{\mathcal{H}_0}\bigl[\tilde{\calL}^{2}\bigr] &= \frac{\pr\pp{\bs{W}\in\Gamma_{\calK},\bar{\bs{W}}\in\Gamma_{\bar\calK},|\calK\cap\bar\calK|\leq1}}{(\pr[\Gamma_{\calK}])^2}\nonumber\\
    &\qquad+\mathbb{E}_{\calK\independent\bar\calK}\bE_{\calH_0}\left[g(\bs{A},\calK,\bar\calK)\Ind\ppp{|\calK\cap\bar\calK|\geq2}\right]\\
    & = \pr\pp{\left.|\calK\cap\bar\calK|\leq1\right|\bs{W}\in\Gamma_{\calK},\bar{\bs{W}}\in\Gamma_{\bar\calK}}\nonumber\\
    &\qquad+\mathbb{E}_{\calK\independent\bar\calK}\bE_{\calH_0}\left[g(\bs{A},\calK,\bar\calK)\Ind\ppp{|\calK\cap\bar\calK|\geq2}\right],
\end{align}
where
\begin{align}
    g(\bs{A},\calK,\bar\calK)&\triangleq\frac{1}{(\pr[\Gamma_{\calK}])^2}\bE_{\bs{W}\independent\bar{\bs{W}}\vert\calK,\bar\calK}\left(\prod\limits_{\{i,j\} \in E(\calK\cap\bar\calK)}
               \theta(\bs{A}_{ij},\bs{W}_{ij})\mathds{1}_{\Gamma_{\calK}}(\bs{W})\right.\nonumber\\
               &\left.\hspace{5.2cm}\prod\limits_{\{i,j\} \in E(\calK\cap\bar\calK)}
               \theta(\bs{A}_{ij},\bar{\bs{W}}_{ij})
          \mathds{1}_{\Gamma_{\bar\calK}}(\bar{\bs{W}})\right).
\end{align}
Recall the definition of $\Gamma_{\calK}$ in \eqref{eqn:truncationset1}--\eqref{eqn:truncationset3}, and for brevity, for any $T\subseteq[n]$, define the sets $\calQ_{T}^{\s{fro}}\triangleq\bigcap_{i\in T}\ppp{A:\sum_{j\in T\setminus\{i\}}A_{ij}^2\leq L^{\s{fro}}_{k,|T|,d}}$, $\calQ_{T}^{\s{op}}\triangleq\ppp{A:\norm{A|_T-d\bs{I}_{|T|}}_{\s{op}}\leq L^{\s{op}}_{k,||T||,d}}$, and $\calQ_{T}\triangleq\calQ_{T}^{\s{fro}}\cap\calQ_{T}^{\s{op}}$. Then, by definition, we note that $\Gamma_{\calK}\subseteq\calQ_{\calK\cap\bar\calK}$ and $\Gamma_{\bar\calK}\subseteq\calQ_{\calK\cap\bar\calK}$, and as so,
$\mathds{1}_{\Gamma_{\calK}}({\bs{W}})\leq \mathds{1}_{\calQ_{\calK\cap\bar\calK}}({\bs{W}})$, and similarly $\mathds{1}_{\Gamma_{\bar\calK}}(\bar{\bs{W}})\leq \mathds{1}_{\calQ_{\calK\cap\bar\calK}}(\bar{\bs{W}})$. Denoting $\calU\triangleq \calK\cap\bar\calK$, we get
%is monotonically decreasing with respect to $\calK$, namely, if $\calK_1\subseteq\calK_2$, then $\Gamma_{\calK_2}\subset\Gamma_{\calK_1}$. Indeed, $\calK_2$ contains all subsets $T\subseteq\calK_1$ and more; thus $\Gamma_{\calK_2}$ imposes all constraints of $\Gamma_{\calK_1}$ and additional ones. Furthermore, for any $\calK_1$ and $\calK_2$, we have $\Gamma_{\calK_1}\cap\Gamma_{\calK_2}\subseteq\Gamma_{\calK_1\cap\calK_2}$. Indeed, because of monotonicity we have $\Gamma_{\calK_1}\subseteq\Gamma_{\calK_1\cap\calK_2}$ and $\Gamma_{\calK_2}\subseteq\Gamma_{\calK_1\cap\calK_2}$, and so, $\Gamma_{\calK_1}\cap\Gamma_{\calK_2}\subseteq\Gamma_{\calK_1\cap\calK_2}$. Therefore, $\mathds{1}_{\Gamma_{\calK}}({\bs{W}})\leq \mathds{1}_{\Gamma_{\calK}\cap\Gamma_{\bar\calK}}({\bs{W}})\leq \mathds{1}_{\Gamma_{\calK\cap\bar\calK}}({\bs{W}})$, and similarly $\mathds{1}_{\Gamma_{\bar\calK}}(\bar{\bs{W}})\leq \mathds{1}_{\Gamma_{\calK\cap\bar\calK}}(\bar{\bs{W}})$. Denoting $\calU\triangleq \calK\cap\bar\calK$, we get
\begin{align}
    g(\bs{A},\calK,\bar\calK)&\leq\frac{1}{(\pr[\Gamma_{\calK}])^2}\bE_{\bs{W}\independent\bar{\bs{W}}\vert\calK,\bar\calK}\left(\prod\limits_{\{i,j\} \in E(\calU)}
               \theta(\bs{A}_{ij},\bs{W}_{ij})\mathds{1}_{\calQ_{\calU}}(\bs{W})\right.\nonumber\\
               &\left.\hspace{5.2cm}\prod\limits_{\{i,j\} \in E(\calU)}
               \theta(\bs{A}_{ij},\bar{\bs{W}}_{ij})
          \mathds{1}_{\calQ_{\calU}}(\bar{\bs{W}})\right)\\
          & = \frac{(\pr[\calQ_{\calU}])^2}{(\pr[\Gamma_{\calK}])^2}\pp{\frac{1}{\pr[\calQ_{\calU}]}\bE_{\bs{W}\vert\calU}\p{\prod\limits_{\{i,j\} \in E(\calU)}
               \theta(\bs{A}_{ij},\bs{W}_{ij})\mathds{1}_{\calQ_{\calU}}(\bs{W})}}^2\\
          &\leq(1+o(1))\cdot\pp{\frac{1}{\pr[\calQ_{\calU}]}\bE_{\bs{W}\vert\calU}\p{\prod\limits_{\{i,j\} \in E(\calU)}
               \theta(\bs{A}_{ij},\bs{W}_{ij})\mathds{1}_{\calQ_{\calU}}(\bs{W})}}^2,\label{eqn:gdef}
\end{align}
where the last inequality is because we have proved that $1-o(1)=\pr[\Gamma_{\calK}]\leq\pr[\calQ_{\calU}]\leq1$. 
Now recalling \eqref{eqn:TruncatedMoment0} and \eqref{eqn:TruncConcdLike} we readily see that the the right-hand side of \eqref{eqn:gdef} can be written as
\begin{align}
    g(\bs{A},\calK,\bar\calK) \leq (1+o(1))\cdot\pp{\frac{\tilde{\P}_{\calH_1\vert \calK\cap\bar\calK} (\mathbf{A}|_{\calK\cap\bar\calK})}{\P_{\calH_0} (\mathbf{A}|_{\calK\cap\bar\calK})}}^2 = (1+o(1))\tilde{\calL}^2_{\calK\cap\bar\calK}(\mathbf{A}|_{\calK\cap\bar\calK}).\label{eqn:condLikeGivenInter}
\end{align}
Combining the above we get that
\begin{align}
\label{eq:L-2-bound0}
    \mathbb{E}_{\mathcal{H}_0}\bigl[\tilde{\calL}^2\bigr]& \leq \pr\pp{\left.|\calK\cap\bar\calK|\leq1\right|\bs{W}\in\Gamma_{\calK},\bar{\bs{W}}\in\Gamma_{\bar\calK}}\nonumber\\
    &\qquad+\mathbb{E}_{\calK\independent\bar\calK}\bE_{\calH_0}\left[\p{\frac{\tilde{\P}_{\calH_1\vert \calK\cap\bar\calK} (\mathbf{A}|_{\calK\cap\bar\calK})}{\P_{\calH_0} (\mathbf{A}|_{\calK\cap\bar\calK})}}^2\Ind\ppp{|\calK\cap\bar\calK|\geq2}\right].
\end{align}
At this point, we observe that the conditional likelihood in \eqref{eqn:condLikeGivenInter} corresponds exactly to a hypothesis test between the Erd\H{o}s--R\'enyi distribution $\calG(|\calK\cap\bar\calK|,q)$ and the \emph{truncated geometric random graph distribution} $\tilde{\calG}_d(|\calK\cap\bar\calK|,q)$, which is defined as the standard geometric random graph model with the additional constraint that the latent vectors lie in the intersection of $\mathbb{S}^{d-1}$ and $\calQ_{\calU}$. Accordingly, let $Z\triangleq|\calK\cap\bar\calK|$, and note that $Z\sim\s{Hypergeometric}(n,k,k)$. Then,
\begin{align}
    \mathbb{E}_{\mathcal{H}_0}\bigl[\tilde{\calL}^2\bigr]& \leq \pr\pp{\left.Z\leq1\right|\bs{W}\in\Gamma_{\calK},\bar{\bs{W}}\in\Gamma_{\bar\calK}}\nonumber\\
    &\qquad+\mathbb{E}_{\calU}\pp{(1+\chi^2(\tilde{\calG}_d(Z,q)\|\calG(Z,q)))\Ind\ppp{Z\geq2}}\\
    &\leq 1+o(1)+\mathbb{E}_{\calU}\pp{\chi^2(\tilde{\calG}_d(Z,q)\|\calG(Z,q))\Ind\ppp{Z\geq2}}.\label{eq:L-2-bound}
\end{align}
Next, for a given $Z=z\geq 2$, we uniformly upper bound $\chi^2\bigl(\tilde{\calG}_d(z,q)\|\calG(z,q)\bigr)$ over $z$. To this end, following the approach of~\cite{bubeck2016testing}, it is convenient to view the graph ensemble distributions $\tilde{\calG}_d(z,q)$ and $\calG(z,q)$ as being generated through a truncation procedure applied to Wishart and Gaussian orthogonal ensemble (GOE) random matrices, as described below.

Specifically, recall that if $\bs{x}$ is a $d$-dimensional standard Gaussian vector, then $\bs{x}/||\bs{x}||_2$ is uniformly distributed on the sphere $\mathbb{S}^{d-1}$. Let $\bs{X}$ be a $z\times d$ matrix obtained from stacking $z$ i.i.d. standard Gaussian vectors in $\mathbb{R}^d$, and let $\bs{W}=\bs{X}\bs{X}^T$ be the corresponding Wishart matrix. Define the matrix $\bs{A}$ as
\begin{align}
\bs{A}_{ij} =
\begin{cases}
1,   & \bs{W}_{ij}/\sqrt{\bs{W}_{ii}\bs{W}_{jj}}\geq t_{q,d} \;\text{, and }\;i\neq j,\\
0, & i=j.
\end{cases}
\end{align}
Then $\bs{A}$ has the same law as the adjacency matrix of $\calG_d(z,q)$. We denote by $H_{q,d}$ the map defined by $H_{q,d}(\bs{W})=\bs{A}$. In the truncated case, we sample $\tilde{\bs{W}}$ from the Wishart distribution conditioned on the event $\calQ_{\calU}$ (that is, we repeatedly sample from the Wishart distribution until $\tilde{\bs{W}}\in\calQ_\calU$), and write $H_{q,d}(\tilde{\bs{W}})=\tilde{\bs{A}}\sim \tilde{\calG}_d(z,q)$.

We next describe an analogous construction for Erdős--Rényi graphs using Gaussian orthogonal ensemble matrices. Let $\bs{M}(z)$ be a GOE matrix: a symmetric $z\times z$ matrix with diagonal entries i.i.d.\ $\mathcal{N}(0,2)$ and off-diagonal entries i.i.d.\ $\mathcal{N}(0,1)$. Define 
\begin{align}
\bs{B}_{ij} =
\begin{cases}
1, & [\bs{M}(z)]_{ij} \ge \overline{\Phi}^{-1}(q) \text{, and } i\neq j,\\
0, & i=j.
\end{cases}
\end{align}
Then $\bs{B}$ has the same law as the adjacency matrix of $\calG(z,q)$. Since $\bs{B}$ depends only on off-diagonal entries, it is convenient to rescale and shift $\bs{M}(z)$ so that it matches the normalization of the Wishart ensemble. Define 
\begin{align}
\bs{M}(z,d) \triangleq  \sqrt{d}\bs{M}(z) + d\bs{I}_{z},\label{eqn:translationScaling}
\end{align}
and accordingly replace $\overline{\Phi}^{-1}(q)$ with $\sqrt{d}\overline{\Phi}^{-1}(q)$. We denote by $K_{q,d}$ the map taking $\bs{M}(z,d)$ to $\bs{B}$, i.e., $\bs{B}=K_{q,d}(\bs{M}(z,d))$. In the truncated case, $\tilde{\bs{M}}$ is sampled from the GOE distribution conditioned on the event $\calQ_\calU$, and we write $\tilde{\bs{B}} = K_{q,d}(\tilde{\bs{M}}(z,d))\sim \tilde{\calG}(z,q)$.
\begin{remark}
\label{rem:goe-probability}
    We are also interested in the probability of $\mathcal{Q}_{\mathcal{U}}$ under the GOE distribution. The proof is essentially identical to the one provided in the previous section. Specifically, the high probability of $\calQ_{\calU}^{\mathrm{fro}}$ under $\mathbb{P}_{\mathcal{H}_0}$ follows from the concentration of the $\chi^2_{z-1}$ distribution around its mean. Similarly, the high probability of $\calQ_{\calU}^{\mathrm{op}}$ is a consequence of the eigenvalues of $\mathbf{M}(z)$ concentrating in the interval $[-3\sqrt{z}, 3\sqrt{z}]$ (see, e.g., \cite[Theorem 4.4.3]{vershynin2018high}). This leads us to conclude that $\P_{\calH_0}(\Gamma_\calK)=1-o(1)$.
\end{remark}

We will frequently bound moments of likelihood ratios, so it is convenient to
introduce the following notation.

\begin{definition}[$D_m$–divergence]
Let $\P,\Q$ be two probability measures such that $\P$ is absolutely continuous with respect to $\Q$, i.e., $\P\ll \Q$. For $m\in\mathbb{N}$ define
\begin{align}
    D_m(\P\|\Q)
\triangleq
    \E_{\Q}\left[\left(\frac{\mathrm{d}\P}{\mathrm{d}\Q}\right)^{m}\right] - 1.
\end{align}
\end{definition}
It is rather a straightforward task to prove that $D_m(\P\|\Q)$ is non negative and convex, implying that it is an $f$ divergence \cite[Ch. 7]{polyanskiywuIT}. We will repeatedly use the following two lemmata in our analysis; the first is standard (e.g., \cite{polyanskiywuIT}), and the second is proved in Appendix~\ref{app:proofsLemmas}.
\begin{lemma}[Data processing inequality]
\label{lem:truncated_dpi}
Let $P_X, Q_X \in \mathcal{P}(\mathcal{X})$ and let $P_{Y|X}$ be a transition kernel from $\mathcal{X}$ to $\mathcal{Y}$. Let $P_Y$ and $Q_Y$ be the resulting marginal measures on $\mathcal{Y}$ such that $P_Y = P_X P_{Y|X}$ and $Q_Y = Q_X P_{Y|X}$. Then, for any $m\in\mathbb{N}$
\begin{align}
\label{eq:truncated_dpi}
    D_{m}(P_Y \| Q_Y) \leq   D_{m}(P_X \| Q_X).
\end{align}
\end{lemma}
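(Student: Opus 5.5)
The plan is to write the $m$-th moment of the likelihood ratio as an expectation against $Q_Y$ and push the convexity of $x\mapsto x^m$ through the kernel via Jensen's inequality. This is the standard argument that $D_m$ is an $f$-divergence with $f(x)=x^m-1$, which is convex on $[0,\infty)$ with $f(1)=0$. We may assume $P_X\ll Q_X$, since otherwise the right-hand side of \eqref{eq:truncated_dpi} is $+\infty$ by convention. Then $P_Y\ll Q_Y$: if $Q_Y(B)=\int P_{Y|X}(B|x)\,\mathrm{d}Q_X(x)=0$, then $P_{Y|X}(B|x)=0$ for $Q_X$-a.e. $x$, hence for $P_X$-a.e. $x$, so $P_Y(B)=0$.

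The first step is to identify the density $\mathrm{d}P_Y/\mathrm{d}Q_Y$. Form the joint laws $P_{XY}=P_XP_{Y|X}$ and $Q_{XY}=Q_XP_{Y|X}$. Because both share the same kernel, $\frac{\mathrm{d}P_{XY}}{\mathrm{d}Q_{XY}}(x,y)=\frac{\mathrm{d}P_X}{\mathrm{d}Q_X}(x)$. Let $Q_{X|Y}$ denote the backward channel (regular conditional distribution of $X$ given $Y$ under $Q_{XY}$), which exists on standard Borel spaces; in our application all spaces are finite or Euclidean, so this is harmless. Integrating out $x$ against $Q_{X|Y}$ then gives, $Q_Y$-a.s.,
\begin{align}
\frac{\mathrm{d}P_Y}{\mathrm{d}Q_Y}(y)=\E_{Q_{X|Y=y}}\pp{\frac{\mathrm{d}P_X}{\mathrm{d}Q_X}(X)}.
\end{align}
Verifying this identity is the only step that needs measure-theoretic care. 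It follows by checking $\int_B(\cdot)\,\mathrm{d}Q_Y=P_Y(B)$ for measurable $B$ via Fubini/Tonelli, which applies because all integrands are nonnegative.

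With the density in hand, apply conditional Jensen's inequality to the convex function $x\mapsto x^m$ on $[0,\infty)$ to get
\begin{align}
\p{\frac{\mathrm{d}P_Y}{\mathrm{d}Q_Y}(y)}^m\leq \E_{Q_{X|Y=y}}\pp{\p{\frac{\mathrm{d}P_X}{\mathrm{d}Q_X}(X)}^m}.
\end{align}
Integrating this against $Q_Y$ and using the tower property, $\int Q_{X|Y}\,\mathrm{d}Q_Y=Q_X$, gives $\E_{Q_Y}[(\mathrm{d}P_Y/\mathrm{d}Q_Y)^m]\leq \E_{Q_X}[(\mathrm{d}P_X/\mathrm{d}Q_X)^m]$. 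Subtracting $1$ from both sides yields \eqref{eq:truncated_dpi}. If the right-hand side is infinite there is nothing to prove; otherwise every quantity above is finite and the manipulations are justified.

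As an alternative, one could cite the general data processing inequality for $f$-divergences \cite[Ch.~7]{polyanskiywuIT} directly with $f(x)=x^m-1$. I would nonetheless include the short Jensen argument above so that the lemma is self-contained, in particular because it will later be applied to truncated Wishart/GOE laws pushed through the maps $H_{q,d}$ and $K_{q,d}$. Those maps are deterministic kernels, and in that case the backward channel is just conditioning on the fibers of the map.
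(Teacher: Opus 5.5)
Your proof is correct and follows the route the paper takes. The paper does not prove this lemma; it observes that $D_m$ is an $f$-divergence with $f(x)=x^m-1$ and invokes the standard data processing inequality from \cite[Ch.~7]{polyanskiywuIT}, and your conditional-Jensen argument through the backward channel is exactly the standard proof of that inequality. (The standard Borel caveat is unnecessary: working with the conditional expectation of $\frac{\mathrm{d}P_X}{\mathrm{d}Q_X}(X)$ given $Y$ under $Q_{XY}$ gives the same argument on arbitrary measurable spaces.)
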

\begin{lemma}[Cauchy--Schwarz inequality]
\label{lem:CS-Dm}
Let $\P,\R,\Q$ be three measures over the same probability space, such that, $\P\ll \R\ll \Q$. Then, for any $m\in\mathbb{N}$
\begin{align}
\label{eq:CS-Dm}
    1+D_m(\P||\Q)\le \sqrt{1+D_{2m}(\P||\R)}\sqrt{1+D_{2m-1}(\R||\Q)}.
\end{align}
\end{lemma}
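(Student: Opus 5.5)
The plan is to write everything in terms of the two Radon--Nikodym derivatives $f\triangleq \mathrm{d}\P/\mathrm{d}\R$ and $h\triangleq \mathrm{d}\R/\mathrm{d}\Q$, use the chain rule to express $\mathrm{d}\P/\mathrm{d}\Q$, and then move from integration under $\Q$ to integration under $\R$. A single application of the Cauchy--Schwarz inequality under $\R$ should then split the two factors, and each factor should be exactly one of the divergences in \eqref{eq:CS-Dm}. I treat $\P,\R,\Q$ as probability measures, as in the definition of $D_m$.

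\textbf{Step 1 (chain rule).} Since $\P\ll\R\ll\Q$, the chain rule for Radon--Nikodym derivatives gives $\mathrm{d}\P/\mathrm{d}\Q = f\,h$, $\Q$-almost surely. Hence
\begin{align}
1+D_m(\P\|\Q)=\E_{\Q}\pp{f^m h^m}.
\end{align}

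\textbf{Step 2 (change of measure).} For any nonnegative measurable $g$ we have $\E_{\Q}[g\,h]=\E_{\R}[g]$. Applying this with $g=f^m h^{m-1}$ yields
\begin{align}
1+D_m(\P\|\Q)=\E_{\R}\pp{f^m h^{m-1}}.
\end{align}
This step needs care on the set $\{h=0\}$. That set is $\R$-null, and on it $\mathrm{d}\P/\mathrm{d}\Q=0$ $\Q$-a.s., so the identity holds regardless of how $f$ is defined there. (When $m=1$ the factor $h^{m-1}\equiv1$ and nothing needs checking.)

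\textbf{Step 3 (Cauchy--Schwarz under $\R$).} Cauchy--Schwarz gives
\begin{align}
\E_{\R}\pp{f^m h^{m-1}}\le \sqrt{\E_{\R}\pp{f^{2m}}}\sqrt{\E_{\R}\pp{h^{2m-2}}}.
\end{align}

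\textbf{Step 4 (identify the factors).} The first factor is, by definition, $\E_{\R}[f^{2m}]=1+D_{2m}(\P\|\R)$. For the second factor I change measure back to $\Q$ using the identity from Step 2:
\begin{align}
\E_{\R}\pp{h^{2m-2}}=\E_{\Q}\pp{h^{2m-1}}=1+D_{2m-1}(\R\|\Q).
\end{align}
Combining Steps 1--4 gives \eqref{eq:CS-Dm}. If either right-hand side quantity is infinite, the inequality holds trivially.

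The only point needing any care is the measure-theoretic bookkeeping in Steps 1--2, namely the a.s. chain rule and the null set $\{h=0\}$; I do not expect a genuine obstacle there.
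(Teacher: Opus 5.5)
Your proof is correct and is essentially the paper's argument. The paper writes all three measures as densities against a common dominating measure, rewrites $1+D_m(\P\|\Q)$ as $\int (\mathrm{d}\P/\mathrm{d}\R)^m(\mathrm{d}\R/\mathrm{d}\Q)^{m-1}\,\mathrm{d}\R$, applies Cauchy--Schwarz under $\R$, and converts $\int(\mathrm{d}\R/\mathrm{d}\Q)^{2m-2}\,\mathrm{d}\R$ back to $1+D_{2m-1}(\R\|\Q)$. That is exactly your Steps 2--4; using the Radon--Nikodym chain rule instead of a dominating measure is only a cosmetic difference. Your version also lands directly on the stated factor $1+D_{2m}(\P\|\R)$, whereas the paper's write-up takes an unnecessary and generally false detour bounding it by $1+D_{2m}(\P\|\Q)$ (take $\P=\Q\neq\R$).
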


For simplicity of notations, we denote by $\tilde\P_{1}^{(z)}$ and $\P_0^{(z)}$ the probability measures induced by $\tilde{\calG}_d(z,q)\overset{d}{=}H_{q,d}(\tilde{\bs{W}}(z,d))$ and $\calG(z,q)\overset{d}{=}K_{q,d}(\bs{M}(z,d))$, respectively. Furthermore, we denote by $\tilde{\P}_{0}^{(z)}$ the probability measures induced by $H_{q,d}(\tilde{\mathbf{M}}(z,d))$. Then, using the above notations, our goal is to upper bound $\chi^{2}(\tilde\P_{1}^{(z)}\|\P_0^{(z)}) = D_{2}(\tilde\P_{1}^{(z)}\|\P_0^{(z)})$. By Lemma~\ref{lem:CS-Dm} we have
\begin{align}
\label{eq:master-bound}
1+\chi^{2}(\tilde\P_{1}^{(z)}\|\P_0^{(z)})
\leq 
\sqrt{1+D_{4}(\tilde\P_{1}^{(z)}\|\tilde{\P}_{0}^{(z)})}
\sqrt{1+D_{3}(\tilde{\P}_{0}^{(z)}\|\P_0^{(z)})}.
\end{align}
We now bound each of the two terms on the right-hand side separately. Throughout, $z\in[k]$ is fixed, and to lighten the notation, we suppress the superscript $(z)$ from the measures.

\paragraph{Bounding $D_{4}(\tilde\P_{1}^{(z)}\|\tilde{\P}_{0}^{(z)})$.}
Observe that under both $\tilde\P_{1}$ and $\tilde{\P}_{0}$ we map the random matrices $\tilde{\bs{W}}(z,d)$ and $\tilde{\bs{M}}(z,d)$, respectively, using the \emph{same} map $H_{q,d}$. Accordingly, by applying the DPI in Lemma~\ref{lem:truncated_dpi}, we obtain
\begin{align}
\label{eq:first-term-dpi}
    1+D_{4}(\tilde\P_1^{(z)}\|\tilde\P_0^{(z)})=&1+D_{4}(H_{q,d}(\tilde{\bs{W}}(z,d))\|H_{q,d}(\tilde{\bs{M}}(z,d)))\\
    &\le 1+ D_{4}(\tilde{\bs{W}}(z,d)||\tilde{\bs{M}}(z,d)).
\end{align}
Let $\calP\subset\R^{z^2}$ denote the cone of positive semidefinite matrices, and let $f_{z,d}$ be the density of $\bs{W}(z,d)$ with respect to the Lebesgue measure on $\calP$ when $d\ge z$. This condition holds since $z\le k$ and, in the impossible regime, $d/(k^2\log^2k)\to\infty$. Furthermore, let $g_{z,d}$ denote the
density of $\bs{M}(z,d)$ with respect to the Lebesgue measure on $\R^{z^2}$. Accordingly, let $\tilde{f}_{z,d}$ and $\tilde{g}_{z,d}$ be the corresponding densities of $\tilde{\bs{W}}(z,d)$ and $\tilde{\bs{M}}(z,d)$, respectively, that is, the conditional densities given $\calQ_{\calU}$, i.e., $\tilde{f}_{z,d}(\cdot)=f_{z,d}(\cdot\vert\calQ_\calU)$ and $\tilde{g}_{z,d}(\cdot)=g_{z,d}(\cdot\vert\calQ_\calU)$. We observe that
\begin{align}
\tilde{f}_{z,d}(\cdot) = \frac{1}{c_1}f_{z,d}(\cdot)\Ind\{\cdot\in\calQ_{\calU}\}
\quad\s{and}\quad
\tilde{g}_{z,d}(\cdot) = \frac{1}{c_2}g_{z,d}(\cdot)\Ind\{\cdot\in\calQ_{\calU}\},
\end{align}
where $c_1 = \int_{\calQ_{\calU}}f_{z,d}(\bs{A})\mathrm{d}\bs{A}$ and $c_2 = \int_{\calQ_{\calU}}g_{z,d}(\bs{A})\mathrm{d}\bs{A}$ are the normalization constants, and we note that because $\pr(\calQ_{\calU})=1-o(1)$, we have $c_1=1-o(1)$, and by (\ref{rem:goe-probability}), that $c_2=1-o(1)$.
Define $\alpha_{z,d}(\bs{A})\triangleq\log(f_{z,d}(\bs{A})/g_{z,d}(\bs{A}))$ and $\tilde\alpha_{z,d}(\bs{A})\triangleq\log(\tilde f_{z,d}(\bs{A})/\tilde g_{z,d}(\bs{A}))$. Using the above, we clearly have
\begin{align}
1+ D_{4}(\tilde{\bs{W}}(z,d)||\tilde{\bs{M}}(z,d))&=
\mathbb{E}_{\bs{A}\sim\tilde{\bs{M}}(z,d)}\left[\exp\left(4\tilde{\alpha}_{z,d}(\bs{A})\right)\Ind_{\calP}\p{\bs{A}}\right]\\
& \leq (1+o(1))\cdot \mathbb{E}_{\bs{A}\sim\tilde{\bs{M}}(z,d)}\left[\exp\left(4\alpha_{z,d}(\bs{A})\right)\Ind_{\calP}\p{\bs{A}}\right].\label{eqn:D4}
\end{align}
The following is a core estimate on $\alpha_{z,d}(\bs{A})$ used in the proof of \cite[Thm. 7]{bubeck2016testing}.
\begin{lemma}[{\cite[eqns. (36)--(37)]{bubeck2016testing}}]\label{lem:alphaProx} For any $\bs{A}\in\calP$, let $\ppp{\lambda_i}_{i=1}^z$ denote the eigenvalues of $\bs{A}$. With probability one,
\begin{align}
    \alpha_{z,d}(\bs{A})=\sum_{i=1}^zh(\lambda_i)+O\left(\frac{z^3}{d}\right),
\end{align}
as $d-z\to\infty$, where 
\begin{align}
    h(x)\triangleq-\frac{z+1}{2d}(x-d)+\frac{z+1}{4d^2}(x-d)^2+\frac{d-z-1}{6d^3}(x-d)^3+\frac{d-z-1}{6\xi^4}(x-d)^4,\label{eqn:hFuncTerm}
\end{align}
and $\xi$ is some real number between $x$ and $d$.
\end{lemma}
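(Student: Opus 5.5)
The plan is to compute $\alpha_{z,d}$ exactly from the two explicit densities and then Taylor-expand eigenvalue by eigenvalue. For $d\ge z$, the Wishart density on the cone $\calP$ is
\[
f_{z,d}(\bs{A})=\frac{\det(\bs{A})^{(d-z-1)/2}\,e^{-\s{trace}(\bs{A})/2}}{2^{dz/2}\,\pi^{z(z-1)/4}\prod_{i=1}^{z}\Gamma\!\p{\frac{d+1-i}{2}}}.
\]
By \eqref{eqn:translationScaling}, $\bs{M}(z,d)$ has off-diagonal entries $\calN(0,d)$ and diagonal entries $\calN(d,2d)$. Hence
\[
g_{z,d}(\bs{A})=\frac{\exp\!\p{-\norm{\bs{A}-d\bs{I}_z}_F^2/(4d)}}{(4\pi d)^{z/2}(2\pi d)^{z(z-1)/4}}.
\]
Both exponents are spectral functions: $\log\det\bs{A}=\sum_i\log\lambda_i$, $\s{trace}(\bs{A})=\sum_i\lambda_i$, and $\norm{\bs{A}-d\bs{I}_z}_F^2=\sum_i(\lambda_i-d)^2$. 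Therefore
\[
\alpha_{z,d}(\bs{A})=\sum_{i=1}^z\Bigl[\tfrac{d-z-1}{2}\log\lambda_i-\tfrac{\lambda_i}{2}+\tfrac{(\lambda_i-d)^2}{4d}\Bigr]+c_{z,d},
\]
where $c_{z,d}$ is the log of the ratio of the normalizing constants. The statement holds almost surely because $\bs{A}\in\calP$ has positive eigenvalues a.s., so every logarithm is finite.

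Next I expand $\log x$ around $d$ to third order with a Lagrange remainder:
\[
\log x=\log d+\tfrac{x-d}{d}-\tfrac{(x-d)^2}{2d^2}+\tfrac{(x-d)^3}{3d^3}-\tfrac{(x-d)^4}{4\xi^4},
\]
with $\xi$ between $x$ and $d$. Substituting and collecting powers of $(x-d)$ in each summand gives the following.
\begin{itemize}
\item Linear term: $\tfrac{d-z-1}{2d}(x-d)-\tfrac{x-d}{2}$. After writing $-\tfrac{x}{2}=-\tfrac{x-d}{2}-\tfrac d2$, this equals $-\tfrac{z+1}{2d}(x-d)$.
\item Quadratic term: $-\tfrac{d-z-1}{4d^2}(x-d)^2+\tfrac{1}{4d}(x-d)^2=\tfrac{z+1}{4d^2}(x-d)^2$.
\item Cubic term: $\tfrac{d-z-1}{6d^3}(x-d)^3$.
\item Quartic term: a multiple of $\tfrac{d-z-1}{\xi^4}(x-d)^4$.
\end{itemize}
These reproduce $h$ in \eqref{eqn:hFuncTerm}. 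The numerical constant in front of the quartic remainder is whatever the Lagrange form produces, and only its order matters downstream.

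This leaves the constant $C_{z,d}\triangleq c_{z,d}+z\bigl(\tfrac{d-z-1}{2}\log d-\tfrac d2\bigr)$, which collects all $\bs{A}$-independent terms. The claim then reduces to $C_{z,d}=O(z^3/d)$ as $d-z\to\infty$, and this is the step I expect to be the main obstacle. The bookkeeping is delicate, because individual terms of order $dz$, $z^2\log d$, and $z\log d$ must cancel exactly.

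To handle it, I apply Stirling's formula with its error term, $\log\Gamma(y)=(y-\tfrac12)\log y-y+\tfrac12\log(2\pi)+O(1/y)$, to each $\Gamma\!\p{\tfrac{d+1-i}{2}}$. I then write $\log\tfrac{d+1-i}{2}=\log\tfrac d2+\log\!\bigl(1-\tfrac{i-1}{d}\bigr)$ and expand the latter to second order. Summing over $i\le z$, the contributions of order $dz$, $z^2\log d$ and $z\log d$ should cancel against $z\tfrac{d-z-1}{2}\log d$, $-\tfrac{zd}{2}$, and the powers of $2$, $\pi$ and $d$ in both normalizers. After this cancellation, the surviving terms are of the form $\sum_i (i-1)^2/d$ and $\sum_i 1/(d-i)$, which are $O(z^3/d)$ and $O(z/d)$ respectively. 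I would double-check the cancellation against the known identity that, for $z$ fixed, $\bs{W}(z,d)$ and $\bs{M}(z,d)$ agree in total variation as $d\to\infty$, so that the constant must vanish in that limit. This is consistent with the computation underlying \cite[Thm.~7]{bubeck2016testing}, whose eqns.~(36)--(37) record exactly this expansion.
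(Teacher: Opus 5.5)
Your proposal is correct and follows the same route as the derivation in \cite{bubeck2016testing} that the paper cites in place of a proof: write out the explicit Wishart and shifted-GOE densities, reduce $\alpha_{z,d}$ to a spectral sum plus an $\bs{A}$-independent constant that Stirling's formula shows is $O(z^3/d)$, then Taylor-expand $\log\lambda_i$ about $d$ to third order. The cancellation you anticipate in $C_{z,d}$ does go through, and your Lagrange remainder gives the quartic term $-\frac{d-z-1}{8\xi^4}(x-d)^4$. That is the correct coefficient; the $+\frac{d-z-1}{6\xi^4}$ in the displayed statement is a typo, and the paper itself uses the $\frac{1}{8}$ coefficient (with a minus sign) in \eqref{eqn:sqaurequara}.
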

Applying Lemma~\ref{lem:alphaProx} on \eqref{eqn:D4}, we obtain
\begin{align}
1+ D_{4}(\tilde{\bs{W}}(z,d)||\tilde{\bs{M}}(z,d))
&=e^{O\left(\frac{z^3}{d}\right)}\cdot\mathbb{E}_{\bs{A}\sim\tilde{\bs{M}}(z,d)}\left[\exp\left(4\sum_{i=1}^zh(\lambda_i)\right)\mathds{1}_{\{\bs{A}\in\calP\}}\right].\label{eqn:D4int1}
\end{align}
Next, recall that under $\calQ_\calU$ the operator norm constraint in \eqref{eqn:truncationset3} implies that $\abs{\lambda_i-d}\leq L^{\s{op}}_{k,z,d}= C_2(\sqrt{dz}+\sqrt{d\log \binom{k}{z}})\leq 2C_2\sqrt{dz\log k}$, for all $i\in[z]$. Furthermore, because $\xi_i$ is between $\lambda_i$ and $d$ for all $i\in[z]$, we get
\begin{align}
    \sum_{i=1}^z\abs{\frac{z+1}{4d^2}(\lambda_i-d)^2-\frac{d-z-1}{8\xi^4}(\lambda_i-d)^4}\le O\left(\frac{z^3\log^2k}{d}\right).\label{eqn:sqaurequara}
\end{align}
It remains to bound the linear and cubic terms in \eqref{eqn:hFuncTerm}. Define $h_1(x)\triangleq-\frac{z+1}{2d}(x-d)$ and $h_3(x)\triangleq\frac{d-z-1}{6d^3}(x-d)^3$. Then, combining \eqref{eqn:D4int1} and \eqref{eqn:sqaurequara}, we have
\begin{align}
    1+ D_{4}(\tilde{\bs{W}}(z,d)||\tilde{\bs{M}}(z,d))&\leq e^{O\left(\frac{z^3\log^2k}{d}\right)}\cdot\mathbb{E}_{\bs{A}\sim\tilde{\bs{M}}(z,d)}\left[\exp\left(4\sum_{i=1}^zh_1(\lambda_i)+h_3(\lambda_i)\right)\mathds{1}_{\{\bs{A}\in\calP\}}\right]\\
    &\leq e^{O\left(\frac{z^3\log^2k}{d}\right)}\cdot\s A^{1/2}\cdot\s B^{1/2},\label{eq:D4-A-B-bound}
\end{align}
where the last passage follows from Cauchy-Schwarz inequality and we define
\begin{align}
\label{eq:D4-A}
\s A \triangleq \mathbb{E}_{\bs{A}\sim\tilde{\bs{M}}(z,d)}\left[\exp\left(8\sum_{i=1}^zh_1(\lambda_i)\right)\mathds{1}_{\{\bs{A}\in\calP\}}\right],
\end{align}
and
\begin{align}
\label{eq:D4-B}
\s B \triangleq \mathbb{E}_{\bs{A}\sim\tilde{\bs{M}}(z,d)}\left[\exp\left(8\sum_{i=1}^zh_3(\lambda_i)\right)\mathds{1}_{\{\bs{A}\in\calP\}}\right].
\end{align}
Let us upper bound $\s{A}$ and $\s{B}$, starting with the former. First, note that
\begin{align}
\sum_{i=1}^z h_1(\lambda_i)
=
-\frac{z+1}{2d}\Tr\bigl(\bs{A}-d\bs{I}_z\bigr),
\end{align}
and so,
\begin{align}
\s A
&\leq
(1+o(1))\cdot\mathbb{E}_{\bs{A}\sim\bs{M}(z)}
\left[
\exp\left(
-8\frac{z+1}{2\sqrt{d}}\Tr\bigl(\bs{A}\bigr)
\right)
\right] \\
&=
(1+o(1))\cdot\left(
\mathbb{E}_{X\sim\calN(0,2)}
\left[
\exp\left(
-8\frac{z+1}{2\sqrt{d}}X
\right)
\right]
\right)^{z} \\
\label{eq:D4-A-bound}
&=
\exp\left(
O\left(\frac{z^3}{d}\right)
\right),
\end{align}
where in the first inequality we have removed the truncation and use the fact that $c_2 = 1-o(1)$, the first equality is because $[\bs{M}(z)]_{ii}\stackrel{\s{i.i.d.}}{\sim}\calN(0,2)$, and in the second equality we used the fact that $\mathbb{E}[\exp(tX)]=\exp(\sigma^2 t^2/2)$ for $X\sim\calN(0,\sigma^2)$.
Next, we bound $\s{B}$. Denote $G(\bs{A})\triangleq\Tr(\bs{A}^3)$, and note that 
\begin{align}
    \sum_{i=1}^zh_3(\lambda_i)=\frac{d-z-1}{6d^3}G(\bs{A}-d\bs{I}_z).
\end{align}
Before continuing, we collect the statements that will be used in the proof. We first recall the following definition.
\begin{definition}[Log--Sobolev inequality]
Let $\mu$ be a probability measure on $\mathbb{R}^n$ that is absolutely continuous with respect to Lebesgue measure. We say that $\mu$ satisfies a \emph{log--Sobolev inequality} with constant $C_{\s{LSI}} > 0$ if for all smooth functions $f : \mathbb{R}^n \to \mathbb{R}$ with $f^2$ integrable,
\begin{align}
    \s{Ent}_{\mu}(f^2)
\leq
2 C_{\s{LSI}} \int_{\mathbb{R}^n} \|\nabla f(x)\|^2  \mathrm{d}\mu(x),
\end{align}
where the entropy with respect to $\mu$ is defined by
\begin{align}
\s{Ent}_{\mu}(g)\triangleq\int g \log g  \mathrm{d}\mu-\left(\int g  \mathrm{d}\mu\right)
\log\left(\int g  \mathrm{d}\mu\right).
\end{align}
The smallest constant $C_{\s{LSI}}$ for which this inequality holds is called the \emph{log--Sobolev (LSI) constant} of $\mu$.
\end{definition}
\begin{lemma}[{\cite[Lemma 2.3.3]{Anderson_Guionnet_Zeitouni_2009}}] 
\label{lem:lipschitz-concentraion}
Let $P$ be a probability measure satisfying the LSI on $\R^M$ with constant $C_{\s{LSI}}$. Let $G$ be a Lipschitz function on $\R^M$, with Lipschitz constant $\abs{G}_\calL$. Then for all $\beta\in\R$,
\begin{align}
    \E_P\left[\exp\left(\beta\left(G-\E_P[G]\right)\right)\right]\le\exp\left(C_{\s{LSI}}\beta^2\abs{G}_\calL^2/2\right).
\end{align}
\end{lemma}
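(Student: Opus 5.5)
The statement is the classical Herbst argument: a log--Sobolev inequality yields sub-Gaussian concentration for Lipschitz functions. The plan is to derive a differential inequality for the logarithmic moment generating function and then integrate it.

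By translation invariance we first reduce to $\E_P[G]=0$. It suffices to treat $\beta>0$, since replacing $G$ by $-G$ handles $\beta<0$, and $\beta=0$ is trivial. By a standard approximation (convolving with a smooth mollifier and truncating, which keeps the Lipschitz constant at most $\abs{G}_\calL$), we may also assume that $G$ is smooth and bounded with $\|\nabla G\|\le \abs{G}_\calL$ pointwise. The general case then follows by monotone or dominated convergence, using Fatou's lemma on the left-hand side.

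Next set $\Lambda(\beta)\triangleq \E_P[e^{\beta G}]$ and apply the LSI with $f=e^{\beta G/2}$. Then $\|\nabla f\|^2=\frac{\beta^2}{4}\|\nabla G\|^2 e^{\beta G}\le \frac{\beta^2\abs{G}_\calL^2}{4}e^{\beta G}$, so the LSI gives
\begin{align}
\beta\Lambda'(\beta)-\Lambda(\beta)\log\Lambda(\beta)=\s{Ent}_P(e^{\beta G})\le \frac{C_{\s{LSI}}\beta^2\abs{G}_\calL^2}{2}\Lambda(\beta),
\end{align}
where we used $\E_P[\beta G e^{\beta G}]=\beta\Lambda'(\beta)$. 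Differentiation under the integral sign is justified by the boundedness of $G$.

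Dividing by $\beta^2\Lambda(\beta)$, the left-hand side becomes $\frac{\mathrm{d}}{\mathrm{d}\beta}\bigl(\beta^{-1}\log\Lambda(\beta)\bigr)$, which is therefore at most $C_{\s{LSI}}\abs{G}_\calL^2/2$. As $\beta\downarrow0$ we have $\beta^{-1}\log\Lambda(\beta)\to \E_P[G]=0$. Integrating from $0$ to $\beta$ gives $\log\Lambda(\beta)\le C_{\s{LSI}}\beta^2\abs{G}_\calL^2/2$, which is the claim.

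The only delicate point is the approximation step, namely extending from smooth bounded $G$ to general Lipschitz $G$. This requires checking that $\E_P[e^{\beta G}]<\infty$ a priori. The cleanest route is to first apply the bound to the truncations $\min(\max(G,-N),N)$, which are Lipschitz with the same constant and have uniformly bounded sub-Gaussian MGFs, and then let $N\to\infty$.
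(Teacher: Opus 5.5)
Your proposal is correct and is the Herbst argument behind the cited result. The paper gives no proof of its own, only the reference to Anderson--Guionnet--Zeitouni (Lemma~2.3.3), which likewise applies the log--Sobolev inequality to an exponential of $G$, integrates the same differential inequality for $\beta^{-1}\log\Lambda(\beta)$ from $0$, and extends from smooth bounded $G$ to general Lipschitz $G$ by smoothing and truncation.

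One correction to your last paragraph: the limit $N\to\infty$ really needs $\E_P[G_N]\to\E_P[G]$, i.e.\ $G\in L^1(P)$, not a priori finiteness of $\E_P[e^{\beta G}]$. This is immediate by dominated convergence if $\E_P[G]$ is taken as given. Otherwise apply the bounded case to $|G|\wedge N$ and use a median argument to get $\sup_N\E_P[|G|\wedge N]<\infty$. Fatou's lemma then gives the bound.
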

\begin{lemma}[{\cite[Lemma 2.3.2]{Anderson_Guionnet_Zeitouni_2009}}]
\label{lem:lsi-constant}
Let $P$ be a Gaussian law with mean zero and variance $\sigma^2$. Then, the $C_{\s{LSI}}=\sigma^2$, additionally, if $P=\bigotimes_{i=1}^n P_i$ such that each probability measure $P_i$ has LSI constant $C_{i}$ then $C_{\s{LSI}}=\max_i{C_i}$.
\end{lemma}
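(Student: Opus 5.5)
We would prove the two claims of Lemma~\ref{lem:lsi-constant} separately: first the one-dimensional Gaussian case, then tensorization. With the normalization used in the definition above, $\s{Ent}_\mu(f^2)\le 2C_{\s{LSI}}\int\|\nabla f\|^2\mathrm{d}\mu$, the target for the standard Gaussian is Gross's inequality with constant $1$. Scaling then gives the constant $\sigma^2$.

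\emph{Step 1 (standard Gaussian).} We would use the Bakry--\'Emery semigroup argument for the Ornstein--Uhlenbeck semigroup $P_t f(x)=\E[f(e^{-t}x+\sqrt{1-e^{-2t}}\,Y)]$ with $Y\sim\calN(0,1)$, which has $\gamma=\calN(0,1)$ as invariant measure.
\begin{itemize}
\item Set $g=f^2$, assumed smooth and positive, and extend by approximation afterwards. Write $\s{Ent}_\gamma(g)=-\int_0^\infty \frac{\mathrm{d}}{\mathrm{d}t}\int P_tg\log P_tg\,\mathrm{d}\gamma\,\mathrm{d}t$.
\item Integration by parts gives $\frac{\mathrm{d}}{\mathrm{d}t}\int P_tg\log P_tg\,\mathrm{d}\gamma=-\int \frac{|(P_tg)'|^2}{P_tg}\mathrm{d}\gamma$.
\item The commutation $(P_tg)'=e^{-t}P_t(g')$ together with Cauchy--Schwarz, $|P_t(g')|^2\le P_tg\cdot P_t(|g'|^2/g)$, yields $\int \frac{|(P_tg)'|^2}{P_tg}\mathrm{d}\gamma\le e^{-2t}\int \frac{|g'|^2}{g}\mathrm{d}\gamma$, using invariance of $\gamma$.
\item Integrating over $t\in[0,\infty)$ gives $\s{Ent}_\gamma(g)\le\frac12\int \frac{|g'|^2}{g}\mathrm{d}\gamma=2\int|f'|^2\mathrm{d}\gamma$, i.e.\ $C_{\s{LSI}}=1$.
\end{itemize}
For $\calN(0,\sigma^2)$, apply this to $\tilde f(y)=f(\sigma y)$. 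Since $\tilde f'(y)=\sigma f'(\sigma y)$, we get $\s{Ent}(f^2)\le 2\sigma^2\int|f'|^2$, so $C_{\s{LSI}}=\sigma^2$. Optimality, should it be needed, follows by testing $f(x)=e^{\lambda x}$: both sides are explicit, and the ratio tends to $\sigma^2$ as $\lambda\to0$.

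\emph{Step 2 (tensorization).} For $P=\bigotimes_i P_i$, we would establish subadditivity of entropy:
\[
\s{Ent}_P(g)\le\sum_{i=1}^n\E_P\bigl[\s{Ent}_{P_i}(g)\bigr],
\]
where $\s{Ent}_{P_i}$ acts on the $i$-th coordinate with the others frozen. The cleanest route is the variational formula $\s{Ent}_P(g)=\sup\{\E_P[gh]:\E_P e^h\le1\}$. Write $\log g-\log\E g=\sum_i h_i$ through the telescoping conditional expectations $h_i=\log\E[g\mid x_1,\dots,x_i]-\log\E[g\mid x_1,\dots,x_{i-1}]$. Each $h_i$ satisfies $\E_{P_i}e^{h_i}\le1$ coordinatewise, so the variational formula applied conditionally bounds each term by $\E[\s{Ent}_{P_i}(g)]$.

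Applying the one-dimensional LSI in each coordinate then gives
\[
\s{Ent}_P(f^2)\le\sum_i 2C_i\,\E|\partial_if|^2\le 2\max_iC_i\,\E\|\nabla f\|^2 .
\]

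\emph{Main obstacle.} The only delicate point is the regularity needed in Step 1: interchanging derivatives and integrals, and allowing $f$ to vanish. We would handle this by proving the inequality for $f^2+\epsilon$ with $f$ smooth and compactly supported. We would then pass $\epsilon\to0$ by monotone convergence and extend to general smooth $f$ with $f^2$ integrable by truncation and density. Alternatively, since the statement is quoted from \cite[Lemmas~2.3.2]{Anderson_Guionnet_Zeitouni_2009}, we could simply invoke that reference.
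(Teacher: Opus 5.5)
Your proof is correct. It differs from the paper mainly in that the paper gives no proof at all: it imports the lemma from Anderson--Guionnet--Zeitouni and uses it only as a black box, to feed the GOE constant $2$ into the Herbst bound (Lemma~\ref{lem:lipschitz-concentraion}).

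You instead make the lemma self-contained, in three steps:
\begin{itemize}
\item Gross's inequality for $\calN(0,1)$ via the Ornstein--Uhlenbeck/Bakry--\'Emery computation. The commutation $(P_tg)'=e^{-t}P_t(g')$ plus Cauchy--Schwarz yields exactly the factor $\int_0^\infty e^{-2t}\,\mathrm{d}t=\tfrac12$, i.e.\ constant $1$ in the paper's normalization $\s{Ent}(f^2)\le 2C_{\s{LSI}}\int\|\nabla f\|^2$.
\item Scaling, to get $\sigma^2$.
\item Subadditivity of entropy via the dual formula $\s{Ent}_P(g)=\sup\{\E_P[gh]:\E_Pe^h\le 1\}$. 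The product structure is exactly what gives $\int e^{h_i}\,\mathrm{d}P_i=1$ with the other coordinates frozen.
\end{itemize}
All of these steps check out. Your regularization ($f^2+\epsilon$ with $f$ smooth and compactly supported) handles the integrations by parts and the $t\to\infty$ limit. The citation buys brevity; your route buys a checkable argument with the constant's normalization made explicit. That normalization matters, because the paper plugs the numerical value straight into Herbst's bound.

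Two small remarks.

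First, exponentials are exact extremizers. For $f=e^{\lambda x}$ under $\calN(0,\sigma^2)$ one gets $\s{Ent}(f^2)=2\lambda^2\sigma^2e^{2\lambda^2\sigma^2}=2\sigma^2\,\E|f'|^2$ for every $\lambda$, so no limit $\lambda\to0$ is needed.

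Second, the paper defines $C_{\s{LSI}}$ as the smallest admissible constant, and the statement asserts $C_{\s{LSI}}=\max_iC_i$ for the product. Your tensorization only gives $\le$. The reverse inequality is one line: test functions depending only on the $i$-th coordinate, for which $\s{Ent}_P(f^2)=\s{Ent}_{P_i}(f^2)$ and $\|\nabla f\|^2=|\partial_if|^2$. This shows any constant admissible for $P$ is admissible for $P_i$. In any case, only the upper bound is used downstream.
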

\begin{lemma}[McShane-Whitney extension theorem (e.g., {\cite[Thm. 1.1]{Gutev_2025}})] 
\label{thm:lipschitz-extension}
Let $(X,d)$ be a metric space and $A\subset X$. Then every $L$-Lipschitz function $f:A\to\R$ can be extended to an $L$-Lipschitz function $f^\star:X\to\R$.
\end{lemma}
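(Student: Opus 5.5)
The plan is to write down the extension explicitly as an infimal convolution and then check its properties directly. We may assume $A\neq\emptyset$; otherwise any constant function works. For $x\in X$ define
\begin{align}
f^\star(x)\triangleq\inf_{a\in A}\bigl\{f(a)+L\,d(x,a)\bigr\}.
\end{align}
We then verify three things in turn: $f^\star$ is real-valued, $f^\star$ agrees with $f$ on $A$, and $f^\star$ is $L$-Lipschitz on $X$.

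\emph{Finiteness.} Fix any $a_0\in A$. Then $f^\star(x)\le f(a_0)+L\,d(x,a_0)<\infty$. For a lower bound, take any $a\in A$. The Lipschitz property of $f$ on $A$ gives $f(a)\ge f(a_0)-L\,d(a,a_0)$, and the triangle inequality gives $d(a,a_0)\le d(x,a)+d(x,a_0)$. Combining these,
\begin{align}
f(a)+L\,d(x,a)\ge f(a_0)-L\,d(x,a_0).
\end{align}
Taking the infimum over $a$ shows $f^\star(x)\ge f(a_0)-L\,d(x,a_0)>-\infty$.

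\emph{Extension property.} Let $x\in A$. Choosing $a=x$ in the infimum gives $f^\star(x)\le f(x)$. Conversely, for every $a\in A$ we have $f(a)+L\,d(x,a)\ge f(x)$ by the Lipschitz property, so $f^\star(x)\ge f(x)$. Hence $f^\star=f$ on $A$.

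\emph{Lipschitz property.} Let $x,y\in X$. For every $a\in A$ the triangle inequality gives
\begin{align}
f(a)+L\,d(x,a)\le f(a)+L\,d(y,a)+L\,d(x,y).
\end{align}
Taking the infimum over $a$ on both sides yields $f^\star(x)\le f^\star(y)+L\,d(x,y)$. Swapping the roles of $x$ and $y$ gives $|f^\star(x)-f^\star(y)|\le L\,d(x,y)$.

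The only step needing real care is finiteness, since without it the infimum could be $-\infty$ and $f^\star$ would not be real-valued. That is exactly why the lower bound is anchored at a fixed point $a_0\in A$. The remaining two steps are direct applications of the triangle inequality.
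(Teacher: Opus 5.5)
Your proof is correct and complete: the McShane infimal convolution $f^\star(x)=\inf_{a\in A}\{f(a)+L\,d(x,a)\}$ is well defined, and finiteness (anchored at a fixed $a_0\in A$), agreement with $f$ on $A$, and the $L$-Lipschitz bound each follow from the triangle inequality exactly as you argue. The paper gives no proof of this lemma and only cites \cite[Thm.~1.1]{Gutev_2025}; the classical proof of that result is this same construction, so your argument fills the gap with the standard approach and nothing is missing.
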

We are now in a position to bound $\s{B}$. First, note that
\begin{align}
\s B &= \mathbb{E}_{\bs{A}\sim\tilde{\bs{M}}(z)}\left[\exp\left(\frac{d-z-1}{6d^{3/2}}G(
\bs{A})\right)\mathds{1}_{\{\bs{A}\in\calP\}}\right]\\
& = \mathbb{E}_{\bs{A}\sim\tilde{\bs{M}}(z)}\left[\exp\left(\frac{C}{\sqrt{d}}G(
\bs{A})\right)\mathds{1}_{\{\bs{A}\in\calP\}}\right]\\
&\leq (1+o(1))\cdot\mathbb{E}_{\bs{A}\sim\bs{M}(z)}\left[\exp\left(\frac{C}{\sqrt{d}}G(
\bs{A})\right)\mathds{1}_{\bar\calQ_{\calU}^{\s{op}}}(\bs{A})\right],
\end{align}
where the second equality follows from the fact that $d\gg z$ and $C>0$, and the
inequality holds since $c_2=1-o(1)$. Moreover, we retain only the (rescaled, as in
\eqref{eqn:translationScaling}) operator-norm truncation set in
\eqref{eqn:truncationset3}, namely, $\bar\calQ_{\calU}^{\s{op}}\triangleq\ppp{\norm{\bs{A}}_{\s{op}}\leq \bar{L}^{\s{op}}_{k,z,d}}$ and $\bar{L}^{\s{op}}_{k,z,d}\triangleq L^{\s{op}}_{k,z,d}/\sqrt{d} =  C_2(\sqrt{z}+\sqrt{\log \binom{k}{z}}) = O(\sqrt{z\log k})$. Now, by the mean value theorem and the Cauchy--Schwarz inequality, for any $\bs{A},\bs{B}$ there exists a matrix $\bs{C}$ on the line segment connecting $\bs{A}$ and $\bs{B}$ such that
\begin{align}
    \abs{G(\bs{A})-G(\bs{B})}\le\langle\nabla G(\bs{C}),\bs{A}-\bs{B}\rangle\le\|\nabla G(\bs{C})\|_F\cdot\|\bs{A}-\bs{B}\|_F.
\end{align}
Moreover, since $\bar\calQ_{\calU}^{\s{op}}$ is convex, the matrix $\bs{C}$ also belongs to this set, and hence
\begin{align}
    \|\nabla G(\bs{C})\|_F\le \sqrt{z}\|\nabla G(\bs{C})\|_{\s{op}}=3\sqrt{z}\|\bs{C}\|_{\s{op}}^2= O(z^{3/2}\log k),
\end{align}
which implies that $\abs{G}_\calL \leq O(z^{3/2}\log k)$. We now apply Lemma~\ref{thm:lipschitz-extension} to extend $G$ from $\bar\calQ_{\calU}^{\s{op}}$ to a function $G^\star$ defined on the space of GOE matrices, such that $G^\star=G$ on $\bar\calQ_{\calU}^{\s{op}}$ and $|G^\star|_\calL=|G|_\calL$. By Lemma~\ref{lem:lsi-constant}, the LSI constant for the GOE is $C_{\s{LSI}}=2$. Using the fact that $\E_{\bs{A}\sim\bs{M}(z)}[\s{Tr}(\bs{A}^3)]=0$ and the positivity of the exponential function, Lemma~\ref{lem:lipschitz-concentraion} gives
\begin{align}
\label{eq:D4-B-bound}
    \s{B}\leq  (1+o(1))\cdot\mathbb{E}_{\bs{A}\sim\bs{M}(z)}\left[\exp\left(\frac{C}{\sqrt{d}}G^\star(
\bs{A})\right)\right]\le\exp\left(O\left(\frac{z^3\log^2k}{d}\right)\right).
\end{align}
Combining \eqref{eq:D4-A-bound} and \eqref{eq:D4-B-bound} with \eqref{eq:D4-A-B-bound} and \eqref{eq:first-term-dpi}, we conclude that
\begin{align}
\label{eq:D4-bound}
1+D_{4}(\tilde\P_1^{(z)}\|\tilde\P_0^{(z)})\leq\exp\left(O\left(\frac{z^3\log^2k}{d}\right)\right).
\end{align}
\paragraph{Bounding $D_{3}(\tilde\P_0^{(z)}\|\P_0^{(z)})$.} 
We again introduce an intermediate probability measure. Recall that $\tilde\P_0^{(z)}$ and $\P_0^{(z)}$ denote the probability measures induced by the measurable mappings $H_{q,d}(\tilde{\bs{M}}(z,d))$ and $K_{q,d}(\bs{M}(z,d))$, respectively. For notational convenience, we denote the corresponding random matrices by $\tilde{\bs{X}}$ and $\bs{Y}$, respectively. Furthermore, define
\begin{align}
    \bs{D}_{ij}
    \triangleq
    \sqrt{\left(1+[\bs{M}(z)]_{ii}/\sqrt{d}\right)
          \left(1+[\bs{M}(z)]_{jj}/\sqrt{d}\right)},
\end{align}
and
\begin{align}
    \tilde{\bs{D}}_{ij}
    \triangleq
    \sqrt{\left(1+[\tilde{\bs{M}}(z)]_{ii}/\sqrt{d}\right)
          \left(1+[\tilde{\bs{M}}(z)]_{jj}/\sqrt{d}\right)}.
\end{align}
With this notation, the entries of $\tilde{\bs{X}}$ and $\bs{Y}$ can be written as
\begin{align}
    \tilde{\bs{X}}_{ij}  =     \Ind\{\tilde{\bs{D}}_{ij}^{-1}\tilde{\bs{M}}_{ij} \ge \sqrt{d}t_{q,d}\},
    \qquad
    \bs{Y}_{ij} = \Ind\{\bs{M}_{ij} \ge \overline{\Phi}^{-1}(q)\}.
\end{align}
Additionally, let $\tilde{\bs{Z}}$ denote the $z\times z$ matrix defined by
\begin{align}
\tilde{\bs{Z}}_{ij} = \Ind\{\tilde{\bs{M}}_{ij} \ge \sqrt{d}\,t_{q,d}\},\label{eqn:tildeZ}
\end{align}
for all $1\le i<j\le z$, and let $\tilde\Q^{(z)}$ denote its law. We denote by $\bs{Z}$ and $\Q^{(z)}$ the corresponding non-truncated matrix and its law, respectively. By Lemma~\ref{lem:CS-Dm}, we have
\begin{align}
1+D_{3}(\tilde\P_0^{(z)}\|\P_0^{(z)}) \leq \sqrt{1+D_{6}(\tilde\P_0^{(z)}\|\tilde\Q^{(z)})}\sqrt{1+D_{5}(\tilde\Q^{(z)}\|\P_0^{(z)})}.\label{eqn:6to11to12}
\end{align}
We now bound each of the two terms appearing on the right-hand side separately.
\paragraph{Bounding $D_{5}(\tilde\Q^{(z)}\|\P_0^{(z)})$.}
By definition, it is clear that $\tilde{\bs{Z}}$ are stochastically dominated by ${\bs{Z}}$. Therefore
\begin{align}
    D_{5}(\tilde\Q^{(z)}\|\P_0^{(z)})\leq D_{5}(\Q^{(z)}\|\P_0^{(z)}).
\end{align}
Since both $\Q^{(z)}$ and $\P_0^{(z)}$ are product measures, we have
\begin{align}
    1+D_{5}(\Q^{(z)}\|\bar{\P}^{(z)}) = \prod_{1\le i<j\le z} \left(1+D_{5}(\Q^{(z)}_{ij}\|\bar{\P}^{(z)}_{ij})\right),
\end{align}
where $\Q^{(z)}_{ij} = \s{Bern}(q+\delta_{q,d})$ and $\bar{\P}^{(z)}_{ij} = \s{Bern}(q)$, where $\delta_{q,d} \triangleq \bar{\Phi}(\sqrt{d}t_{q,d}) - \Psi_d(\sqrt{d}t_{q,d})$. We utilize the following result.
\begin{lemma}[{\cite[Lemma~7]{bubeck2016testing}}]
\label{lem:delta-bound}
For any fixed $q \in (0,1)$, there exists a constant $C_q$ such that:
\begin{align}
    |\delta_{q,d}| = |\bar{\Phi}(\sqrt{d}t_{q,d}) - \Psi_d(\sqrt{d}t_{q,d})| \le C_q d^{-1}.
\end{align}
\end{lemma}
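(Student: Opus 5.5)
Throughout, I read $\Psi_d(s)\triangleq\P(\sqrt{d}\,\langle\bs{x}_i,\bs{x}_j\rangle\ge s)$ as the tail function of the rescaled inner product of two independent uniform points on $\S^{d-1}$. This is consistent with the definition of $t_{q,d}$, which gives $\Psi_d(\sqrt d\,t_{q,d})=q$. The plan is to prove the stronger uniform statement $\sup_{s\in\R}|\bar\Phi(s)-\Psi_d(s)|\le C/d$ and then evaluate it at $s=\sqrt d\,t_{q,d}$. Uniformity makes the constant independent of $q$, but since the lemma allows $C_q$ it is also enough to work on a bounded window around $\bar\Phi^{-1}(q)$.

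\textbf{Step 1 (explicit density).} By rotation invariance, $\langle\bs{x}_i,\bs{x}_j\rangle$ has the law of the first coordinate of a uniform point on $\S^{d-1}$. Its density is $c_d(1-t^2)^{(d-3)/2}$ on $[-1,1]$, with
\begin{align}
c_d=\frac{\Gamma(d/2)}{\sqrt{\pi}\,\Gamma((d-1)/2)}.
\end{align}
After rescaling, $S\triangleq\sqrt d\,\langle\bs{x}_i,\bs{x}_j\rangle$ has density
\begin{align}
\psi_d(s)=\frac{c_d}{\sqrt d}\Bigl(1-\frac{s^2}{d}\Bigr)^{(d-3)/2}\Ind\{|s|\le\sqrt d\}.
\end{align}
By Stirling's expansion of the Gamma ratio, $c_d/\sqrt d=(2\pi)^{-1/2}(1+O(1/d))$.

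\textbf{Step 2 (local density comparison).} For $|s|\le d^{1/4}$, Taylor-expand the logarithm:
\begin{align}
\frac{d-3}{2}\log\Bigl(1-\frac{s^2}{d}\Bigr)=-\frac{s^2}{2}+O\Bigl(\frac{s^2+s^4}{d}\Bigr).
\end{align}
In this range the error term is $O(1)$, so exponentiating gives
\begin{align}
\psi_d(s)=\phi(s)\Bigl(1+O\Bigl(\frac{1+s^4}{d}\Bigr)\Bigr),
\end{align}
where $\phi$ is the standard normal density.

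\textbf{Step 3 (tails).} For $|s|\ge d^{1/4}$, use $1-x\le e^{-x}$ to get $\psi_d(s)\le C e^{-s^2(d-3)/(2d)}\le Ce^{-s^2/4}$. The Gaussian density obeys the same kind of bound. Hence both $\int_{|u|\ge d^{1/4}}\psi_d$ and $\int_{|u|\ge d^{1/4}}\phi$ are $O(e^{-\sqrt d/4})=o(1/d)$.

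\textbf{Step 4 (combine).} For any $s$, write
\begin{align}
|\bar\Phi(s)-\Psi_d(s)|\le\int_{|u|\le d^{1/4}}|\phi(u)-\psi_d(u)|\,\mathrm du+o(1/d).
\end{align}
By Step 2 the integral is at most $\frac{C}{d}\int\phi(u)(1+u^4)\,\mathrm du=O(1/d)$. This gives the uniform bound, and evaluating at $s=\sqrt d\,t_{q,d}$ proves the lemma.

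\textbf{Main obstacle.} The delicate part is bookkeeping rather than a conceptual difficulty. One has to make the $O(1/d)$ correction in the normalizing constant $c_d$ explicit via the Gamma-ratio asymptotics, and keep the Taylor remainder uniform on the window $|s|\le d^{1/4}$, so that the errors $(1+s^4)/d$ stay integrable against $\phi$ and nothing of size $d^{-1/2}$ survives. A first-order expansion of $\log(1-s^2/d)$ alone is exactly what yields $1/d$ rather than $1/\sqrt d$, because $\psi_d$ is even and the leading correction involves $s^2/d$ and $s^4/d$, not odd powers of $s/\sqrt d$.
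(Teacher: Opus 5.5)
Your proof is correct, and it takes a different route from the paper. The paper gives no argument here: it imports the estimate verbatim from \cite[Lemma~7]{bubeck2016testing}, in a pointwise form at the single point $s=\sqrt d\,t_{q,d}$. That point stays in a bounded window for fixed $q$, which is why the constant depends on $q$.

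Your route is self-contained. You compare the explicit density $\psi_d$ of $\sqrt d\,\langle\bs{x}_i,\bs{x}_j\rangle$ with $\phi$ on $|s|\le d^{1/4}$ and discard the tails. This proves something stronger than the lemma: the total variation distance between $\sqrt d\,\langle\bs{x}_i,\bs{x}_j\rangle$ and $\calN(0,1)$ is $O(1/d)$. This is the one-dimensional case of the Diaconis--Freedman projection bound. Hence $\sup_s|\bar\Phi(s)-\Psi_d(s)|=O(1/d)$ with an absolute constant, and evaluating at $s=\sqrt d\,t_{q,d}$, where $\Psi_d(\sqrt d\,t_{q,d})=q$ exactly, gives the lemma.

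The trade-off is simple. The citation buys brevity. Your version buys a self-contained argument, a $q$-independent constant, and uniformity in $s$.

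Your expansion also shows the rate is sharp. Carrying it one step further gives $\psi_d(s)/\phi(s)-1=-\frac{1}{4d}(s^4-6s^2+3)+o(1/d)$ locally. Hence $\delta_{q,d}=\frac{(s^3-3s)\phi(s)}{4d}+o(1/d)$ at $s=\sqrt d\,t_{q,d}$, which is nonzero for generic $q$.

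The only bookkeeping to add is that Steps 2--3 need $d\ge d_0$. This ensures $s^2/d\le 1/2$ on the window, so $|\log(1-x)+x|\le x^2$ applies uniformly, and $(d-3)/(2d)\ge 1/4$. Smaller $d$ are absorbed into the constant, since $|\delta_{q,d}|\le 1$.
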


Expanding $D_{5}(\Q^{(z)}_{ij}\|\bar{\P}^{(z)}_{ij})$, we obtain
\begin{align}
    1+D_{5}(\Q^{(z)}_{ij}\|\bar{\P}^{(z)}_{ij}) 
    &= q\left(1+\frac{\delta_{q,d}}{q}\right)^{5} + (1-q)\left(1-\frac{\delta_{q,d}}{1-q}\right)^{5}  \\
    &= 1 + \frac{55 \delta_{q,d}^2}{q(1-q)} + O_q(\delta_{q,d}^3).
\end{align}
Applying Lemma~\ref{lem:delta-bound}, we have $\delta_{q,d}^2 \le C_q^2 d^{-2}$. Substituting this bound into the product over edges and using the inequality $1+x \le e^x$, we obtain
\begin{align}
    1+D_{5}(\Q^{(z)}\|\bar{\P}^{(z)})
    &\le \left(1 + \frac{55 C_q^2}{q(1-q)} d^{-2} + O(d^{-3})\right)^{\binom{z}{2}} \\
    &\le \exp\left( \frac{z(z-1)}{2} \left[ \frac{55 C_q^2}{q(1-q) d^2} \right] \right) \\
    &\le \exp\left( \bar C_{q} \frac{z^2}{d^2} \right),\label{eq:D11-bound}
\end{align}
for a constant $\bar C_{q}$ depending only on $q$.
\paragraph{Bounding $D_{6}(\tilde\P_0^{(z)}\|\tilde\Q^{(z)})$.} We begin by introducing some notation. Recall that $\tilde{\bs{X}}=H_{q,d}(\tilde{\bs{M}}(z,d))\sim\tilde\P_0^{(z)}$ and $\tilde{\bs{Z}}\sim\tilde\Q^{(z)}$, where the entries of $\tilde{\bs{Z}}$ are defined in \eqref{eqn:tildeZ}. Let $\bs{X}$ and $\bs{Z}$ denote the untruncated versions of $\tilde{\bs{X}}$ and $\tilde{\bs{Z}}$, respectively, that is, $\bs{X}=H_{q,d}(\bs{M}(z,d))$, and the entries of $\bs{Z}$ are given by \eqref{eqn:tildeZ} with $\tilde{\bs{M}}$ replaced by $\bs{M}$. 

For any matrix $\bs{A}$, let $D'(\bs{A})$ denote the diagonal matrix formed by retaining only the diagonal entries of $\bs{A}$, and define the off-diagonal component by $D(\bs{A}) = \bs{A}-D'(\bs{A})$. Introduce the transformation
\begin{align}
    f(\bs{A})\triangleq\left(\bs{I}_z+\frac{1}{\sqrt{d}}D'(\bs{A})\right)^{-1/2}\bs{A}\left(\bs{I}_z+\frac{1}{\sqrt{d}}D'(\bs{A})\right)^{-1/2},
\end{align}
Observe that the random matrix $\bs{Z}$ is obtained by applying an entrywise thresholding procedure to $D(\bs{M})$, whereas $\bs{X}$ is obtained analogously from $D(f(\bs{M}))$. Similarly, $\tilde{\bs{Z}}$ and $\tilde{\bs{X}}$ are obtained from $D(\tilde{\bs{M}})$ and $D(f(\tilde{\bs{M}}))$, respectively, via the same entrywise thresholding procedure. Hence, Lemma~\ref{lem:truncated_dpi} implies that
\begin{align}
    1+D_{6}(\bs{X} \| \bs{Z}) \leq 1+D_{6}(D(f({\bs{M}}))\|D({\bs{M}})),
\end{align}
where, with a slight abuse of notation, we use the matrix-valued random variables to denote the corresponding probability measures. Similarly,
\begin{align}
  1+D_{6}(\tilde\P_0^{(z)}\|\tilde\Q^{(z)})=  1+D_{6}(\tilde{\bs{X}} \| \tilde{\bs{Z}}) \leq 1+D_{6}(D(f(\tilde{\bs{M}}))\|D(\tilde{\bs{M}})).%= \int_{\Gamma_U} \left( \frac{dD(f(\bs{M}))}{dD(\bs{M})} \right)^{6}dD(\bs{M})
\end{align}
Thus, to control the right-hand side of the above inequalities, it suffices to study the densities of
$D(\bs{M})$ and $D(f(\bs{M}))$, denoted by $q(x)$ and $w(x)$, respectively, together with their
truncated counterparts, whose densities are denoted by $\tilde{q}(x)$ and $\tilde{w}(x)$.

Let $\Omega = \mathbb{R}^{(n^2-n)/2}$, identified with the space of symmetric $n \times n$ matrices
having zero diagonal, and let $\Omega' = \mathbb{R}^n$, corresponding to the space of diagonal
entries. Slightly abusing notation, we treat $D$ and $D'$ as maps from symmetric matrices into
$\Omega$ and $\Omega'$, respectively. Under this identification, the mapping $D \circ f$ acts from
$\Omega \oplus \Omega'$ into $\Omega$.

Accordingly, $w(x)$ arises as the push-forward of the Gaussian measure with density $\gamma(x,y)$
under the transformation $D \circ f$, where $\gamma(x,y)$ is the GOE density on
$\Omega \oplus \Omega'$, i.e.,
\begin{align}
\gamma(x,y)\triangleq
\frac{1}{2^{n/2}(2\pi)^{(n^2+n)/2}}
\exp\left(
-\frac{1}{2}\sum_{1 \le i < j \le n} x_{ij}^2
-\frac{1}{4}\sum_{i=1}^n y_i^2
\right),
\end{align}
for $(x,y)\in\Omega \oplus \Omega'$. Similarly, since $D(\bs{M})$ is the image of the same Gaussian measure under the map $D$, the density $q(x)$ is given by
\begin{align}
q(x)=\int_{\Omega'} \gamma(x,y)\mathrm{d}y
= \frac{1}{(2\pi)^{(n^2-n)/2}}
\exp\left(-\frac{1}{2}\sum_{1 \le i < j \le n} x_{ij}^2\right),
\end{align}
which is the standard Gaussian density on $\Omega$. A closed-form expression for $w(x)$ was derived in \cite[Lemma~6]{bubeck2016testing}.
\begin{lemma}[{\cite[Lemma~6]{bubeck2016testing}}]\label{lem:Radonwq}
The law of $D(f(\bs{M}))$ is absolutely continuous with respect to the law of $D(\bs{M})$. Moreover, the Radon--Nikodym derivative is given by
\begin{align}
\frac{\mathrm{d}\calL(D(f(\bs{M})))}{\mathrm{d}\calL(D(\bs{M}))}(x)
&=\frac{w(x)}{q(x)}, \qquad x \in \Omega,
\end{align}
where $w$ and $q$ denote the densities of $D(f(\bs{M}))$ and $D(\bs{M})$, respectively. Moreover,
\begin{align}
\frac{w(x)}{q(x)} =
\mathbb{E}_\Lambda\Bigg[
\prod_{i=1}^n 
\Bigl|1+\frac{\Lambda_i}{\sqrt{d/2}}\Bigr|^{\frac{z-1}{2}}
\exp\left(
-\frac12\sum_{1\le i<j\le z}
x_{ij}^2\left(
\frac{\Lambda_i+\Lambda_j}{\sqrt{d/2}}
+
\frac{2\Lambda_i\Lambda_j}{d}
\right)
\right)
\Bigg],
\end{align}
where $\Lambda=(\Lambda_1,\dots,\Lambda_z)\sim N(0,\bs{I}_z)$.
\end{lemma}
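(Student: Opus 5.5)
The plan is to compute the law of $D(f(\bs{M}))$ by a change of variables with the diagonal held fixed, and then average over the diagonal. Write $\bs{M}=D(\bs{M})+D'(\bs{M})$ with off-diagonal part $x\in\Omega$ and diagonal part $y\in\Omega'$, which are independent under the GOE density $\gamma(x,y)=q(x)\,\gamma'(y)$, where $\gamma'$ is the $\calN(0,2\bs{I})$ density. For fixed $y$ (with $1+y_i/\sqrt d>0$), the map $x\mapsto D(f(x,y))$ is linear and diagonal in the coordinates $(i,j)$:
\begin{align}
x_{ij}\mapsto u_{ij}=\frac{x_{ij}}{\sqrt{(1+y_i/\sqrt d)(1+y_j/\sqrt d)}}.
\end{align}
Write $s_i\triangleq 1+y_i/\sqrt d$, so that $x_{ij}=u_{ij}\sqrt{s_is_j}$. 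The Jacobian of the inverse map is $\prod_{i<j}\sqrt{s_is_j}=\prod_i s_i^{(z-1)/2}$, because each index $i$ appears in $z-1$ pairs.

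Step 1 is the change of variables. It gives
\begin{align}
w(u)=\int_{\Omega'}\gamma'(y)\prod_i |s_i|^{\frac{z-1}{2}}\,q\big((u_{ij}\sqrt{s_is_j})_{i<j}\big)\,\mathrm{d}y .
\end{align}
Step 2 is to divide by $q(u)$. Since $q(\cdot)$ is the standard Gaussian density,
\begin{align}
\frac{q(u\sqrt{ss})}{q(u)}=\exp\Big(-\tfrac12\sum_{i<j}u_{ij}^2(s_is_j-1)\Big).
\end{align}
Step 3 is to substitute $y_i=\sqrt2\Lambda_i$ with $\Lambda_i\sim\calN(0,1)$, so that $s_i=1+\Lambda_i/\sqrt{d/2}$. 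Then
\begin{align}
s_is_j-1=\frac{\Lambda_i+\Lambda_j}{\sqrt{d/2}}+\frac{2\Lambda_i\Lambda_j}{d},
\end{align}
and taking the expectation over $\Lambda$ yields exactly the stated formula. Absolute continuity follows because $w$ is an integral of densities, and hence is itself a density with respect to Lebesgue measure on $\Omega$. Since $q>0$ everywhere, the ratio $w/q$ is then the Radon--Nikodym derivative.

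The main obstacle is the set where $s_i\le 0$, on which $f$ is not well defined and the map is not invertible. I would handle it by the convention already implicit in the absolute value in the formula: the map $x\mapsto u$ remains a linear bijection whenever $s_i\neq0$ for all $i$, with Jacobian $\prod_i|s_i|^{(z-1)/2}$. The set $\{s_i=0\}$ is Gaussian-null, so it does not affect either density. It remains to justify interchanging the $y$-integral with the change of variables, which I would do by Tonelli's theorem since all integrands are nonnegative.
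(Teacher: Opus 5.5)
The paper gives no proof of this lemma; it imports it from Bubeck et al. Your Steps 1--3 are the right computation on the event $\{s_i>0\ \forall i\}$. There, the independence of the diagonal and off-diagonal parts, the Jacobian $\prod_i s_i^{(z-1)/2}$, the Gaussian ratio, and the substitution $y_i=\sqrt2\Lambda_i$ are all correct.

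The gap is at exactly the point you flag as the main obstacle, and your fix does not work. The problematic set is $\{\exists i:\ s_i<0\}=\{\exists i:\ \Lambda_i<-\sqrt{d/2}\}$. It has positive (if exponentially small) probability; only $\{s_i=0\}$ is Gaussian-null. On $\{s_is_j<0\}$ the quantity $\sqrt{s_is_j}$ is not real. So $f$, which involves $(\bs{I}_z+D'(\bs{M})/\sqrt d)^{-1/2}$, is undefined, and there is no real bijection $x_{ij}\mapsto x_{ij}/\sqrt{s_is_j}$. Any real rescaling with Jacobian $\prod_i|s_i|^{(z-1)/2}$ forces $x_{ij}^2=u_{ij}^2|s_is_j|$. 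Your Step 2 would then produce $\exp(-\frac12\sum_{i<j}u_{ij}^2(|s_is_j|-1))$, not the integrand in the statement.

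In fact no convention can make the formula with an unrestricted expectation over $\Lambda\sim\calN(0,\bs{I}_z)$ true. By Tonelli,
\[
\int_\Omega q(x)\,\E_\Lambda[\cdots]\,\mathrm{d}x=\E_\Lambda\Big[\prod_{i=1}^z|s_i|^{\frac{z-1}{2}}\prod_{i<j}\int_{\R}\frac{e^{-x_{ij}^2s_is_j/2}}{\sqrt{2\pi}}\,\mathrm{d}x_{ij}\Big],
\]
and the inner integral is $+\infty$ on the positive-probability event $\{s_is_j<0\}$. So the claimed $w$ has infinite total mass. For $z=2$ the $\Lambda$-expectation is already $+\infty$ for every $|x_{12}|>\sqrt d$, because the quadratic form in $\Lambda$ then has a positive eigenvalue.

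What your argument does establish, and what is true, is the formula with the indicator $\mathds{1}\{\Lambda_i>-\sqrt{d/2}\ \forall i\}$ inside the expectation, where the absolute values become superfluous. In that version $f$ is defined only on this event, or the leftover mass, at most $z\,\P(\Lambda_1\le-\sqrt{d/2})$, is assigned by a separate convention. You should state and prove that version.

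It is all the paper needs. On $\calQ_\calU$ the operator-norm constraint gives $\sqrt d\,|\bs{M}_{ii}|\le L^{\s{op}}_{k,z,d}\ll d$, so $s_i>0$ automatically. Since the integrand is nonnegative, dropping the indicator afterwards only yields the upper bound that is used subsequently.
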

Let us now characterize the truncated densities $\tilde{q}(x)$ and $\tilde{w}(x)$, starting with
the former. Define the conditional (truncated) law
\begin{align}
\tilde{\gamma}(x,y)
\triangleq
\frac{\gamma(x,y)\mathds{1}_{\calQ_{\calU}}(x,y)}{\mathbb P_\gamma(\calQ_{\calU})},
\qquad
\mathbb P_\gamma(\calQ_{\calU})
\triangleq
\int_{\Omega \times \Omega'} \gamma(x,y)\mathds{1}_{\calQ_{\calU}}(x,y)
\,\mathrm{d}x\,\mathrm{d}y.
\end{align}
Define the random variables $X \triangleq D(\bs{M}) \in \Omega$ and $Z \triangleq D(f(\bs{M})) \in \Omega$. Then
\begin{align}
\tilde{q}(x) &= \int_{\Omega'}\tilde{\gamma}(x,y)\mathrm{d}y \\
&= \frac{1}{\mathbb P_\gamma(\calQ_{\calU})}\int_{\Omega'} \gamma(x,y)\mathds{1}_{\calQ_{\calU}}(x,y)\mathrm{d}y.
\end{align}
Define the conditional density of $D'(\bs{M})$ given $X=x$ by $\gamma(y\vert x)\triangleq\frac{\gamma(x,y)}{q(x)}$. Then the conditional probability of $\calQ_{\calU}$ given $X=x$ is
\begin{align}
\mathbb P_\gamma(\calQ_{\calU}\vert X=x) &= \int_{\Omega'} \mathds{1}_{\calQ_{\calU}}(x,y)\gamma(y\vert x)\mathrm{d}y \\
&= \frac{1}{q(x)}\int_{\Omega'} \gamma(x,y)\mathds{1}_{\calQ_{\calU}}(x,y)\mathrm{d}y.
\end{align}
Rearranging,
\begin{align}
\int_{\Omega'} \gamma(x,y)\mathds{1}_{\calQ_{\calU}}(x,y)\mathrm{d}y
= q(x)\mathbb P_\gamma(\calQ_{\calU} \vert X=x).
\end{align}
Substituting into the expression for $\tilde{q}(x)$ yields
\begin{align}
\tilde{q}(x) = q(x)\frac{\mathbb P_\gamma(\calQ_{\calU} \vert X=x)}{\mathbb P_\gamma(\calQ_{\calU})}.
\end{align}
In a similar fashion, we obtain
\begin{align}
\tilde{w}(x) = w(x)\frac{\mathbb P_\gamma(\calQ_{\calU} \vert Z=x)}{\mathbb P_\gamma(\calQ_{\calU})}.
\end{align}
Therefore, for all $x \in \Omega$,
\begin{align}
\frac{\tilde{w}(x)}{\tilde{q}(x)} = \frac{w(x)}{q(x)} \cdot \frac{\mathbb P_\gamma(\calQ_{\calU} \vert Z=x)}{\mathbb P_\gamma(\calQ_{\calU} \vert X=x)}\leq\frac{w(x)}{q(x)} \cdot \frac{1}{\mathbb P_\gamma(\calQ_{\calU} \vert X=x)}.
\end{align}
We are now in a position to bound $D_{6}(D(f(\tilde{\bs{M}}))\|D(\tilde{\bs{M}}))$. By definition, we have
\begin{align}
    1+D_{6}(D(f(\tilde{\bs{M}}))\|D(\tilde{\bs{M}})) &= \bE_{X\sim\tilde{q}}\pp{\frac{\tilde{w}(X)}{\tilde{q}(X)}}^{6}\\
    &\leq \bE\pp{\pp{\frac{w(X)}{q(X)}}^{6}\frac{1}{\mathbb P_\gamma^{6}(\calQ_{\calU} \vert X)}}\\
    & = \int_{\calQ_{\calU}}\pp{\frac{w(x)}{q(x)}}^{6}\frac{1}{\mathbb P_\gamma^{6}(\calQ_{\calU} \vert X=x)}\tilde{q}(x)\mathrm{d}x\\
    & = \int_{\calQ_{\calU}}\pp{\frac{w(x)}{q(x)}}^{6}\tilde{q}(x)\mathrm{d}x.\label{eqn:D12ExpliFor}
\end{align}
Next, we derive a uniform upper bound on $\frac{w(x)}{q(x)}$, for all $x\in\calQ_{\calU}$. Specifically, using Lemma~\ref{lem:Radonwq} and the fact that $|1+u|\le e^{u+u^2}$ (applied with $u=\Gamma_i/\sqrt{d/2}$), we obtain
\begin{align}
\frac{w(x)}{q(x)}\leq
\mathbb{E}_\Lambda\Bigg[
\exp\Bigg(
&\frac{z-1}{\sqrt{2d}}\sum_{i=1}^z \Lambda_i
+\frac{z-1}{d}\sum_{i=1}^z \Lambda_i^2 \notag\\
&\quad
-\frac{1}{\sqrt{2d}}\sum_{i<j} x_{ij}^2(\Lambda_i+\Lambda_j)
-\frac{2}{d}\sum_{i<j} x_{ij}^2\Lambda_i\Lambda_j
\Bigg)
\Bigg] \\
= &\mathbb{E}_\Lambda\left[\exp\left(\bs{L}^T\Lambda + \Lambda^T\bs{Q}\Lambda\right)\right],\label{eq:w-over-q}
\end{align}
where the entries of the vector $\bs{L}$ are defined by
\begin{align}
\bs{L}_i \triangleq \frac{1}{\sqrt{2d}} \left( (z-1) - \sum_{j=1, j\neq i}^z x_{ij}^2 \right),
\end{align}
for $i\in[z]$, and the entries of the matrix $\bs{Q}$ are given by
\begin{align}
\bs{Q}_{ij} = \begin{cases}
\frac{z-1}{d}, & \s{if}\; i = j \\
-\frac{1}{d} x_{ij}^2, & \s{if}\; i \neq j, 
\end{cases}
\end{align}
for $1\le i,j\le z$. Crucially, the expectation in \eqref{eq:w-over-q} is finite only if the matrix $\bs{I}_z-2\bs{Q}$ is positive definite. By the Gershgorin circle theorem, the eigenvalues of $\bs{Q}$ lie in the union of the discs
\begin{align}
    \calD_i=\ppp{\lambda\in\mathbb{R}:\abs{\lambda-\frac{z-1}{d}}\leq\frac{1}{d}\sum_{j=1,j\neq i}^zx_{ij}^2},
\end{align}
for $i\in[z]$. Consequently, the operator norm of $\bs{Q}$ satisfies
\begin{align}
\label{Q-operator-norm}
\|\bs{Q}\|_{\s{op}}\leq \frac{z-1}{d}+\frac{1}{d}\max_{1\le i\le z}\sum_{j=1,j\neq i}^z x_{ij}^2.
\end{align}
%\begin{align}
%\label{Q-operator-norm}
%    \|\bs{Q}\|_{\s{op}}\in\ppp{\lambda\in\mathbb{R}:\abs{\lambda-\frac{z-1}{d}}\leq\frac{1}{d}\max\sum_{j=1,\,j\neq i}^zx_{ij}^2}.
%\end{align}
Furthermore,
\begin{align}
    \|\bs{L}\|^2_2=\frac{1}{2d}\sum_{i=1}^z\left(z-1-\sum_{j\neq i}x_{ij}^2\right)^2\leq\frac{z(z-1)^2}{2d}+\frac{z}{2d}\left(\max_{1\leq i\leq z}\sum_{j\neq i}x_{ij}^2\right)^2.
\end{align}
Now, for any $x\in\calQ_{\calU}$, by the definition of $\calQ_{\calU}$, we have
\begin{align}
    \sum_{j=1,j\neq i}x_{ij}^2\leq(1+C_1)d(z-1)\log k,
\end{align}
for all $i\in[z]$. Consequently,
\begin{align}
\label{eq:L-l2-bound}
    \|\bs{L}\|_2^2\le \frac{z^3}{2d}+\frac{z^3\log^2k}{2d}\le C'\frac{z^3\log^2k}{d},
\end{align}
and
\begin{align}
\label{eq:Q-op-bound}
    \|\bs{Q}\|_{\s{op}}\le \frac{z-1}{d}+(1+C_1)\frac{(z-1)\log k}{d}\le C''\frac{z\log k}{d},
\end{align}
for some positive constants $C'$ and $C''$. In the impossible regime, where $d\gg k^2\log^2 k$, it follows that for $d$ sufficiently large we have $\|\bs{Q}\|_{\s{op}}<\tfrac12$, and we may therefore apply the Gaussian integral formula to obtain
\begin{align}
\label{gaussian-formula}
\mathbb{E}_{\Lambda}\left[\exp\left(\bs{L}^T\Lambda + \Lambda^T \bs{Q} \Lambda\right)\right] = \det(\bs{I}_z - 2\bs{Q})^{-1/2} \exp\left( \frac{1}{2} \bs{L}^T (\bs{I}_z - 2\bs{Q})^{-1} \bs{L} \right).
\end{align}
Moreover, since all eigenvalues of $\bs{Q}$ are of order $\Theta(z\log k/d)$, the eigenvalues of $\bs{I}_z-2\bs{Q}$ lie in the interval $[1-O(z\log k/d),\,1+O(z\log k/d)]$. In the regime where $d\gg k^2\log^2 k$, this interval is contained in $[1/2,3/2]$. Hence, letting $\{\lambda_i\}_{i=1}^z$ denote the eigenvalues of $\bs{I}_z-2\bs{Q}$, we obtain
\begin{align}
\label{det-lower-bound}
\det(\bs{I}_z-2\bs{Q})=\prod_{i=1}^z \lambda_i\geq (1-2\|\bs{Q}\|_{\s{op}})^z .
\end{align}
This implies that
\begin{align}
-\frac{1}{2}\log\det\left(\bs{I}_z-2\bs{Q}\right)
&\leq -\frac{z}{2}\log\left(1-2\|\bs{Q}\|_{\s{op}}\right) \\
&\leq \frac{z}{2}\cdot\frac{2\|\bs{Q}\|_{\s{op}}}{1-2\|\bs{Q}\|_{\s{op}}} \\
&\leq \bar{C}\frac{z^2\log^2 k}{d},\label{eqn:boundDetGauss}
\end{align}
for some constant $\bar{C}>0$. Here, the first inequality follows from \eqref{det-lower-bound}, while the second uses the bound $\log(1-x)\le x/(1-x)$ for $0<x<1$, together with \eqref{eq:Q-op-bound}. Turning to the second factor in \eqref{gaussian-formula}, let $\{v_i\}_{i=1}^z$ denote the eigenvectors of $\bs{I}_z-2\bs{Q}$. Then
\begin{align}
\frac{1}{2}\bs{L}^T(\bs{I}_z-2\bs{Q})^{-1}\bs{L}&= \frac{1}{2}\sum_{i=1}^z \lambda_i^{-1}\abs{\langle v_i,\bs{L}\rangle}^2 \\
&\leq \frac{1}{2}\frac{\|\bs{L}\|_2^2}{1-2\|\bs{Q}\|_{\s{op}}}\\
&\leq \bar{C}'\|\bs{L}\|_2^2 \\
&\leq \bar{C}''\frac{z^3\log^2 k}{d},\label{eqn:boundQadGauss}
\end{align}
for some positive constants $\bar{C}',\bar{C}''>0$. Finally, combining \eqref{eqn:D12ExpliFor}, \eqref{eq:w-over-q}, \eqref{gaussian-formula}, \eqref{eqn:boundDetGauss}, and \eqref{eqn:boundQadGauss}, we get
\begin{align}
\label{eq:D12-bound}
   1+D_{6}(\tilde\P_0^{(z)}\|\tilde\Q^{(z)})\leq \exp\left(C\left(\frac{z^2}{d}+\frac{z^3}{d}\right)\log^2k\right),
\end{align}
for some constant $C>0$.
%\paragraph{Bounding $D_{5}(\bar{\P}_0^{(z)}\|\P_0^{(z)})$.}
%Both $\bar{\P}_0^{(z)}$ and $\P_0^{(z)}$ are product measures of i.i.d. Bernoulli random variables with parameters $p$ and $q$, respectively, and thus
%\begin{align}
%1 + D_{5}(\bar{\P}_0^{(z)}\|\P_0^{(z)})
%= \left(1 + D_5\bigl(\s{Bern}(p)\|\s{Bern}(q)\bigr)\right)^{\binom{z}{2}}
%\leq \exp\left(D_5(p\|q)z^2\right).\label{eq:D5-bound}
%\end{align}
%If $p,q$ are constants then we can trivially write $D_5(p\|q)\le C_{p,q}\chi^2(p\|q)$ for some positive constant $C_{p,q}>0$. Inferring that,
%\begin{align}
%\label{eq:D5-bound}
%    1 + D_5(\P|\Q)\le\exp\left(C_{p,q}\chi^2(p\|q)z^2\right).
%\end{align}
\paragraph{Completing the second-moment calculation.} After bounding each divergence separately, we are ready to assemble the final bound on $\mathbb{E}_{\mathcal{H}_0}\bigl[\tilde{\calL}^2\bigr]$. To this end, we substitute \eqref{eq:D4-bound}, \eqref{eqn:6to11to12}, \eqref{eq:D11-bound}, and \eqref{eq:D12-bound} into \eqref{eq:master-bound}, and then into \eqref{eq:L-2-bound}. We obtain
\begin{align}
    \mathbb{E}_{\mathcal{H}_0}\bigl[\tilde{\calL}^2\bigr]& \leq 1+o(1)+\mathbb{E}_{\calU}\pp{\chi^2(\tilde{\calG}_d(Z,q)\|\calG(Z,q))\Ind\ppp{Z\geq2}}\\
    &\leq 1+o(1)+\E_Z\left[\p{\exp\left(C\frac{Z^3\log^2k}{d}\right)-1}\Ind\ppp{Z\geq2}\right]\\
    & \leq\E_Z\left[\exp\left(C\frac{Z^3\log^2k}{d}\right)\right]+o(1),\label{eq:L-tilde-2-bound}
    %&\leq \E_Z^{1/2}\left[\exp\left(C\frac{Z^3\log^2k}{d}\right)\right]\cdot\E_Z^{1/2}\left[\exp\left(2D_5(p\|q)Z^2\right)\right]+o(1)
\end{align}
for some constant $C>0$, and the third inequality follows from the fact that $\pr[Z\geq0]=1$. Let us find the conditions under which the right-hand side of \eqref{eq:L-tilde-2-bound} is converging to unity or bounded. 
%Consider first the second expectation on the right-hand side of \eqref{eq:L-tilde-2-bound}. This term is, in fact, an upper bound on the second moment of the likelihood ratio corresponding to the planted densest subgraph detection problem (see, e.g.,~\cite{verzelen2015community}). Accordingly, it is well known that, in the dense regime where $p$ and $q$ are fixed constants independent of $n$, this expectation equals $1+o(1)$ provided that $k \le C_{p,q}\log n$, for some constant $C_{p,q}>0$. In our analysis, the constant $C_{p,q}$ is not optimized; nevertheless, this condition captures one of the requirements appearing in Theorem~\ref{thm:ITlower}. Since the analysis of this expectation is classical, we omit the details. The argument relies on the fact that $Z$ is stochastically dominated by $B\sim\s{Binomial}\left(k,\frac{k}{n-k}\right)$, which allows one to replace $Z^2$ by $k\cdot B$ and to control the resulting expression via the moment generating function of the binomial random variable.

Next, we analyze the expectation on the right-hand side of \eqref{eq:L-tilde-2-bound}. We decompose the expectation into two regimes: the first corresponding to $Z \le z_1$, and its complement, where we define $z_1\triangleq C'\lceil(d/\log^2k)^{1/3}\rceil$, for some $C'>0$. We have
\begin{align}
\E_Z\left[\exp\left(C\frac{Z^3\log^2k}{d}\right)\right]&= \E_Z\left[\exp\left(C\frac{Z^3\log^2k}{d}\right)\mathds{1}_{\{Z\le z_1\}}\right]\nonumber\\
&\qquad+\E_Z\left[\exp\left(C\frac{Z^3\log^2k}{d}\right)\mathds{1}_{\{Z> z_1\}}\right].\label{eqn:Z3dFirst}
\end{align}
As for the first expectation on the right-hand side of \eqref{eqn:Z3dFirst}, we consider the impossibility of strong and weak detection separately. For the former, we simply have
\begin{align}
\label{eq:z-small-crude}
    \E_Z\left[\exp\left(C\frac{Z^3\log^2k}{d}\right)\mathds{1}_{\{z\le z_1\}}\right]\le \E_Z\left[\exp\left(C\frac{z_1^3\log^2k}{d}\right)\right]=O(1).
\end{align}
Next, we turn to the impossibility of weak detection. We assume that $\frac{d n^3}{k^6 \log^2 k\, f(k)} \to \infty$ for any function $f(k)=\omega(1)$. We again split the analysis into two regimes: $Z \le z_0 \triangleq (k^2 f(k))/n$ and $z_0 < Z \le z_1$. For $Z \le z_0$, we bound the expectation as follows:
\begin{align}
    \E_Z\left[\exp\left(C\frac{Z^3\log^2k}{d}\right)\mathds{1}_{\{Z\le z_0\}}\right]\le \exp\left(C\frac{k^6\log^2kf(k)}{n^3d}\right)=1+o(1).
\end{align}
For the regime $z_0 < Z \le z_1$, we use the fact that $Z\sim\s{Hypergeometric}(n,k,k)$ is stochastically dominated by $B\sim\s{Binomial}(k,\rho)$ with $\rho = k/(n-k)$. Together with the Chernoff bound $
\P(B \ge z) \le \exp\bigl(-k d_{\s{KL}}(z/k \|\rho)\bigr)$, this yields
\begin{align}
\E_Z\left[\exp\left(C\frac{Z^3\log^2k}{d}\right)\mathds{1}_{\{z_0< z\le z_1\}}\right]
&\le \sum_{z=z_0+1}^{z_1}\exp\left[C\frac{z^3\log^2k}{d}-kd_{\s{KL}}\left(\frac{z}{k}\|\rho\right)\right]\\
&\le \sum_{z=z_0+1}^{z_1}\exp\left[z\left(C\frac{z^2\log^2k}{d}-\log\left(\frac{z}{k\rho}\right)+1\right)\right],
\end{align}
where the last inequality follows from the identity $(1-x)\log(1-x)\ge -x$. For any fixed $a>0$, the function $f(x)=a x^2-\log x$ is decreasing on $(0,1/\sqrt{2a})$ and increasing on $(1/\sqrt{2a},\infty)$. Consequently, for any $z\in\{z_0+1,\ldots,k\}$, we have
\begin{align}
    C\frac{z^2\log^2k}{d}-\log\left(\frac{z}{k\rho}\right)\le -\omega,
\end{align}
where
\begin{align}
    \omega\triangleq\min\ppp{C\frac{z_0^2\log^2k}{d}-\log\left(\frac{z_0}{k\rho}\right),C\frac{z_1^2\log^2k}{d}-\log\left(\frac{z_1}{k\rho}\right)}.
\end{align}
Focusing first on the first quantity in the minimum, we note that in our regime $(k^2\log^2 k)/d \to 0$, and hence $(z_0^2\log^2 k)/d \to 0$ as well. Moreover, $\log(z_0/(k\rho)) = \log(f(k)+o(1)) \to \infty$, and therefore the first term diverges. Turning to the second quantity, its first term converges to zero by the same reasoning. In addition, since $z_1/(k\rho) = (d n^3/(k^6\log^2 k))^{1/3}$ and $d n^3/(k^6\log^2 k)\to\infty$, it follows that $\log(z_1/(k\rho))\to\infty$. This implies that
\begin{align}
    \E_Z\left[\exp\left(C\frac{Z^3\log^2k}{d}\right)\mathds{1}_{\{z_0<Z\le z_1\}}\right]\le \sum_{z=z_0+1}^{z_1}e^{-\omega z}\le\frac{e^{-\omega/2}}{1-e^{-\omega/2}}=o(1),
\end{align}
which consequently leads to
\begin{align}
\label{eq:z-small}
    \E_Z\left[\exp\left(C\frac{Z^3\log^2k}{d}\right)\mathds{1}_{\{Z\le z_1\}}\right]\le 1+o(1).
\end{align}

We now turn to the second expectation on the right-hand side of \eqref{eqn:Z3dFirst}. We again use the stochastic dominance of the binomial distribution over the hypergeometric distribution, together with a Chernoff bound, to obtain
\begin{align}
\E_Z\left[\exp\left(C\frac{Z^3\log^2k}{d}\right)\mathds{1}_{\{Z\ge z_1\}}\right]
&\le \sum_{z=z_1+1}^ke^{-\omega z},
\end{align}
where
\begin{align}
    \omega\triangleq\min\ppp{C\frac{z_1^2\log^2k}{d}-\log\left(\frac{z_1}{k\rho}\right),C\frac{k^2\log^2k}{d}-\log\left(\frac{1}{\rho}\right)}.
\end{align}
We have already shown that the first term in the minimum diverges under the condition $d n^3/(k^6\log^2 k)\to\infty$. For the second term, we distinguish between the regimes $k=o(n)$ and $k=\Theta(n)$.

For $k=o(n)$, the second term clearly diverges due to the growth of $\log(1/\rho)=\log(n/k+o(1))$ together with the assumption $d\gg k^2\log^2 k$. Hence $\omega\to\infty$, which implies
\begin{align}
\label{eq:z-big}
\E_Z\left[\exp\left(C\frac{Z^3\log^2k}{d}\right)\mathds{1}_{\{Z > z_1\}}\right] = o(1).
\end{align}
Combining \eqref{eq:z-small} and \eqref{eq:z-big}, we conclude that under the conditions of Theorem~\ref{thm:ITlower} but with $\frac{dn^3}{k^6\log^2 k}\to\infty$ replaced by $\frac{dn^3}{k^6 f(k)}\to\infty$ for some function $f(k)=\omega(\log^2 k)$, $\E_{\calH_0}[\tilde{\calL}^2] \le 1 + o(1)$, and hence weak detection is impossible. Similarly, under the conditions of Theorem~\ref{thm:ITlower}, equations \eqref{eq:z-small-crude} and \eqref{eq:z-big} yield $\E_{\calH_0}[\tilde{\calL}^2] = O(1)$, and therefore strong detection is impossible.

Finally, when $k=\Theta(n)$, we use the fact that $Z\leq k$ to obtain
\begin{align}
    \E_Z\left[\exp\left(C\frac{Z^3\log^2k}{d}\right)\right] \le \exp\left(C\frac{k^3\log^2k}{d}\right).
\end{align}
Then, under the conditions of Theorem~\ref{thm:ITlower}, namely $(d n^3)/(k^6\log^2 k)\to\infty$, when $k=\Theta(n)$ we have $k^3\log^2 k/d\to 0$. It follows that $\E_{\calH_0}[\tilde{\calL}^2]\le 1+o(1)$, and hence both weak and strong detection are impossible. This concludes the proof of Theorem~\ref{thm:ITlower}.

\subsection{Computational bounds}

In this subsection we prove Theorem~\ref{thm:gap}, starting with the lower bound component, showing when $\|\calL_{n,\leq D}\|_{\calH_0}^2$ remains bounded.

\paragraph{Low-degree computational lower bound.} Recall that $L^2(\calH_0)$ is the Hilbert space of random variables over the probability space on which $\P_{\calH_0}$ is defined, with a finite second moment, and equipped with the inner product
\begin{align}
    \innerP{\varphi(\s{G}),\psi(\s{G})}_{\calH_0}\triangleq  \E_{\calH_0}\pp{\varphi(\s{G})\cdot \psi(\s{G})}.
\end{align}
For an edge $e=\{i,j\}$ define the centered and normalized edge variable
\begin{align}
\psi_{ij}(\s{G})\triangleq \frac{\s{G}_{ij}-q}{\sqrt{q(1-q)}}.
\end{align}
For any non-empty subset of edges in the complete graph $\alpha\subseteq \binom{[n]}{2}$, define the Fourier character $\chi_\alpha$ as
\begin{align}
\chi_\alpha(\s{G})\triangleq \prod_{e\in\alpha}\psi_e(\s{G})=\prod_{\{i,j\}\in\alpha}\frac{\s{G}_{ij}-q}{\sqrt{q(1-q)}}.\label{eqn:Fouriercharacter} 
\end{align}
and $\chi_{\emptyset}(\s{G})\equiv1$, for each $\s{G}\in\{0,1\}^{\binom{n}{2}}$. Note that any subset of edges $\alpha\subseteq \binom{[n]}{2}$ induces a subgraph $H_\alpha = (V(\alpha),\alpha)$ containing no isolated vertices. In fact, there is a one-to-one correspondence between subgraphs without isolated vertices and subsets of edges. We therefore identify each character $\chi_\alpha$ with a subgraph $H_\alpha$. Observe that $\chi_\alpha$ is polynomial in the entries of $\s{G}$, with degree $|e(\alpha)|$, which, with slight abuse of notation, we also denote by $|\alpha|$. It is well known (and easy to verify) that the set $\ppp{\chi_{\alpha}}_{\alpha\subseteq \binom{[n]}{2}}$ forms an orthonormal basis for $L^2(\calH_0)$.

Define the likelihood ratio $\calL_n\triangleq \frac{\mathrm{d}\mathbb{P}_{\calH_1}}{d\mathbb{P}_{\calH_0}}$. For $D\geq 0$, let $\calV_{n,\leq D}$ be the subspace of polynomials of total degree at most $D$ with respect to the edge coordinates, and let $\calP_{\leq D}$ be the orthogonal projection onto $\calV_{n,\leq D}$ in $L^2(\mathbb{P}_{\calH_0})$. The degree-$D$ truncated likelihood ratio is
\begin{align}
\calL_{n,\leq D}\triangleq \calP_{\leq D}\calL_n.
\end{align}
By Parseval's identity we have
\begin{align}
\|\calL_{n,\leq D}\|_{\calH_0}^2 = \sum_{|\alpha|\leq D}\langle \calL_n,\chi_\alpha\rangle_{\calH_0}^2
= \sum_{|\alpha|\leq D}\p{\bE_{\calH_1}[\chi_\alpha(\s{G})]}^2
= 1+\sum_{1\le|\alpha|\leq D}\p{\bE_{\calH_1}[\chi_\alpha(\s{G})]}^2.\label{lem:orthonormal_basis_quiet}
\end{align}
Thus it suffices to control
\begin{align}
\calE_D\triangleq \sum_{1\leq|\alpha|\leq D}\p{\bE_{\calH_1}[\chi_\alpha(\s{G})]}^2.
\end{align}

Fix $\alpha\subseteq\binom{[n]}{2}$, and let $V(\alpha)\subseteq[n]$ denote the set of vertices incident to edges in $\alpha$, with $v(\alpha)\triangleq |V(\alpha)|$. For $i,j\in\calK$ define the planted, centered and normalized variable
\begin{align}
Z_{ij}\triangleq \frac{\sigma_{ij}-q}{\sqrt{q(1-q)}}.
\end{align}
We have the following result.
\begin{lemma}[Coefficient formula]
\label{lem:coeff_formula_quiet}
For every $\alpha\subseteq\binom{[n]}{2}$,
\begin{align}
\bE_{\calH_1}[\chi_\alpha(\s{G})] = \frac{(k)_{v(\alpha)}}{(n)_{v(\alpha)}}\bE\pp{\prod_{\{i,j\}\in\alpha} Z_{ij}},
\end{align}
where the expectation on the right is taken over i.i.d. latent vectors $\{\mathbf{x}_1,\dots,\mathbf{x}_{v(\alpha)}\}\sim\s{Unif}(\mathbb{S}^{d-1})$ and $\sigma_{ij}=\mathds{1}\{\langle \mathbf{x}_i,\mathbf{x}_j\rangle\geq t_{q,d}\}$.
\end{lemma}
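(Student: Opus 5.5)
The plan is to condition on the planted set $\calK$ and exploit that, given $\calK$, every edge not inside $\calK$ is an independent centered variable. First I write
\begin{align}
\bE_{\calH_1}[\chi_\alpha(\s{G})]=\bE_{\calK}\pp{\bE_{\calH_1}\pp{\chi_\alpha(\s{G})\mid\calK}}.
\end{align}
Given $\calK$, split $\alpha$ into $\alpha\cap E(\calK)$ and $\alpha\setminus E(\calK)$. Edges in $\alpha\setminus E(\calK)$ are i.i.d.\ $\s{Bern}(q)$ and independent of the latent vectors, hence of the edges in $E(\calK)$. Therefore the conditional expectation factorizes, and each such edge contributes $\bE[\psi_e]=0$. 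So the conditional expectation vanishes unless $\alpha\subseteq E(\calK)$, which is equivalent to $V(\alpha)\subseteq\calK$ because $\alpha$ has no isolated vertices.

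On the event $V(\alpha)\subseteq\calK$, each edge satisfies $\s{G}_{ij}=\sigma_{ij}$ for $\{i,j\}\in\alpha$. Hence the conditional expectation equals
\begin{align}
\bE\pp{\prod_{\{i,j\}\in\alpha}Z_{ij}},
\end{align}
where the $\bs{x}_i$, $i\in\calK$, are i.i.d.\ uniform on $\S^{d-1}$. This product involves only the vectors indexed by $V(\alpha)$. By exchangeability of the i.i.d.\ labels, its value depends only on the isomorphism type of $\alpha$, and it equals the stated expectation over $v(\alpha)$ i.i.d.\ latent vectors. In particular it does not depend on $\calK$.

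It remains to compute $\P(V(\alpha)\subseteq\calK)$ for $\calK$ uniform in $\binom{[n]}{k}$. Counting gives
\begin{align}
\binom{n-v}{k-v}\Big/\binom{n}{k}=\frac{(k)_v}{(n)_v}, \qquad v=v(\alpha),
\end{align}
and multiplying this probability by the expectation above yields the formula. For $\alpha=\emptyset$ the formula holds trivially with $v=0$.

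No step is a real obstacle. The only point needing care is the independence claim: under $\calH_1$, the edges with an endpoint outside $\calK$ must be independent of $\{\bs{x}_i\}$ and of one another. This follows directly from the model definition in Section~\ref{sec:setup}.
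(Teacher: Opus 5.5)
Your proof is correct and follows essentially the same route as the paper: condition on the planted set, show any edge of $\alpha$ not inside $\calK$ forces the conditional expectation to vanish, and multiply $\P(V(\alpha)\subseteq\calK)=(k)_{v}/(n)_{v}$ by the latent expectation, which does not depend on $\calK$ by exchangeability. The only difference is cosmetic: the paper conditions on the latent vectors $\{\mathbf{x}_i\}_{i\in\calK}$ as well, whereas you condition on $\calK$ alone and factorize using independence of the edges outside $E(\calK)$.
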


\begin{proof}
Condition on $(\calK,\{\mathbf{x}_i\}_{i\in\calK})$. Under this conditioning, if an edge $e=\{i,j\}$ is not fully contained in $\calK$, then $\s{G}_{ij}\sim \s{Bern}(q)$ and hence $\bE[\psi_{ij}(\s{G})\vert\calK,\mathbf{x}]=0$. If $i,j\in\calK$ then $\s{G}_{ij}=\sigma_{ij}$ deterministically and
$\psi_{ij}(\s{G})=Z_{ij}$. Therefore,
\begin{align}
\bE\pp{\chi_\alpha(\s{G})\vert \calK,\mathbf{x}} =
\begin{cases}
\prod_{e\in\alpha} Z_e, & \s{if}\;V(\alpha)\subseteq \calK,\\
0, & \s{otherwise}.
\end{cases}
\end{align}
Taking expectation gives
\begin{align}
\bE_{\calH_1}[\chi_\alpha(\s{G})] =\bE\pp{\mathds{1}\{V(\alpha)\subseteq\calK\}\prod_{e\in\alpha}Z_e}
=\mathbb{P}(V(\alpha)\subseteq\calK)\cdot \bE\prod_{e\in\alpha}Z_e,
\end{align}
using independence of $\calK$ and the latent vectors. Finally, for any
fixed $S\subseteq[n]$ with $|S|=v(\alpha)$,
\begin{align}
\mathbb{P}(S\subseteq\calK)=\frac{\binom{n-v(\alpha)}{k-v(\alpha)}}{\binom{n}{k}}
=\frac{(k)_{v(\alpha)}}{(n)_{v(\alpha)}},
\end{align}
and by exchangeability the latent expectation depends only on the unlabeled
structure of $\alpha$. This yields the stated formula.
\end{proof}
Combining \eqref{lem:orthonormal_basis_quiet} and Lemma~\ref{lem:coeff_formula_quiet} yields
\begin{align}
\calE_D =\sum_{1\leq|\alpha|\leq D}\p{\frac{(k)_{v(\alpha)}}{(n)_{v(\alpha)}}}^2\p{\bE\prod_{e\in\alpha} Z_e}^2.
\end{align}

We proceed by considering two regimes: $k\leq n^{1/2-\epsilon}$, for arbitrary $\epsilon>0$, and the complementary regime. We start with the former and show that $\calE_D=O(1)$ uniformly over all $d$. To this end, since $\sigma_e\leq1$, we have
\begin{align}
    |Z_e|\leq \frac{1-q}{\sqrt{q(1-q)}} = \sqrt{\frac{1-q}{q}}.
\end{align}
Consequently, 
\begin{align}
    \calE_D\leq \sum_{1\leq|\alpha|\leq D}\p{\frac{(k)_{v(\alpha)}}{(n)_{v(\alpha)}}}^2\p{\frac{1-q}{q}}^{|\alpha|}.\label{eqn:cliqueLDP}
\end{align}
The right-hand side of \eqref{eqn:cliqueLDP} is exactly the same expression that arises in the low-degree analysis of the planted clique model. Indeed, this is not surprising: by bounding $|Z_e|$ by $\sqrt{\frac{1-q}{q}}$, we effectively dominate the geometric model by planted clique model. It is well-known (see, e.g., \cite{Hopkins18}) that for the planted clique model, the low-degree second moment is bounded, i.e., $\calE_D=O(1)$, if and only if $k\leq n^{1/2-\epsilon}$, uniformly over all $d$. This establishes the first part of Theorem~\ref{thm:gap}.

Next, we consider the more challenging regime where $k=\Omega(\sqrt{n})$. a key property of the centered variables $Z_{ij}$ is that products over graphs with \emph{leaves} have zero expectation, as proved next.
\begin{lemma}
\label{lem:leafvanish}
Let $H=(V,E)$ be a finite simple graph on vertex set $V$ and edge set $E$. Let $\{\mathbf{x}_v\}_{v\in V}$ be i.i.d. $\s{Unif}(\mathbb{S}^{d-1})$ and define $\sigma_{uv}=\mathds{1}\{\langle \mathbf{x}_u,\mathbf{x}_v\rangle\geq t_{q,d}\}$ and $Z_{uv}=(\sigma_{uv}-q)/\sqrt{q(1-q)}$. Then if $H$ has a leaf vertex, i.e., there exists $u\in V$ with $\deg_H(u)=1$, we have
\begin{align}
\bE\pp{\prod_{\{i,j\}\in E} Z_{ij}}=0.
\end{align}
\end{lemma}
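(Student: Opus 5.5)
The plan is to condition on every latent vector except the one attached to the leaf and integrate out the leaf's vector first. Let $u$ be a leaf with unique neighbor $w$. We split the product as
\begin{align}
\prod_{\{i,j\}\in E} Z_{ij} = Z_{uw}\cdot \prod_{\{i,j\}\in E\setminus\{\{u,w\}\}} Z_{ij}.
\end{align}
Since $u$ has degree one, the second factor does not involve $\mathbf{x}_u$. It is a measurable function of $\{\mathbf{x}_v\}_{v\in V\setminus\{u\}}$ and is bounded, because each $|Z_{ij}|\le \max\{\sqrt{q/(1-q)},\sqrt{(1-q)/q}\}$. By the tower property and independence of $\mathbf{x}_u$ from the other vectors,
\begin{align}
\bE\pp{\prod_{\{i,j\}\in E} Z_{ij}} = \bE\pp{\bE\pp{Z_{uw}\,\middle|\,\{\mathbf{x}_v\}_{v\neq u}}\prod_{\{i,j\}\in E\setminus\{\{u,w\}\}} Z_{ij}},
\end{align}
so it suffices to show that $\bE[Z_{uw}\mid \mathbf{x}_w]=0$ almost surely.

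For the key step, we fix $\mathbf{x}_w=\mathbf{y}\in\mathbb{S}^{d-1}$. Rotation invariance of the uniform measure on the sphere makes the law of $\langle \mathbf{x}_u,\mathbf{y}\rangle$ independent of $\mathbf{y}$: take an orthogonal $\mathbf{O}$ with $\mathbf{O}\mathbf{y}=\mathbf{e}_1$, and note $\langle\mathbf{x}_u,\mathbf{y}\rangle=\langle \mathbf{O}\mathbf{x}_u,\mathbf{e}_1\rangle\overset{d}{=}\langle\mathbf{x}_u,\mathbf{e}_1\rangle$. Hence $\P(\langle \mathbf{x}_u,\mathbf{y}\rangle\ge t_{q,d})$ equals the unconditional probability $\P(\langle\mathbf{x}_1,\mathbf{x}_2\rangle\ge t_{q,d})$, which is $q$ by the definition of $t_{q,d}$. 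Therefore $\bE[\sigma_{uw}\mid\mathbf{x}_w]=q$ and $\bE[Z_{uw}\mid\mathbf{x}_w]=0$, so the whole expectation vanishes.

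There is no real obstacle. The only care needed is measurability, so that the conditional expectation is legitimate. This is handled by using Fubini on the product measure with bounded integrands, and $u$ and $w$ are distinct because $H$ is simple. The same argument applies to any multigraph in which $u$ meets exactly one edge.
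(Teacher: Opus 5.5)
Your proposal is correct and follows the paper's proof: isolate the unique edge $\{u,w\}$ at the leaf, condition on all other latent vectors, and use rotational invariance of the uniform measure on $\mathbb{S}^{d-1}$ to get $\bE[\sigma_{uw}\mid \mathbf{x}_w]=q$, so $\bE[Z_{uw}\mid \mathbf{x}_w]=0$. Your remarks on the boundedness of the $Z_{ij}$ and the explicit rotation $\mathbf{O}\mathbf{y}=\mathbf{e}_1$ only spell out details the paper leaves implicit.
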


\begin{proof}[Proof of Lemma~\ref{lem:leafvanish}]
Let $u$ be a leaf and let $v$ be its unique neighbor, so that $e=\{u,v\}$ is the unique edge incident to $u$. Condition on all latent vectors except $\mathbf{x}_u$. Then the product factors as
\begin{align}
\prod_{f\in E} Z_f= Z_{uv}\cdot \prod_{f\in E\setminus\{e\}} Z_f,
\end{align}
where $\prod_{f\in E\setminus\{e\}} Z_f$ is measurable with respect to $\{\mathbf{x}_w\}_{w\in V\setminus\{u\}}$. Therefore,
\begin{align}
\bE\pp{\prod_{f\in E} Z_f} = \bE\pp{\p{\prod_{f\in E\setminus\{e\}} Z_f}\cdot\bE\pp{Z_{uv}\vert \{\mathbf{x}_w\}_{w\in V\setminus\{u\}}}}.
\end{align}
By rotational symmetry, conditional on $\mathbf{x}_v$, the random vector $\mathbf{x}_u$ is uniform
on $\mathbb{S}^{d-1}$, and the event $\{\langle \mathbf{x}_u,\mathbf{x}_v\rangle\geq t_{q,d}\}$ has
probability $q$. Hence
\begin{align}
\bE[\sigma_{uv}\vert \mathbf{x}_v]=q \quad\Longrightarrow\quad \bE[Z_{uv}\vert \mathbf{x}_v]=\frac{q-q}{\sqrt{q(1-q)}}=0.
\end{align}
Thus $\bE[Z_{uv}\vert\{\mathbf{x}_w\}_{w\in V\setminus\{u\}}]=0$, and the whole expectation is zero.
\end{proof}
Accordingly,  only edge sets $\alpha$ whose associated graph has minimum degree at least two contribute to $\calE_D$. The challenge is to bound the remaining contributions. Fortunately enough, in a recent paper \cite{BangachevBresler2024FourierRGG}, the following result was proved.
\begin{lemma}[{\cite[Thm. 1.1 \& Prop. 1.2]{BangachevBresler2024FourierRGG}}]
\label{lem:BB_bound}
There exist constants $A,B>0$ depending only on $q$ such that for every connected graph $H=(V,E)$ with $v\triangleq |V|\geq 2$ and $m\triangleq |E|\geq 1$, we have
\begin{align}
\abs{\bE\pp{\prod_{e\in E(H)} Z_e}} \leq A^{m}\p{B\frac{v\cdot m\cdot (\log d)^{3/2}}{\sqrt d}}^{\left\lceil\frac{v-1}{2}\right\rceil}.
\end{align}
\end{lemma}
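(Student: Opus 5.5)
This lemma is imported from \cite{BangachevBresler2024FourierRGG}, so we could simply invoke it. If we had to reprove it, we would argue as follows. The plan is to view $\bE\prod_{e\in E}Z_e$ as a Fourier coefficient of the spherical random geometric graph. We would expand each edge function in spherical harmonics, $\sigma_{uv}-q=\sum_{\ell\ge1}a_\ell\,G^{(d)}_\ell(\langle\mathbf{x}_u,\mathbf{x}_v\rangle)$, where $G^{(d)}_\ell$ are the normalized Gegenbauer polynomials. The constant term vanishes exactly because the marginal edge probability is $q$; this is the same mechanism as in Lemma~\ref{lem:leafvanish}. Standard estimates for the threshold function at $t_{q,d}\asymp d^{-1/2}$ give $\sum_\ell a_\ell^2\dim(\mathcal H_\ell)=q(1-q)$, with the mass at level $\ell$ of order $\ell^{-3/2}$ up to constants depending on $q$. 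This is what eventually produces the factor $A^m$.

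\emph{Step 1 (reduction to harmonic diagrams).} Substituting the expansion, $\bE\prod_e Z_e$ becomes a sum over level assignments $(\ell_e)_{e\in E}$ of $\prod_e a_{\ell_e}$ times $\bE\prod_e G_{\ell_e}(\langle\mathbf{x}_u,\mathbf{x}_v\rangle)$. We would integrate out the vertices one at a time. For a degree-two vertex, the reproducing identity $\int G_\ell(\langle x,y\rangle)G_{\ell'}(\langle x,z\rangle)\,\mathrm{d}x=\delta_{\ell\ell'}G_\ell(\langle y,z\rangle)/\dim(\mathcal H_\ell)$ contracts the path through that vertex. Each such contraction costs $\dim(\mathcal H_\ell)^{-1}\le d^{-\ell}$, and this more than pays for the extra coefficient $a_\ell\approx d^{-\ell/2}$. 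Vertices of higher degree do not contract this way. For them we would bound the integral of a product of several harmonics by hypercontractivity on the sphere, i.e.\ a Gaussian-like moment comparison with $\langle \mathbf{x}_u,\mathbf{x}_v\rangle\approx g/\sqrt d$.

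\emph{Step 2 (truncation for the logs).} We would restrict to the event that every pairwise inner product among the $v$ vertices is at most $C\sqrt{\log d/d}$. Its complement has probability $d^{-\Omega(v)}$, which is absorbed into the bound. On this event we would truncate the harmonic expansion at level $L\asymp\log d$, controlling the tail by Cauchy--Schwarz. The factor $(\log d)^{3/2}$ per pair of vertices comes from combining the $\sqrt{\log d}$ size of the inner products with the level truncation.

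\emph{Step 3 (counting gains).} We would show that every nonzero diagram gains at least $d^{-1/2}$ per two vertices beyond the first. The argument would order the vertices along a spanning tree of the connected graph $H$. Each newly revealed vertex attached to already revealed ones contributes one centered factor of size $O(m\sqrt{\log d/d})$, arising from the deviation of the Gram matrix of its neighbors from the identity. Zero-mean cancellations, as in Lemma~\ref{lem:leafvanish}, then force these factors to pair up. This gives the exponent $\lceil (v-1)/2\rceil$, while the factors $v\cdot m$ come from the number of pairings and revealed edges.

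\textbf{Main obstacle.} We expect Step 3 to be the hardest part: showing that the cancellations really yield one factor of $1/\sqrt d$ per \emph{two} vertices, uniformly over arbitrary connected $H$ and over all level assignments, without losing factorials in $m$. Our first attempt would be a combinatorial charging argument on the contracted diagram, tracking odd and even level parities. If that charging argument fails, we would fall back on citing \cite{BangachevBresler2024FourierRGG}, which the paper does anyway.
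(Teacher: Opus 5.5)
Invoking \cite[Thm.~1.1 \& Prop.~1.2]{BangachevBresler2024FourierRGG} is exactly what the paper does. The lemma is imported from there and the paper gives no proof of its own, so your first sentence is all that is needed.

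The self-contained route you sketch would not go through as written, and the gap is more specific than ``Step~3 is hard.'' Its premise is that each of the $v-1$ vertices revealed along a spanning tree contributes a factor of size $O(m\sqrt{\log d/d})$ coming from a Gram-matrix deviation. On your truncation event this would bound the coefficient by $(m\sqrt{\log d/d})^{v-1}$; the complement contributes $o(d^{-1})$ for bounded $v$ once $C$ is large. Cancellations can only improve a pointwise bound, never weaken it to exponent $\lceil (v-1)/2\rceil$, yet exponent $v-1$ is false.

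For the $4$-cycle, write $Z(t)=\sum_{\ell\ge1}b_\ell\sqrt{\dim(\mathcal H_\ell)}\,G_\ell(t)$ with $\sum_\ell b_\ell^2=1$. Your reproducing identity gives $\E[Z_{12}Z_{23}Z_{34}Z_{41}]=\sum_{\ell\ge1}b_\ell^4/\dim(\mathcal H_\ell)\ge b_1^4/d$, with $b_1$ bounded away from zero uniformly in $d$. So the coefficient is $\Theta(1/d)$, and the exponent $\lceil 3/2\rceil=2$ is attained up to logarithms. Concretely, integrating out $\mathbf x_4$ and then $\mathbf x_3$ gains $d^{-1/2}$ each. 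The final step $\E[Z_{12}\langle\mathbf x_1,\mathbf x_2\rangle]=b_1/\sqrt d$ gains nothing over the pointwise size of $\langle\mathbf x_1,\mathbf x_2\rangle$, because the integrand is correlated with the factor already produced.

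What is missing is a rule deciding which integrations gain $d^{-1/2}$, together with a proof that at least $\lceil (v-1)/2\rceil$ of them do, uniformly over connected $H$ and over level assignments. Steps~1--2 do not provide this for dense $H$. For $H=K_v$ there is no degree-two vertex to contract. A H\"older or hypercontractive bound at a vertex yields no power of $d$ at all, since every $L^r$ norm ($r\ge 2$) of an $L^2$-normalized harmonic is at least $1$. This is why the cited work does not expand in harmonics. Instead it localizes the dependence among edges to a few ``fragile'' edges. On each cell of a partition of latent configurations determined by them, it applies a noise operator acting independently on the remaining edges.

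Minor: with the normalization $G_\ell(1)=1$ implicit in your reproducing identity, Parseval reads $\sum_\ell a_\ell^2/\dim(\mathcal H_\ell)=q(1-q)$. So $a_\ell$ is of order $d^{\ell/2}$ (up to $q$- and $\ell$-dependent factors) rather than $d^{-\ell/2}$. Also $\dim(\mathcal H_\ell)^{-1}\approx \ell!/d^{\ell}$, which is at least $d^{-\ell}$, not at most.
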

Now, denote the connected component of $\alpha$ by $\alpha_1,\dots,\alpha_c$. Since these components are supported on disjoint vertex sets, then the latent vectors on different components are independent, and so
\begin{align}
\bE\prod_{e\in \alpha}Z_e = \prod_{r=1}^c \bE\prod_{e\in \alpha_r}Z_e.
\end{align}
Applying Lemma~\ref{lem:BB_bound} to each component gives
\begin{align}
\label{eq:mu_disconnected_bound}
\abs{\bE\prod_{e\in E(H)}Z_e}\leq A^{|\alpha|}\p{B\frac{|\alpha|v(\alpha)(\log d)^{3/2}}{\sqrt d}}^{\sum_{r=1}^c\lceil(v(\alpha_r)-1)/2\rceil}.
\end{align}
Note that $\sum_{r=1}^c\lceil(v(\alpha_r)-1)/2\rceil\geq \frac{v(\alpha)-c}{2}$. Because we assume that $\alpha$ has minimum degree at least two, then each connected component has at least three vertices, so $c\leq v(\alpha)/3$ and therefore
\begin{align}
\label{eq:sH_lowerbound}
\abs{\bE\prod_{e\in E(H)}Z_e}\leq A^{|\alpha|}\p{B\frac{|\alpha|v(\alpha)(\log d)^{3/2}}{\sqrt d}}^{v(\alpha)/3}.
\end{align}

We are now in a position to bound $\calE_D$. Define
\begin{align}
\beta_n\triangleq \frac{k^6}{n^3 d}(\log d)^3.
\end{align}
Suppose that there exists $\varepsilon>0$ such that for all sufficiently large $n$,
\begin{align}
\label{eq:beta_poly_small}
\beta_n=\frac{k^6}{n^3 d}(\log d)^3 \leq n^{-\varepsilon}.
\end{align}
Let $D=D(n)$ satisfy $D\leq c\log n$ for a sufficiently small constant $c>0$ (depending only on $p$ and $\varepsilon$). Fix integers $v\geq 3$ and $m\geq 1$. Consider any $\alpha$ with $v(\alpha)=v$ and
$|\alpha|=m$ and such that $H_\alpha = (V(\alpha),\alpha)$ has minimum degree at least $2$. First, Stirling's approximation gives
\begin{align}
\frac{(k)_{v}}{(n)_{v}}\leq \p{\frac{k}{n}}^v.
\end{align}
Then, \eqref{eq:sH_lowerbound} gives
\begin{align}
\label{eq:term_bound}
\p{\frac{(k)_v}{(n)_v}}^2\p{\bE\prod_{e\in\alpha}Z_e}^2\leq\p{\frac{k}{n}}^{2v}A^{2m}\p{B^2\frac{v^2m^2(\log d)^3}{d}}^{v/3}.
\end{align}
Next, we bound the number of edge-sets $\alpha$ with $v(\alpha)=v$ and $|\alpha|=m$. Choose the $v$ vertices in $\binom{n}{v}$ ways, and then choose the $m$ edges among the $\binom{v}{2}$ possible edges on those vertices in $\binom{\binom{v}{2}}{m}$ ways. Hence
\begin{align}
\abs{\{\alpha\subseteq\binom{[n]}{2}:v(\alpha)=v,\; |\alpha|=m\}}\leq \binom{n}{v}\binom{\binom{v}{2}}{m}.
\end{align}
Note that
\begin{align}
\binom{n}{v}\leq \p{\frac{en}{v}}^v,\quad\binom{\binom{v}{2}}{m}\leq \p{\frac{e\binom{v}{2}}{m}}^m\leq \p{\frac{e v^2}{2m}}^m.
\end{align}
Because we assume that $H_\alpha$ has minimum degree at least two, then $2m\geq 2v$ and hence $m\geq v$. Therefore $v^2/m\leq v$ and thus there exists an absolute constant $C_0>0$ such that
\begin{align}
\binom{\binom{v}{2}}{m}\leq (C_0 v)^m.
\end{align}
Combining these gives in our case
\begin{align}
\label{eq:count_bound}
\abs{\{\alpha\subseteq\binom{[n]}{2}:v(\alpha)=v,\; |\alpha|=m\}}\leq \p{\frac{en}{v}}^v (C_0 v)^m.
\end{align}

Let us sum \eqref{eq:term_bound} over all $\alpha$ with fixed $(v,m)$:
\begin{align}
\sum_{\substack{\alpha:\ v(\alpha)=v\\ |\alpha|=m}}\p{\frac{(k)_v}{(n)_v}}^2 \p{\bE\prod_{e\in\alpha}Z_e}^2&\leq \p{\frac{en}{v}}^v (C_0 v)^m \p{\frac{k}{n}}^{2v} A^{2m}\p{B^2\frac{v^2m^2(\log d)^3}{d}}^{v/3}\\
&\leq \p{\frac{k^2}{n}}^v\p{B^2\frac{v^2m^2(\log d)^3}{d}}^{v/3}(C_1 v)^m,
\end{align}
where $C_1=C_0A^2$ and we used the fact that $v\geq3$. Note that
\begin{align}
\p{\frac{k^2}{n}}^v\p{\frac{1}{d}}^{v/3}(\log d)^v = \p{\frac{k^6}{n^3 d}(\log d)^3}^{v/3} = \beta_n^{v/3}.
\end{align}
Therefore,
\begin{align}
\sum_{\substack{\alpha:\ v(\alpha)=v\\ |\alpha|=m}}\p{\frac{(k)_v}{(n)_v}}^2 \p{\bE\prod_{e\in\alpha}Z_e}^2&\leq \p{\frac{k^6}{n^3 d}(\log d)^3}^{v/3}
\p{B^2v^2m^2}^{v/3}(C_1 v)^m \\
&= \beta_n^{v/3}(B^2v^2m^2)^{v/3}(C_1 v)^m.
\end{align}

Finally, we sum over $m$ and $v$. Since the minimum degree is two, we must have $m\geq v$, and we
also have $m\leq D$ because we only consider $|\alpha|\leq D$. Thus
\begin{align}
\calE_D \leq \sum_{v=3}^{D}\sum_{m=v}^{D}\beta_n^{v/3}(B^2v^2m^2)^{v/3}(C_1 v)^m \leq
\sum_{v=3}^{D}\beta_n^{v/3}(B^2v^2D^2)^{v/3}\sum_{m=v}^{D}(C_1 v)^m.
\end{align}
Since $v\leq D$, we have $(C_1 v)^m\leq (C_1 D)^m$, and hence
\begin{align}
\sum_{m=v}^{D}(C_1 v)^m\leq \sum_{m=0}^{D}(C_1 D)^m \leq (D+1)(C_1 D)^D\leq (C_2 D)^D,
\end{align}
for a constant $C_2>0$ depending only on $C_1$. Therefore,
\begin{align}
\calE_D \leq (C_2 D)^D\sum_{v=3}^{D}\beta_n^{v/3}(B^2v^2D^2)^{v/3}
\leq (C_3 D^{2.5})^D\cdot \frac{\beta_n}{1-\beta_n^{1/3}}
\leq 2(C_3 D^{2.5})^D\beta_n,
\end{align}
for a constant $C_3>0$, and for all sufficiently large $n$ since $\beta_n\to 0$. Assume now \eqref{eq:beta_poly_small}, i.e., $\beta_n\leq n^{-\varepsilon}$ for all large $n$. Let $D\leq c\log n/\log\log n$ where $c>0$ is chosen sufficiently small (as specified below). Then
\begin{align}
(C_3 D^{2.5})^D\beta_n \leq \exp\p{3D\log D}\cdot n^{-\varepsilon}
\leq \exp\p{3c\frac{\log n}{\log\log n}\cdot \log\log n}\cdot n^{-\varepsilon} = n^{3c-\varepsilon}.
\end{align}
Thus for a sufficiently small absolute constant $c$ such that $c<\varepsilon/3$, we have $(C_3 D^{2.5})^D\beta_n=o(1)$, and hence $\calE_D=o(1)$. Therefore
\begin{align}
\|\calL_{n,\leq D}\|_{\calH_0}^2=1+\calE_D\leq 1+o(1),
\end{align}
which implies $\|\calL_{n,\leq D}\|_{\calH_0}=O(1)$.

\paragraph{Low-degree computational upper bound.} Finally we prove that when the dimension is below the signed-triangle threshold, the low-degree norm diverges. The proof uses only the first nonzero coefficients (degree-$1$ vanishes, and graphs with leaves vanish), and in particular it suffices to lower bound the triangle contribution. Specifically, let $\s{Tri}$ denote the set of edge sets $\alpha\subseteq\binom{[n]}{2}$ that form a triangle on three vertices, i.e. $\alpha=\{\{i,j\},\{i,\ell\},\{j,\ell\}\}$, for some distinct $i,j,\ell$.

For any $D\ge 3$, by \eqref{lem:orthonormal_basis_quiet},
\begin{align}
\label{eq:triangleLowerboundStart}
\norm{\calL_{n,\le D}}_{\calH_0}^2
\;\ge\;
1+\sum_{\alpha\in\s{Tri}}\p{\bE_{\calH_1}[\chi_\alpha(\s{G})]}^2.
\end{align}
Thus it suffices to show that the triangle sum diverges. Fix $\alpha=\{\{1,2\},\{1,3\},\{2,3\}\}$. Then $v(\alpha)=3$, so by Lemma~\ref{lem:coeff_formula_quiet},
\begin{align}
\label{eq:triangleCoeff}
\E_{\calH_1}[\chi_\alpha(\s{G})]=\frac{(k)_3}{(n)_3}\E[Z_{12}Z_{13}Z_{23}].
\end{align}
By exchangeability, the same value holds for every labeled triangle $\alpha\in\s{Tri}$. Since $|\s{Tri}|=\binom{n}{3}$, we obtain
\begin{align}
\label{eq:triangleSumExact}
\sum_{\alpha\in\s{Tri}}\p{\bE_{\calH_1}[\chi_\alpha(\s{G})]}^2 = \binom{n}{3}\p{\frac{(k)_3}{(n)_3}}^2
\p{\E[Z_{12}Z_{13}Z_{23}]}^2.
\end{align}
Therefore, the proof reduces to a quantitative lower bound on the \emph{signed triangle moment}
$\E[Z_{12}Z_{13}Z_{23}]$ for the spherical random geometric graph. It is rather classical that for every $0<q<1$, there exists a constant $c_q>0$ such that for all $n$ and $d$ we have (see, e.g., \cite[Lemma 3]{bubeck2016testing})
\begin{align}
\label{eq:DeltaLowerbound}
\E[Z_{12}Z_{13}Z_{23}]\geq  \frac{c_q}{\sqrt d}.
\end{align}
Fix any $D\ge 3$. By \eqref{eq:triangleLowerboundStart}, \eqref{eq:triangleSumExact}, and \eqref{eq:DeltaLowerbound},
\begin{align}
\norm{\calL_{n,\le D}}_{\calH_0}^2\geq 1+\binom{n}{3}\p{\frac{(k)_3}{(n)_3}}^2\frac{c_q^2}{d}.
\end{align}
Note that $\binom{n}{3}\geq n^3/6$ for all $n\ge 3$, and
\begin{align}
\frac{(k)_3}{(n)_3}=\frac{k(k-1)(k-2)}{n(n-1)(n-2)}=(1+o(1))\p{\frac{k}{n}}^3,
\end{align}
whenever $k,n\to\infty$. 
Hence
\begin{align}
\binom{n}{3}\p{\frac{(k)_3}{(n)_3}}^2 = \frac{(1+o(1))}{6}\frac{k^6}{n^3}.
\end{align}
Combining,
\begin{align}
\norm{\calL_{n,\le D}}_{\calH_0}^2 \geq 1 + \frac{(1+o(1))c_q^2}{6}\frac{k^6}{n^3 d}.
\end{align}
Thus, if $d\ll k^6/n^3$, then the right-hand side diverges to $\infty$, implying $\norm{\calL_{n,\leq D}}_{\calH_0}=\omega(1)$.

\section{Conclusion and Outlook}
This work provides a sharp characterization of the statistical and computational limits of detecting a planted high-dimensional geometric subgraph in a random graph. While our results focus on a specific geometric model, they open several directions for future research.

\begin{itemize}
    \item It would be natural to extend our analysis beyond the Euclidean sphere to other probabilistic metric spaces. For example, one could consider latent positions drawn from more general manifolds, anisotropic distributions, or non-Euclidean geometries, and investigate how curvature, intrinsic dimension, or metric structure affect detectability and computational hardness.
    \item Our model, as is standard in the literature, assumes that the latent feature vectors are completely unobserved. In many practical settings, however, one has partial access to side information about the latent geometry. For instance, one might observe a quantized or noisy version of the feature vectors, or only a subset of their coordinates. Understanding how such partial observations alter the statistical thresholds and computational barriers is an interesting open problem. In particular, it is unclear whether even coarse or noisy geometric side information can significantly reduce the sample complexity or break the computational barriers identified here.
    \item While we focus on detection, the recovery problem remain largely unexplored in the geometric setting. This includes identifying the planted vertex set, estimating latent positions, or designing adaptive querying procedures that exploit sequential access to the graph. These problems are likely to exhibit their own statistical–computational tradeoffs, distinct from those governing detection.
    \item Although we primarily study the geometry-only regime in which the null and alternative hypotheses have matching edge marginals, it is also of interest to understand settings where the edge densities differ under the null and alternative. In the dense regime with fixed probabilities, this reduces to classical planted dense subgraph detection, where geometry plays a limited role. However, when edge probabilities depend on the graph size, such as in sparse or vanishing density regimes, the interaction between density contrast and geometry becomes substantially more intricate. Analyzing these regimes, where both sparsity and latent geometry are present, poses significant challenges and may reveal new phase transitions beyond those identified in this work.
\end{itemize}

\bibliographystyle{alpha}
\bibliography{bibfile}

@article{bubeck2016testing,
  title={Testing for high-dimensional geometry in random graphs},
  author={Bubeck, S{\'e}bastien and Ding, Jian and Eldan, Ronen and R{\'a}cz, Mikl{\'o}s Z},
  journal={Random Structures \& Algorithms},
  volume={49},
  number={3},
  pages={503--532},
  year={2016},
  publisher={Wiley Online Library}
}

@article{kim2004divide,
  title={Divide and conquer martingales and the number of triangles in a random graph},
  author={Kim, Jeong Han and Vu, Van H},
  journal={Random Structures \& Algorithms},
  volume={24},
  number={2},
  pages={166--174},
  year={2004},
  publisher={Wiley Online Library}
}

@article{brennan2020phase,
  title={Phase transitions for detecting latent geometry in random graphs},
  author={Brennan, Matthew and Bresler, Guy and Nagaraj, Dheeraj},
  journal={Probability Theory and Related Fields},
  volume={178},
  number={3},
  pages={1215--1289},
  year={2020},
  publisher={Springer}
}

@article{liu2023phase,
  title        = {Phase transition in noisy high-dimensional random geometric graphs},
  author       = {Liu, Suqi and R\'acz, Mikl\'os Z.},
  journal      = {Electronic Journal of Statistics},
  volume       = {17},
  number       = {2},
  pages        = {3512--3574},
  year         = {2023},
  doi          = {10.1214/23-EJS2162},
  url          = {https://doi.org/10.1214/23-EJS2162},
}

@article{liu2023probabilistic,
  title        = {A probabilistic view of latent space graphs and phase transitions},
  author       = {Liu, Suqi and R\'acz, Mikl\'os Z.},
  journal      = {Bernoulli},
  volume       = {29},
  number       = {3},
  pages        = {2417--2441},
  year         = {2023},
  doi          = {10.3150/22-BEJ1547},
  url          = {https://doi.org/10.3150/22-BEJ1547},
}

@article{devroye2011high,
  title        = {High-dimensional random geometric graphs and their clique number},
  author       = {Devroye, Luc and Gy{\"o}rgy, Andr{\'a}s and Lugosi, G{\'a}bor and Udina, Frederic},
  journal      = {Electronic Journal of Probability},
  volume       = {16},
  pages        = {2481--2508},
  year         = {2011},
  doi          = {10.1214/EJP.v16-967},
  url          = {https://doi.org/10.1214/EJP.v16-967},
}

@article{butucea2013detection,
	Author = {Butucea, Cristina and Ingster, Yuri I},
	Journal = {Bernoulli},
	Number = {5B},
	Pages = {2652--2688},
	Publisher = {Bernoulli Society for Mathematical Statistics and Probability},
	Title = {Detection of a sparse submatrix of a high-dimensional noisy matrix},
	Volume = {19},
	Year = {2013}}

@article{Ingster1997HypothesisTesting,
  author  = {Ingster, Yuri I.},
  title   = {Some Problems of Hypothesis Testing Leading to Infinitely Divisible Distributions},
  journal = {Mathematical Methods of Statistics},
  volume  = {6},
  number  = {1},
  pages   = {47--69},
  year    = {1997}
}

@article{WuXuYu2023,
  author    = {Yihong Wu and Jiaming Xu and Sophie H. Yu},
  title     = {Testing correlation of unlabeled random graphs},
  journal   = {Annals of Applied Probability},
  volume    = {33},
  number    = {4},
  pages     = {2519--2558},
  year      = {2023},
}

@article{hopkins2017power,
  title={The power of sum-of-squares for detecting hidden structures},
  author={Hopkins, Samuel B and Kothari, Pravesh K and Potechin, Aaron and Raghavendra, Prasad and Schramm, Tselil and Steurer, David},
  journal={Proceedings of the fifty-eighth IEEE Foundations of Computer Science (FOCS)},
  year={2017},
  booktitle = {Foundations of Computer Science (FOCS), 2017 IEEE 58th Annual Symposium on},
  organization = {IEEE},
  pages = {720--731},
}

@INPROCEEDINGS{barak2016nearly,  author={B. {Barak} and S. B. {Hopkins} and J. {Kelner} and P. {Kothari} and A. {Moitra} and A. {Potechin}},  booktitle={2016 IEEE 57th Annual Symposium on Foundations of Computer Science (FOCS)},   title={A Nearly Tight Sum-of-Squares Lower Bound for the Planted Clique Problem},   year={2016},  volume={},  number={},  pages={428-437},}

@INPROCEEDINGS{hopkins2017bayesian,
  author={S. B. {Hopkins} and D. {Steurer}},
  booktitle={2017 IEEE 58th Annual Symposium on Foundations of Computer Science (FOCS)}, 
  title={Efficient Bayesian Estimation from Few Samples: Community Detection and Related Problems}, 
  year={2017},
  volume={},
  number={},
  pages={379-390},}

@article{verzelen2015community,
  title={Community detection in sparse random networks},
  author={Verzelen, Nicolas and Arias-Castro, Ery},
  journal={The Annals of Applied Probability},
  volume={25},
  number={6},
  pages={3465--3510},
  year={2015},
  publisher={Institute of Mathematical Statistics}
}

@article{arias2014community,
  title={Community detection in dense random networks},
  author={Arias-Castro, Ery and Verzelen, Nicolas},
  journal={The Annals of Statistics},
  volume={42},
  number={3},
  pages={940--969},
  year={2014},
  publisher={Institute of Mathematical Statistics}
}

@InProceedings{brennan18a, title = {Reducibility and Computational Lower Bounds for Problems with Planted Sparse Structure}, author = {Brennan, Matthew and Bresler, Guy and Huleihel, Wasim}, booktitle = {Proceedings of the 31st Conference On Learning Theory}, pages = {48--166}, year = {2018}, volume = {75}, month = {06--09 Jul}, }

@InProceedings{brennan20a, title = {Reducibility and Statistical-Computational Gaps from Secret Leakage}, author = {Brennan, Matthew and Bresler, Guy}, booktitle = {Proceedings of Thirty Third Conference on Learning Theory}, pages = {648--847}, year = {2020}, volume = {125}, month = {09--12 Jul},}

@inproceedings{elimelech2025detecting,
  title        = {Detecting Arbitrary Planted Subgraphs in Random Graphs},
  author       = {Elimelech, Dor and Huleihel, Wasim},
  booktitle    = {Proceedings of the Thirty Eighth Conference on Learning Theory},
  series       = {Proceedings of Machine Learning Research},
  volume       = {291},
  pages        = {1691--1798},
  year         = {2025},
  publisher    = {PMLR},
}

@article{dettmann2016random,
  title = {Random geometric graphs with general connection functions},
  author = {Dettmann, Carl P. and Georgiou, Orestis},
  journal = {Phys. Rev. E},
  volume = {93},
  issue = {3},
  pages = {032313},
  numpages = {14},
  year = {2016},
  month = {Mar},
  publisher = {American Physical Society},
}

@article{BokLiYu2026,
  title   = {Detection of Local Geometry in Random Graphs: Information-Theoretic and Computational Limits},
  author  = {Bok, Jinho and Li, Shuangping and Yu, Sophie H.},
  journal = {arXiv preprint arXiv:2603.24545},
  year    = {2026}
}

@article{Bet2020Detecting,
  author  = {Bet, Gianmarco and Bogerd, Kay and Castro, Rui M. and van der Hofstad, Remco},
  title   = {Detecting a botnet in a network},
  journal = {Mathematics and Statistics Learning},
  volume  = {3},
  number  = {3--4},
  pages   = {315--343},
  year    = {2020}
}

@inproceedings{parthasarathy2017quest,
  title     = {A Quest to Unravel the Metric Structure Behind Perturbed Networks},
  author    = {Parthasarathy, Srinivasan and Sivakoff, David and Tian, Minghao and Wang, Yusu},
  booktitle = {33rd International Symposium on Computational Geometry (SoCG 2017)},
  series    = {Leibniz International Proceedings in Informatics (LIPIcs)},
  volume    = {77},
  pages     = {53:1--53:15},
  year      = {2017},
  publisher = {Schloss Dagstuhl--Leibniz-Zentrum f{\"u}r Informatik},
  doi       = {10.4230/LIPIcs.SoCG.2017.53},
}

@article{wilsher2020connectivity,
  title   = {Connectivity in one-dimensional soft random geometric graphs},
  author  = {Wilsher, Michael and Dettmann, Carl P. and Ganesh, Ayalvadi},
  journal = {Physical Review E},
  volume  = {102},
  number  = {6},
  pages   = {062312},
  year    = {2020},
  doi     = {10.1103/PhysRevE.102.062312},
}

@book{Mathew07,
 author = {Mathew Penrose},
 title = {Random Geometric Graphs},
 year = {2007},
 publisher = {Oxford University Press},
 address = {Oxford},
}

@phdthesis{Hopkins18,
	Author = {Hopkins, Samuel B.},
	School = {Cornell University},
	Title = {Statistical Inference and the Sum of Squares Method},
	Year = {2018}}

@incollection{Dmitriy19,
  title        = {Notes on Computational Hardness of Hypothesis Testing: Predictions using the Low-Degree Likelihood Ratio},
  author       = {Kunisky, Dmitriy and Wein, Alexander S. and Bandeira, Afonso S.},
  booktitle    = {Mathematical Analysis, its Applications and Computation},
  series       = {Springer Proceedings in Mathematics \& Statistics},
  volume       = {385},
  pages        = {1--50},
  year         = {2022},
  publisher    = {Springer},
  note         = {{ISAAC} Congress 2019},
  url          = {https://arxiv.org/abs/1907.11636},
}

@article{janson_oleszkiewicz_rucinski_2004,
  author  = {Janson, Svante and Oleszkiewicz, Krzysztof and Ruci\'nski, Andrzej},
  title   = {Upper Tails for Subgraph Counts in Random Graphs},
  journal = {Israel Journal of Mathematics},
  volume  = {142},
  pages   = {61--92},
  year    = {2004},
  doi     = {10.1007/BF02771528}
}

@article{ganguly_hiesmayr_nam_2024,
  author  = {Ganguly, Shirshendu and Hiesmayr, Elisabeth and Nam, Kyeongsik},
  title   = {Upper Tail Behavior of the Number of Triangles in Random Graphs with Constant Average Degree},
  journal = {Combinatorica},
  volume  = {44},
  pages   = {699--740},
  year    = {2024},
  doi     = {10.1007/s00493-024-00086-3},
  publisher = {Springer}
}

@book{BoucheronLugosiMassart2013,
  title     = {Concentration Inequalities: A Nonasymptotic Theory of Independence},
  author    = {Boucheron, St{\'e}phane and Lugosi, G{\'a}bor and Massart, Pascal},
  year      = {2013},
  publisher = {Oxford University Press},
  address   = {Oxford},
  isbn      = {9780199535255}
}

@article{eldan_mikulincer_2020,
  author  = {Eldan, Ronen and Mikulincer, Danny},
  title   = {Information and Dimensionality of Anisotropic Random Geometric Graphs},
  journal = {Annals of Probability},
  volume  = {48},
  number  = {4},
  pages   = {1980--2015},
  year    = {2020},
  doi     = {10.1214/19-AOP1406}
}

@article{brennan2024threshold,
  title        = {Threshold for Detecting High Dimensional Geometry in Anisotropic Random Geometric Graphs},
  author       = {Brennan, Matthew S. and Bresler, Guy and Huang, Brice},
  journal      = {Random Structures \& Algorithms},
  volume       = {64},
  number       = {1},
  pages        = {125--137},
  year         = {2024},
}

@article{brennan2021de,
  title        = {De Finetti–Style Results for Wishart Matrices: Combinatorial Structure and Phase Transitions},
  author       = {Brennan, Matthew and Bresler, Guy and Huang, Brice},
  journal      = {arXiv preprint},
  eprint       = {2103.14011},
  year         = {2021},
  archivePrefix= {arXiv},
  primaryClass = {math.PR},
  url          = {https://arxiv.org/abs/2103.14011},
}

@ARTICLE{huleihel2021inferring,
  author={Huleihel, Wasim},
  journal={IEEE Transactions on Signal and Information Processing over Networks}, 
  title={Inferring Hidden Structures in Random Graphs}, 
  year={2022},
  volume={8},
  number={},
  pages={855-867},
  doi={10.1109/TSIPN.2022.3211208}}

@article{wein2025computational,
  title        = {Computational Complexity of Statistics: New Insights from Low‑Degree Polynomials},
  author       = {Wein, Alexander S.},
  journal      = {arXiv preprint arXiv:2506.10748},
  year         = {2025},
  month        = {Jun},
  doi          = {10.48550/arXiv.2506.10748},
  url          = {https://arxiv.org/abs/2506.10748},
}

@book{tsybakov2004introduction,
  author    = {Tsybakov, Alexandre B.},
  title     = {Introduction to Nonparametric Estimation},
  publisher = {Springer},
  series    = {Springer Series in Statistics},
  year      = {2009},
  doi       = {10.1007/b13794}
}

@article{sason2014bounds,
  title={Bounds on f-divergences and related distances},
  author={Sason, Igal},
  journal={CCIT Report},
  number={859},
  year={2014},
  volume={859},
  publisher={Dept. of Electrical Engineering, Technion—Israel Inst. of Technology Haifa~…}
}

@book{delapena1999decoupling,
  title     = {Decoupling: From Dependence to Independence},
  author    = {Victor H. de la Pena and Gin{\'e}, E.},
  series    = {Probability and its Applications},
  publisher = {Springer},
  address   = {New York},
  year      = {1999},
}

@article{de_la_pena_montgomery_smith_1995,
author = {Victor H. de la Pena and S. J. Montgomery-Smith},
title = {{Decoupling Inequalities for the Tail Probabilities of Multivariate $U$-Statistics}},
volume = {23},
journal = {The Annals of Probability},
number = {2},
publisher = {Institute of Mathematical Statistics},
pages = {806 -- 816},
year = {1995},
}

@book{polyanskiywuIT,
  author    = {Polyanskiy, Yury and Wu, Yihong},
  title     = {Information Theory: From Coding to Learning},
  publisher = {Cambridge University Press},
  year      = {2024},
}

@misc{vershynin2011,
      title={Introduction to the non-asymptotic analysis of random matrices}, 
      author={Roman Vershynin},
      year={2011},
      eprint={1011.3027},
      archivePrefix={arXiv},
      primaryClass={math.PR},
      url={https://arxiv.org/abs/1011.3027}, 
}

@book{Anderson_Guionnet_Zeitouni_2009, place={Cambridge}, series={Cambridge Studies in Advanced Mathematics}, title={An Introduction to Random Matrices}, publisher={Cambridge University Press}, author={Anderson, Greg W. and Guionnet, Alice and Zeitouni, Ofer}, year={2009}, collection={Cambridge Studies in Advanced Mathematics}}

@article{Gutev_2025,
title = {On real-valued functions of Lipschitz type},
journal = {Expositiones Mathematicae},
volume = {43},
number = {5},
pages = {125701},
year = {2025},
author = {Valentin Gutev},
}

@inproceedings{BangachevBresler2024FourierRGG,
  title        = {On The Fourier Coefficients of High-Dimensional Random Geometric Graphs},
  author       = {Kiril Bangachev and Guy Bresler},
  booktitle    = {Proceedings of the 56th Annual ACM Symposium on Theory of Computing (STOC 2024)},
  year         = {2024},
  pages        = {549--560},
  address      = {Vancouver, BC, Canada},
  publisher    = {ACM},
  doi          = {10.1145/3596830.3602707}, 
  note         = {Also available as arXiv:2402.12589},
}

@book{vershynin2018high,
  title     = {High-Dimensional Probability: An Introduction with Applications in Data Science},
  author    = {Vershynin, Roman},
  publisher = {Cambridge University Press},
  year      = {2018},
  address   = {Cambridge},
  isbn      = {9781108415194},
}

\appendix

\section{Proof of lemmata}\label{app:proofsLemmas}

\begin{proof}[Proof of Lemma~\ref{lem:CS-Dm}]
Let $\mu$ be a measure such that $\P,\R,\Q\ll\mu$, and write $p \triangleq \frac{\mathrm{d}\P}{\mathrm{d}\mu}$, $r \triangleq \frac{\mathrm{d}\R}{\mathrm{d}\mu}$, and $q \triangleq \frac{\mathrm{d}\Q}{\mathrm{d}\mu}$. Then, $\frac{\mathrm{d}\P}{\mathrm{d}\Q} = \frac{p}{q}$, $\frac{\mathrm{d}\P}{\mathrm{d}\R} = \frac{p}{r}$, and $\frac{\mathrm{d}\R}{\mathrm{d}\Q} = \frac{r}{q}$. Hence, by definition
\begin{align}
1 + D_m(\P\|\Q)
&= \int \left(\frac{\mathrm{d}\P}{\mathrm{d}\Q}\right)^m \mathrm{d}\Q\\
 &  = \int \left(\frac{p}{q}\right)^m q \mathrm{d}\mu\\
 &  = \int p^m q^{1-m} \mathrm{d}\mu \\
&= \int p^m r^{1-m} \left(\frac{r}{q}\right)^{m-1} \mathrm{d}\mu\\
  & = \int \left(\frac{\mathrm{d}\P}{\mathrm{d}\R}\right)^m
            \left(\frac{\mathrm{d}\R}{\mathrm{d}\Q}\right)^{m-1} \mathrm{d}\R.
\end{align}
Applying Cauchy--Schwarz we get
\begin{align}
1 + D_m(P\|Q)
&\le
\sqrt{\int \left(\frac{\mathrm{d}\P}{\mathrm{d}\R}\right)^{2m} \mathrm{d}\R }
\sqrt{ \int \left(\frac{\mathrm{d}\R}{\mathrm{d}\Q}\right)^{2m-2} \mathrm{d}\R }.
\end{align}
Now, we note that
\begin{align}
\int \left(\frac{\mathrm{d}\P}{\mathrm{d}\R}\right)^{2m} \mathrm{d}\R
= 1 + D_{2m}(\P\|\R)
\le 1 + D_{2m}(\P\|\Q),
\end{align}
and
\begin{align}
\int \left(\frac{\mathrm{d}\R}{\mathrm{d}\Q}\right)^{2m-2} \mathrm{d}\R
= \int \left(\frac{\mathrm{d}\R}{\mathrm{d}\Q}\right)^{2m-1} \mathrm{d}\Q
= 1 + D_{2m-1}(\R\|\Q).
\end{align}
Thus
\begin{align}
1 + D_m(\P\|\Q)
\leq
\sqrt{1 + D_{2m}(\P\|\Q)}
\sqrt{1 + D_{2m-1}(\R\|\Q)},
\end{align}
as claimed.
\end{proof}

\section{Densest subgraph test}\label{app:apppnotq}

In this appendix, we consider the case where the edge probability inside the planted set is taken to be $p\neq q$, and propose an optimal detection algorithm. Specifically, consider the following statistic:
\begin{align}
    \s{T}_{\s{dense}}(\s{G}_n)&\triangleq\max_{\calS\subset[n]:|\calS|=k}\sum_{i,j\in\calS:i<j}\mathbf{A}_{i,j}.\label{eqn:scandense}
\end{align}
This statistic scans over all $k$-vertex subgraphs and selects the densest one in terms of edge count. 
Define the following detection algorithm:
\begin{align}
    \calA_{\s{dense}}(\s{G}_n) &\triangleq\Ind\ppp{\s{T}_{\s{dense}}(\s{G}_n)\geq \tau_{\s{dense}}},\label{eqn:denseTest}
\end{align}
where $\tau_{\s{dense}}\in\mathbb{R}_+$ is specified below. We have the following result.
\begin{theorem}[Upper bounds]\label{thm:upperBound2}
    Consider the detection problem in \eqref{eqn:model}, and the detection algorithm in  \eqref{eqn:denseTest}, with $\tau_{\s{dense}} = \binom{k}{2}\frac{p+q}{2}$. If $\frac{(p-q)^2}{q(1-q)} = \omega\p{\frac{\log n}{k}}$, then $\s{R}(\calA_{\s{dense}})\to0$, as $k,n\to\infty$.
\end{theorem}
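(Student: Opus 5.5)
We analyze the two error probabilities of $\calA_{\s{dense}}$ separately, taking $p>q$ (the case $p<q$ is symmetric, with a min-scan and the inequality reversed). Write $N\triangleq\binom{k}{2}$ and $\Delta\triangleq p-q$, so that $\tau_{\s{dense}}=Nq+N\Delta/2$. For the $\s{Type}$-$\s{II}$ error, we lower bound the scan statistic by its value at the planted set: $\s{T}_{\s{dense}}(\s{G}_n)\geq\sum_{i<j\in\calK}\bs{A}_{ij}$. Under $\calH_1$ the planted edges are marginally $\s{Bern}(p)$, with mean $Np$. If the planted edges are independent $\s{Bern}(p)$, a Chernoff or Chebyshev bound gives $\P(\sum\leq Np-N\Delta/2)\leq \exp(-cN\Delta^2/p)$. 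This tends to zero because $N\Delta^2\gtrsim k^2\cdot\frac{\log n}{k}q(1-q)\to\infty$.

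If instead the planted edges come from a geometric graph with marginal $p$, we use Chebyshev with a variance bound. The edge indicators $\sigma_{ij}$ are pairwise uncorrelated when they share at most one vertex: sharing one vertex gives zero covariance by rotational symmetry, exactly as in Lemma~\ref{lem:leafvanish}. Hence the variance is $Np(1-p)$, and the error is at most $4p(1-p)/(N\Delta^2)\to0$.

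For the $\s{Type}$-$\s{I}$ error, we apply a union bound over the $\binom{n}{k}\leq\exp(k\log(en/k))$ candidate sets. For a fixed $\calS$, the count $\sum_{i<j\in\calS}\bs{A}_{ij}\sim\s{Binomial}(N,q)$ under $\calH_0$. Bernstein's inequality then gives
\begin{align}
\P_{\calH_0}\p{\s{Binomial}(N,q)\geq Nq+N\Delta/2}\leq\exp\p{-\frac{N^2\Delta^2/4}{2Nq(1-q)+N\Delta/3}}.
\end{align}
When $\Delta\lesssim q(1-q)$, the exponent is of order $N\Delta^2/(q(1-q))\asymp k^2\Delta^2/(q(1-q))$. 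The hypothesis $\frac{\Delta^2}{q(1-q)}=\omega(\frac{\log n}{k})$ makes this $\omega(k\log n)$, which dominates the union-bound entropy $k\log(en/k)\leq 2k\log n$. So the $\s{Type}$-$\s{I}$ error tends to zero.

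The main point to check is the regime where $\Delta$ is large compared with $q(1-q)$, for instance when $q$ is very small. There the linear term $N\Delta/3$ dominates the Bernstein denominator, and the exponent becomes only of order $N\Delta\asymp k^2\Delta$. To handle this, I would instead use the KL Chernoff bound $\exp(-N\,d_{\s{KL}}(q+\Delta/2\|q))$. I would then note that for constant $p,q$ this case is not genuinely different: $\Delta$ and $q(1-q)$ are constants, so the exponent is $\Theta(k^2)\gg k\log n$ as soon as $k\gg\log n$, which the hypothesis enforces. Combining the two error bounds gives $\s{R}(\calA_{\s{dense}})\to0$.
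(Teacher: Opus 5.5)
Your proposal is correct and follows the paper's proof. The paper bounds the $\s{Type}$-$\s{I}$ error by a union bound over the $\binom{n}{k}$ candidate sets combined with Bernstein's inequality, and bounds the $\s{Type}$-$\s{II}$ error by Chebyshev at the planted set, using the exact variance $\binom{k}{2}p(1-p)$ from pairwise zero covariance (rotational invariance for edges sharing a vertex), just as you do. Your caveat about the linear term $N\Delta/3$ in the Bernstein denominator is a point the paper silently absorbs into constants, which is legitimate only because $p,q$ are fixed, as you observe.
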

When $p\neq q$ is fixed, Theorem~\ref{thm:upperBound2} shows that strong detection is achievable whenever $k=\omega(\log n)$.

\begin{proof}[Proof of Theorem~\ref{thm:upperBound2}]
    Recall the densest subgraph test defined in \eqref{eqn:scandense} and~\eqref{eqn:denseTest}, and let us analyze its $\s{Type}$-$\s{I}$ and $\s{Type}$-$\s{II}$ error probabilities. Set $\tau_{\s{dense}} = \binom{k}{2}\frac{p+q}{2}$. We begin with the analysis of the $\s{Type}$-$\s{I}$ error probability. For any
subset $\calS\subset[n]$ with $|\calS|=k$, define $e(\calS)\triangleq\sum_{i,j\in\calS:i<j}\bs{A}_{i,j}$. 
Under the null hypothesis $\calH_0$, we have $e(\calS)\sim \s{Binomial}\left(\binom{k}{2},q\right)$. Consequently, by the union bound together with Bernstein's inequality, we obtain
\begin{align}
    \pr_{\calH_0}\pp{\calA_{\s{dense}}(\s{G}_n) =1} &\leq \binom{n}{k}\cdot\pr\pp{\s{Binomial}\p{\binom{k}{2},q}\geq \tau_{\s{dense}}}\\
    &\leq \binom{n}{k}\exp\p{-\frac{\binom{k}{2}^2(p-q)^2/4}{2\binom{k}{2}q+\binom{k}{2}(p-q)/3}}\\
    &\leq \exp\pp{k\log n-\Omega\p{k^2\frac{(p-q)^2}{q(1-q)}}},\label{eqn:denseTypeI}
\end{align}
where the last inequality follows from Stirling's approximation. We now turn to the analysis of the $\s{Type}$-$\s{II}$ error probability. Denoting by $\calK$ the underlying planted vertex set, an application of Chebyshev’s inequality yields
\begin{align}
  \pr_{\calH_1}\pp{\calA_{\s{dense}}(\s{G}_n) =0}&\leq \pr_{\calH_1}\pp{e(\calK)<\tau_{\s{dense}}}\\
  &\leq \frac{\s{Var}_{\calH_1}(e(\calK))}{\binom{k}{2}^2\frac{(p-q)^2}{4}}.\label{eqn:ChebyshevMiddle}
\end{align}
Now,
\begin{align}
   \s{Var}_{\calH_1}(e(\calK)) &= \s{Var}_{\calH_1}\p{\sum_{i,j\in\calK:i<j}\mathbf{A}_{i,j}} \\
   & = \sum_{i,j\in\calK:i<j}\s{Var}_{\calH_1}(\mathbf{A}_{i,j})+\sum_{(i,j)\neq (i',j')}\s{cov}_{\calH_1}(\mathbf{A}_{i,j},\mathbf{A}_{i',j'}).
\end{align}
By definition, for any $i,j\in\calK$ we have $\s{Var}_{\calH_1}(\bs{A}_{i,j})=p(1-p)$. We claim that for any
$i,i',j,j'\in\calK$ with $i<j$ and $i'<j'$, such that $(i,j)\neq(i',j')$, it holds that
\begin{align}
\s{Cov}_{\calH_1}(\bs{A}_{i,j},\bs{A}_{i',j'})=0.
\end{align}
There are two cases to consider. First, if the indices $i,i',j,j'$ are all distinct, then $\bs{A}_{i,j}$ and $\bs{A}_{i',j'}$ are independent, and hence their covariance is zero. The second, less immediate case is when the two edges share a vertex, for instance when $i=i'$ and $j\neq j'$. In this case, we have
\begin{align}
    \s{cov}_{\calH_1}(\mathbf{A}_{i,j},\mathbf{A}_{i,j'}) &= \bE_{\calH_1}\pp{\mathbf{A}_{i,j}\cdot\mathbf{A}_{i,j'}}-p^2\\
     &= \P_{\calH_1}\pp{\left\langle \bs{x}_i,\bs{x}_j \right\rangle\geq t_{q,d},\left\langle \bs{x}_i,\bs{x}_{j'} \right\rangle\geq t_{q,d}}-p^2.\label{eqn:covmiddle}
\end{align}
By rotation invariance on the sphere, we can fix $\bs{x}_i = \bs{e}_1$, where $\bs{e}_1$ is the $d$-dimensional unit vector $\bs{e}_1 = (1,\ldots,0)$. Then
\begin{align}
    \P_{\calH_1}\pp{\left\langle \bs{e}_1,\bs{x}_j \right\rangle\geq t_{q,d},\left\langle \bs{e}_1,\bs{x}_{j'} \right\rangle\geq t_{q,d}} &= \P_{\calH_1}\pp{\left\langle \bs{e}_1,\bs{x}_j \right\rangle\geq t_{q,d}}\cdot\P_{\calH_1}\pp{\left\langle \bs{e}_1,\bs{x}_{j'} \right\rangle\geq t_{q,d}}\\
    & = p^2.
\end{align}
Combined with \eqref{eqn:covmiddle} we obtain that $ \s{cov}_{\calH_1}(\mathbf{A}_{i,j},\mathbf{A}_{i,j'})=0$. Therefore
\begin{align}
    \s{Var}_{\calH_1}(e(\calK)) = \binom{k}{2}p(1-p),
\end{align}
and by \eqref{eqn:ChebyshevMiddle}, we have
\begin{align}
  \pr_{\calH_1}\pp{\calA_{\s{dense}}(\s{G}_n) =0}&\leq \frac{\binom{k}{2}p(1-p)}{\binom{k}{2}^2\frac{(p-q)^2}{4}}\\
  & = \s{C}\frac{p(1-p)}{k^2(p-q)^2},\label{eqn:denseTypeII}
\end{align}
for some constant $\s{C}>0$. Combining \eqref{eqn:denseTypeI} and \eqref{eqn:denseTypeII}, we conclude that
$\s{R}(\calA_{\s{dense}})\to 0$ provided that
\begin{align}
\frac{k^2(p-q)^2}{p(1-p)}=\omega(1)
\quad\s{and}\quad
\frac{(p-q)^2}{q(1-q)}=\omega\left(\frac{\log n}{k}\right).
\end{align}
Since the latter condition implies the former, it follows that $\frac{(p-q)^2}{q(1-q)}=\omega\left(\frac{\log n}{k}\right)$ is the governing requirement.
\end{proof}

\begin{comment}
    \subsection{Proof of Lemma~\ref{lem:rev_TV_Chi}}

Recall that $P,Q$ are probability measures on a finite set $\calX$ with $P\ll Q$ and $Q_{\s{min}}\triangleq\min_{x\in\calX} Q(x)>0$. Then
\begin{align}
\chi^2(P\|Q)= \sum_{x\in\calX} \frac{(P(x)-Q(x))^2}{Q(x)}\leq \frac{1}{Q_{\min}}\sum_{x\in\calX} (P(x)-Q(x))^2,
\end{align}
so
\begin{align}
\sum_{x\in\calX} (P(x)-Q(x))^2 \geq Q_{\s{min}}\,\chi^2(P\|Q).
\end{align}
Using $\norm{v}_1\geq \norm{v}_2$ and $d_{\s{TV}}(P,Q)=\frac{1}{2}\sum_{x\in\calX} |P(x)-Q(x)|$,
\begin{align}
d_{\s{TV}}(P,Q)\geq \frac12 \sqrt{\sum_{x\in\calX} (P(x)-Q(x))^2}
\geq \frac{1}{2} \sqrt{Q_{\s{min}}\chi^2(P\|Q)}.
\end{align}
Hence
\begin{align}
d_{\s{TV}}(P,Q)\geq\frac{1}{2}\sqrt{Q_{\s{min}}\chi^2(P\|Q)}.
\end{align}
\end{comment}

\end{document}